\DeclareMathAlphabet{\mathpzc}{OT1}{pzc}{m}{it}
\definecolor{dgreyblue}{rgb}{0.26,0.3,0.46}             
\newcommand{\bee}{{\mathbf{e}}}
\newcommand{\cA}{\mathcal{A}}
\newcommand{\cB}{\mathcal{B}}
\newcommand{\cC}{\mathcal{C}}
\newcommand{\cE}{\mathcal{E}}
\newcommand{\cV}{\mathcal{V}}
\newcommand{\cL}{\mathcal{L}}
\renewcommand{\text}[1]{\hbox{\rm \ #1\ \/}}
\newcommand{\be}[1]{\begin{equation}\label{#1}}
\newcommand{\ee}{\end{equation}}
\newcommand{\beqn}{\begin{eqnarray*}}
\newcommand{\eeqn}{\end{eqnarray*}}
\newcommand{\beq}{\begin{eqnarray}}
\newcommand{\eeq}{\end{eqnarray}}
\newcommand{\ben}{\begin{enumerate}}
\newcommand{\een}{\end{enumerate}}
\newcommand{\bi}{\begin{itemize}}
\newcommand{\ei}{\end{itemize}}
\newcommand{\eps}{\varepsilon}
\newcommand{\IE}{{\em i.e.}\xspace}
\newcommand{\tx}{^{\rm th}}
\newtheorem{fact}{Fact}
\renewenvironment{proof}{{\noindent\bf Proof.\ }}{\hfill{\Pisymbol{pzd}{113}}\vspace{0.1in}}
\newenvironment{proof-sketch}{{\noindent\bf Sketch of Proof.\ }}{\hfill{\Pisymbol{pzd}{113}}\vspace{0.1in}}
\newcommand{\NP}{\mathsf{NP}}
\newcommand{\LP}{\mathsf{LP}}
\newcommand{\cF}{\mathcal{F}}
\newcommand{\cP}{\mathcal{P}}
\newcommand{\TB}{\vspace{-0.1ex}}\newcommand{\TiE}{\setlength{\itemsep}{-1ex}}
\newcommand{\comment}[1]{}
\newcommand{\EG}{{\it e.g.}\xspace}
\newcommand{\FI}[1]{Fig.~\ref{#1}\xspace}
\newcommand{\h}{{\mathfrak{h}}}
\newcommand{\cQ}{\mathcal{Q}}
\newcommand{\dist}{\mathrm{dist}}
\newcommand{\vol}{\mathsf{vol}}
\newcommand{\have}{\pmb{\overline{h}}}
\newcommand{\cut}{\mathrm{cut}}
\newcommand{\g}{\mathpzc{g}}
\newcommand{\OPT}{{ \mathsf{OPT} } }
\newcommand{\mse}{{\sc Mse}}
\newcommand{\uumv}{{\sc Uumv}}
\newcommand{\hssc}{{\sc Hssc}}
\newcommand{\ehssc}{{\sc Ehssc}}
\newcommand{\sse}{{\sc Sse}}
\definecolor{columbiablue}{rgb}{0.61, 0.87, 1.0}
\journalname{Algorithmica}
\begin{document}

\title{Effect of Gromov-hyperbolicity Parameter on Cuts and Expansions in Graphs and Some Algorithmic Implications}

\titlerunning{Effect of Gromov-hyperbolicity on Cuts and Expansions}        

\author{
Bhaskar DasGupta
\and
Marek Karpinski
\and
Nasim Mobasheri
\and
Farzane Yahyanejad
}

\institute{Bhaskar DasGupta \at
           Department of Computer Science, University of Illinois at Chicago, Chicago, IL 60607, USA \\
           Tel.: +312-255-1319\\
           Fax: +312-413-0024\\
           \email{bdasgup@uic.edu}           
           \and
           Marek Karpinski \at
           Department of Computer Science, University of Bonn, Bonn 53113, Germany 
           \email{marek@cs.uni-bonn.de} 
           \and
           Nasim Mobasheri \at
           Department of Computer Science, University of Illinois at Chicago, Chicago, IL 60607, USA \\
           \email{nmobas2@uic.edu}
           \and
           Farzaneh Yahyanejad
           Department of Computer Science, University of Illinois at Chicago, Chicago, IL 60607, USA \\
           \email{fyahya2@uic.edu}
}

\date{Received: date / Accepted: date}

\maketitle

\begin{abstract}
$\delta$-hyperbolic graphs, originally conceived by Gromov in $1987$,
occur often in many network applications; for fixed $\delta$, such graphs are \emph{simply called hyperbolic graphs} and 
include non-trivial interesting classes of ``non-expander'' graphs.
The main motivation of this paper is to investigate the effect of the hyperbolicity measure $\delta$ on expansion and cut-size
bounds on graphs (here $\delta$ need \emph{not} be a constant),
and the asymptotic ranges of $\delta$ for which these results may provide \emph{improved} approximation algorithms 
for related combinatorial problems.
To this effect, we provide
\emph{constructive} bounds on node expansions for $\delta$-hyperbolic graphs
as a function of $\delta$, and show that many witnesses (subsets of nodes) for such expansions can 
be computed efficiently even if the witnesses are required to be nested or sufficiently distinct from each other.
To the best of our knowledge, these are the first such constructive bounds proven.
We also show how to find a large family of $s$-$t$ cuts with {\em relatively small} number of cut-edges
when $s$ and $t$ are sufficiently far apart.
We then provide 
algorithmic consequences of these bounds and their related proof techniques for two problems 
for $\delta$-hyperbolic graphs (\emph{where $\delta$ is
a function $f$ of the number of nodes}, the exact nature of growth of $f$ being dependent on the particular problem considered). 

\keywords{Gromov hyperbolicity \and Node expansion \and Minimum cuts \and Approximation algorithms}
\PACS{02.10.Ox \and 89.20.Ff \and 02.40.Pc}
\subclass{MSC 68Q25 \and MSC 68W25 \and MSC 68W40 \and MSC 05C85}
\end{abstract}


\section{Introduction}

Useful insights for many complex systems such as the world-wide web,
social networks, metabolic networks, and protein-protein interaction networks
can often be obtained by representing them as \emph{parameterized} networks and 
analyzing them using graph-theoretic tools.
Some standard measures used for such investigations include 
degree based measures (\EG, maximum/minimum/average degree or degree distribution)
connectivity based measures (\EG, clustering coefficient, claw-free property, largest cliques or densest sub-graphs), 
and 
geodesic based measures (\EG,  diameter or betweenness centrality).
It is a standard practice in theoretical computer science to investigate and categorize the computational 
complexities of combinatorial problems in terms of ranges of these parameters. For example:
\begin{enumerate}[label=$\blacktriangleright$]
\item
Bounded-degree graphs are known to admit improved approximation as opposed to their arbitrary-degree counter-parts
for many graph-theoretic problems.
\item
Claw-free graphs are known to admit improved approximation as opposed to general graphs
for graph-theoretic problems such as the maximum independent set problem. 
\end{enumerate}
In this paper we consider a \emph{topological} measure called \emph{Gromov-hyperbolicity} (or, simply hyperbolicity for short)
for undirected unweighted graphs that has recently received significant attention 
from researchers in both the graph theory and the 
network science community. This hyperbolicity measure $\delta$ was 
originally conceived in a somewhat different group-theoretic context by Gromov~\cite{G87}.
The measure was first defined for \emph{infinite} continuous metric space 
via properties of geodesics~\cite{book}, but was later also adopted for \emph{finite} graphs.
Lately, there have been a surge of theoretical and empirical works measuring and analyzing the 
hyperbolicity of networks, and many \emph{real-world} networks,such as the
following, have been reported (either theoretically or empirically) to be $\delta$-hyperbolic for $\delta=O(1)$: 
\begin{enumerate}[label=$\blacktriangleright$]
\item
``preferential attachment'' scale-free networks with appropriate scaling (normalization)~\cite{JLB07}, 
\item
networks of high power transceivers in a wireless sensor network~\cite{ALJKZ08}, 
\item
communication networks at the IP layer and at other levels~\cite{PKBV10}, and 
\item
an assorted set of biological and social networks~\cite{ADM}.
\end{enumerate}
%
%
Moreover, extreme congestion at a small number of nodes in a large traffic network 
that uses the shortest-path routing was shown in~\cite{JLBB11} to be caused by a small value of $\delta$ of the network.
On the other hand, theoretical investigations have revealed that \emph{expanders}, \emph{vertex-transitive} graphs and
(for certain parameter ranges) classical \emph{Erd\"{o}s-R\'{e}nyi} random graphs are $\delta$-hyperbolic
only for $\delta=\omega(1)$~\cite{a1,a2,xxx,yyy,NST15}.

A major motivation for this paper is a question of the following type\footnote{This is in contrast
to many research works in this area where one studies the properties of $\delta$-hyperbolic graphs assuming 
$\delta$ to be fixed.}:

\begin{quote}
``{\em
What is the effect of the hyperbolicity measure $\delta$ on expansion and cut-size
bounds on graphs (where $\delta$ is a free parameter and \emph{not} a necessarily a constant)}? 
{\em For what asymptotic ranges of values of $\delta$ can these bounds be used to obtain improved approximation algorithms 
for related combinatorial problems}?''
\end{quote}

Since arbitrarily large $\delta$ leads to the class of \emph{all possible} graphs, 
investigations of this type may eventually provide insights or characterizations of hard graph instances for combinatorial
problems via different asymptotic ranges of values of $\delta$.
To this effect, 
in this paper we further investigate the non-expander properties of hyperbolic networks beyond what 
is shown in~\cite{a1,a2}
and provide constructive proofs of {\em witnesses} (subsets of nodes) satisfying certain expansion or cut-size bounds.
We also provide some algorithmic consequences of these bounds and their related proof techniques for two problems 
related to cuts and paths for graphs.
\emph{A more detailed list of our results is deferred until Section~\ref{sec-over} after the basic definitions and notations}.

\subsection{Basic Notations and Assumptions}
\label{sec-not-def}

We use the following notations and terminologies throughout the paper.
We will simply write $\log$ to refer to logarithm base $2$. 
Our basic input is an ordered triple $\langle G,d,\delta\rangle$ denoting the given {\em connected undirected unweighted} graph 
$G=(V,E)$ 
of hyperbolicity $\delta$ 
in which every node has a degree of 
at most $d>2$.
We will always use the variable $m$ and $n$ to denote 
the number of edges and the number of nodes, respectively, of the given input graph. 
\emph{Throughout the paper, we assume that $n$ is always sufficiently large}.
For notational convenience, \emph{we will ignore floors and ceilings of fractional values} in our theorems and proofs, 
\EG, we will simply write 
$\nicefrac{n}{3}$ instead of 
$\left\lfloor\nicefrac{n}{3}\right\rfloor$ 
or 
$\left\lceil\nicefrac{n}{3}\right\rceil$, since 
this will have \emph{no} effect on the \emph{asymptotic} nature of the bounds.
\emph{We will also make no serious effort to optimize the constants that appear in the bounds in our theorems and proofs}.
In addition, the following notations will be used throughout the paper:
\begin{enumerate}[label=$\blacktriangleright$]
\item
$|\cP|$ is the {\em length} (number of edges) of a path $\cP$ of a graph.
\item
$\overline{u,v}$ is a \emph{shortest path} between nodes $u$ and $v$. In our proofs, any shortest path
can be selected but, once selected, the {\em same} shortest path {\em must} be used in the remaining part of the analysis.
\item
$\dist_H(u,v)$ is the distance (number of edges in a shortest path) 
between nodes $u$ and $v$ in a graph $H$ (and is $\infty$ if there is no path between $u$ and $v$ in $H$). 
\item
$D(H)=\max\limits_{u,v\in V'}\left\{ \dist_H(u,v) \right\}$ is 
the \emph{diameter} of the graph $H=(V',E')$.
Thus, in particular, for our input 
$\langle G,d,\delta\rangle$
there exists two nodes $p$ and $q$ such that 
$\dist_G(p,q)=D(G)\geq\log_d n$. 
\item
For a subset $S$ of nodes of the graph $H=(V',E')$, the {\em boundary} $\partial_H(S)$ of $S$ is the set of nodes 
in $V' \setminus S$ that are connected to {\em at least} one node in $S$, \IE, 
\[
\partial_H(S)=\left\{ u\in V'\setminus S \, | \, v\in S \, \& \, \{u,v\}\in E'\right\}
\]
Similarly, 
for any subset $S$ of nodes, $\cut_H(S)$ denotes the set of edges of $H$ that have \emph{exactly} one end-point in $S$.

The readers should note that our definition of 
$\partial_H(S)$ 
involved the set of the nodes, and \textbf{not the set of edges}, 
that are connected to $S$.
\item
$\cB_H(u,r)$ 
is the set of nodes contained 
in a {\em ball} of radius $r$ centered at node $u$ 
in a graph $H$, \IE, 
$\cB_H(u,r)=\left\{ v \,|\, \dist_H(u,v)\leq r \right\}$ 
\end{enumerate}

\subsection{Formal Definitions of Gromov-hyperbolicity}

Commonly the hyperbolicity measure is defined via geodesic triangles in the following manner.

\begin{definition}[$\delta$-hyperbolic graphs via geodesic triangles]
\label{def-hyperbolic-1}
A graph $G$ has a (Gromov) hyperbolicity of $\delta=\delta(G)$, or simply is $\delta$-hyperbolic, if and only if for every three ordered 
triple of shortest paths $(\overline{u,v},\overline{u,w},\overline{v,w})$, $\overline{u,v}$ lies in a $\delta$-neighborhood
of $\overline{u,w}\,\cup\,\overline{v,w}$, \emph{\IE}, for every node $x$ on $\overline{u,v}$, there exists a node $y$ on 
$\overline{u,w}$ or $\overline{v,w}$ such that $\dist_G(x,y)\leq\delta$.
A $\delta$-hyperbolic graph is simply called a hyperbolic graph if $\delta$ is a constant.
\end{definition}

\begin{definition}[the class of hyperbolic graphs]
Let $\mathcal{G}$ be an infinite collection of graphs. 
Then, $\mathcal{G}$ belongs to the class of hyperbolic graphs if and only if there is an absolute constant $\delta\geq 0$ such
that any graph $G\in\mathcal{G}$ is $\delta$-hyperbolic.
If $\mathcal{G}$ is a class of hyperbolic graphs then any graph $G\in\mathcal{G}$ is simply referred to as a hyperbolic graph.
\end{definition}

There is another alternate but {\em equivalent} (``up to a constant multiplicative factor'') 
way of defining $\delta$-hyperbolic graphs via the following $4$-node conditions.

\begin{definition}[equivalent definition of $\delta$-hyperbolic graphs via $4$-node conditions]
\label{def-hyperbolic-2}
For a set of four nodes $u_1,u_2,u_3,u_4$, let 
$\pi=\left(\pi_1,\pi_2,\pi_3,\pi_4\right)$ be a permutation of $\{1,2,3,4\}$ denoting 
a rearrangement of the indices of nodes such that 
\begin{multline*}
S_{u_1,u_2,u_3,u_4}=\dist_{u_{\pi_1},u_{\pi_2}}+\dist_{u_{\pi_3},u_{\pi_4}} 
\\
\leq M_{u_1,u_2,u_3,u_4}=\dist_{u_{\pi_1},u_{\pi_3}}+\dist_{u_{\pi_2},u_{\pi_4}} 
\\
\leq L_{u_1,u_2,u_3,u_4}=\dist_{u_{\pi_1},u_{\pi_4}}+\dist_{u_{\pi_2},u_{\pi_3}}
\end{multline*}
and let 
$\rho_{u_1,u_2,u_3,u_4} = \dfrac{L_{u_1,u_2,u_3,u_4}-M_{u_1,u_2,u_3,u_4}}{2}$.
Then, $G$ is $\delta$-hyperbolic if and only if
\[
\delta = \delta(G) = \max\limits_{u_1,u_2,u_3,u_4\in V} \big\{ \,\rho_{u_1,u_2,u_3,u_4}\big\}.
\]
\end{definition}

It is well-known (\EG, see~\cite{book}) that 
Definition~\ref{def-hyperbolic-1} and Definition~\ref{def-hyperbolic-2}
of $\delta$-hyperbolicity are equivalent in the sense that they are related
by a constant multiplicative factor, \IE, there is an absolute constant $c>0$ such that if a graph $G$ is $\delta_1$-hyperbolic and 
$\delta_2$-hyperbolic via Definition~\ref{def-hyperbolic-1} and Definition~\ref{def-hyperbolic-2}, respectively, then 
$\delta_1/c \leq \delta_2 \leq c \,\delta_1$.
Since constant factors are not optimized in our proofs, we will use {\em either} of the two definitions of
hyperbolicity in the sequel as deemed more convenient.
Using Definition~\ref{def-hyperbolic-2} and casting the resulting computation as a $(\max,\min)$ matrix multiplication 
problem allows one to compute $\delta(G)$ and a $2$-approximation of $\delta(G)$ in 
$O\left(n^{3.69}\right)$ and in $O\left(n^{2.69}\right)$ time, respectively~\cite{ipl15}.
Several routing-related problems or the diameter estimation problem become easier if
the network is hyperbolic~\cite{CE07,CDEHV08,CDEHVX12,GL05}.
For a discussion of
properly scaled $4$-node conditions that yield a variety of (non necessarily hyperbolic) geometries, see~\cite{JLA11}.

\subsubsection{Remarks on Topological Characteristics of Hyperbolicity Measure $\delta$}
\label{sec-topo}

Even though the hyperbolicity measure $\delta(G)$ is often referred to as a ``tree-like'' measure, 
$\delta(G)$ enjoys many non-trivial topological characteristics. 
For example:
\begin{description}
\item[$\star$]
{\bf The ``$\delta(G)=o(n)$'' property is not hereditary (and thus also not monotone)}.
For example, see \FI{fig-nontriv}, which 
also shows that 
removing a single node 
or edge can increase/decrease the value of $\delta$ \emph{very sharply}. 
\smallskip
\item[$\star$]
{\bf ``Close to hyperbolic topology'' is not necessarily the same as ``close to tree topology''}.
For example,
\emph{all} bounded-diameter graphs have $\delta=O(1)$ irrespective of whether they are tree or not
(however, graphs with $\delta=O(1)$ need \emph{not} be of bounded diameter). In general, even for small $\delta$, the metric
induced by a $\delta$-hyperbolic graph may be quite far from a tree metric~\cite{CDEHV08}.
\smallskip
\item[$\star$]
{\bf Hyperbolicity is not necessarily the same as tree-width}.
A similar popular measure used in both the bioinformatics and theoretical computer science literature 
is the treewidth measure first introduced by Robertson and Seymour~\cite{RS83}.
Many $\NP$-hard problems on general networks in fact allow polynomial-time solutions if restricted to classes of networks 
with bounded treewidth~\cite{B88}.
However, as observed in~\cite{MSV11} and elsewhere, the two measures are quite different in nature and \emph{not} correlated.
\end{description}

\begin{figure}[htbp]
\centerline{\includegraphics{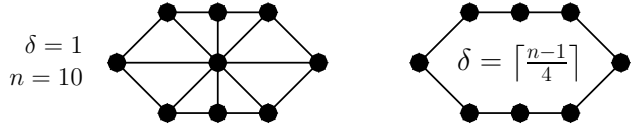}}
\caption{\label{fig-nontriv}The ``$\delta(G)=o(n)$'' property is not hereditary.}
\end{figure}

\noindent
Examples of hyperbolic graph classes 
(\IE, when $\delta$ is a constant)
include {\em trees}, {\em chordal graphs}, {\em cactus of cliques}, 
{\em AT-free} graphs, {\em link graphs of simple polygons}, 
and {\em any} class of graphs with a {\em fixed} diameter, whereas examples of non-hyperbolic graph classes
(\IE, when $\delta$ is not a constant)
include {\em expanders}, {\em simple cycles}, and, 
for some parameter ranges, the {\em Erd\"{o}s-R\'{e}nyi random graphs}. 

Note that if $G$ is $\delta$-hyperbolic then $G$ is also $\delta'$-hyperbolic for any $\delta'>\delta$ (cf.\ 
Definition~\ref{def-hyperbolic-1}).
In this paper, to avoid division by zero in terms involving $\nicefrac{1}{\delta}$, we will assume $\delta>0$.
In other words, we will treat a $0$-hyperbolic graph (a tree) as a $\frac{1}{2}$-hyperbolic graph in the analysis.

\subsection{Relevant Known Results for Gromov Hyperbolicity}

We summarize relevant known results that are used in this paper below; 
many of these results appear in several prior works, \EG,~\cite{a1,a2,book,G87,ADM}.
\FI{fig-kill-short} pictorially illustrates these results.

\begin{fact}[Cylinder removal around a geodesic]{\rm\cite{a2}}
\label{fact-cylinder}
Assume that $G$ is a $\delta$-hyperbolic graph.
Let $p$ and $q$ be two nodes of $G$ such that $\dist_G(p,q)=\beta>6$, and 
let $p',q'$ be nodes 
on a shortest path between $p$ and $q$ such that $\dist_G(p,p')=\dist_G(p',q')=\dist_G(q',q)={\beta}/{3}$. 
For any $0<\alpha<\nicefrac{1}{4}$, let $\cC$ be set of nodes at a distance of $\alpha \beta-1$
of a shortest path $\overline{p',q'}$ between $p'$ and $q'$, \IE, 
let 
$
\cC = \left\{ u \,|\, \exists \, v \in \overline{p',q'} \colon \dist_G(u,v)=\alpha \beta-1\right\} 
$.
Let $G_{-\cC}$ be the graph obtained from $G$ by removing the nodes in $\cC$.
Then, 
$
\dist_{G_{-\cC}}(p,q) \geq 
({\beta}/{60})\,2^{\,\alpha \beta/\delta}
$.
\end{fact}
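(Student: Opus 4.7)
The strategy is to proceed in two stages. First, I would show that every $p$-to-$q$ path $P$ in $G_{-\cC}$ is forced to stay uniformly far from the middle subgeodesic $\overline{p',q'}$. Second, I would use the thin-triangle characterization of $\delta$-hyperbolicity (Definition~\ref{def-hyperbolic-1}) to show that any such ``far'' path must have exponentially many edges.

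For stage one, since $p',q'$ lie on the chosen shortest path $\overline{p,q}$, for every $x\in\overline{p',q'}$ we have $\dist_G(p,x)=\dist_G(p,p')+\dist_G(p',x)\geq\beta/3$, and symmetrically $\dist_G(q,x)\geq\beta/3$. Because $\alpha<1/4$ we have $\beta/3>\alpha\beta>\alpha\beta-1$, so both endpoints of $P$ already lie at distance $\geq\alpha\beta$ from $\overline{p',q'}$. The function $u\mapsto\dist_G(u,\overline{p',q'})$ changes by at most $1$ along any edge of $G$, so by a discrete intermediate-value argument any $p$-to-$q$ path in $G_{-\cC}$ (which cannot touch $\cC$, the level set at distance exactly $\alpha\beta-1$) must stay at graph-distance $\geq\alpha\beta$ from every node of $\overline{p',q'}$.

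For stage two, I would establish a doubling lemma by induction on $r$: for any $w$ lying on a shortest path $\overline{x,y}$ with $\dist_G(x,w),\dist_G(w,y)\geq a$, any path $Q$ from $x$ to $y$ whose nodes are all at distance $\geq r$ from $w$ satisfies $|Q|\geq(a-r)\cdot 2^{r/\delta}$ whenever $a>r$. The base case $r=0$ is immediate, since $|Q|\geq\dist_G(x,y)\geq a$. For the inductive step, let $m$ be the midpoint of a length-minimizing $Q$; applying the thin-triangle property to the geodesic triangle with vertices $x,y,m$ produces a node $w'\in\overline{x,m}\cup\overline{y,m}$ with $\dist_G(w,w')\leq\delta$. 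The length-$|Q|/2$ half of $Q$ on $w'$'s side is then a path avoiding $w'$ at distance $\geq r-\delta$, with the one-sided distances from $w'$ to the endpoints of this sub-path bounded below by $a-\delta$; the induction hypothesis gives $|Q|/2\geq(a-r)\cdot 2^{(r-\delta)/\delta}$, and doubling recovers the claim. Applying the lemma with $w$ equal to the midpoint of $\overline{p',q'}$ (so $w\in\overline{p,q}$ with $\dist_G(p,w)=\dist_G(w,q)=\beta/2$), together with $a=\beta/2$ and $r=\alpha\beta$, yields $|P|\geq(\beta/2-\alpha\beta)\cdot 2^{\alpha\beta/\delta}\geq(\beta/4)\cdot 2^{\alpha\beta/\delta}\geq(\beta/60)\cdot 2^{\alpha\beta/\delta}$.

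The main obstacle is propagating the two-sided distance condition through the induction: when the shifted obstacle $w'$ lands near the new midpoint rather than near an endpoint of the currently considered half, one of the two one-sided distances from $w'$ can collapse and appear to break the recursion. Handling this cleanly requires carefully relating the midpoint $m$ of the minimum-length $Q$ to the graph distances $\dist_G(x,m)$ and $\dist_G(m,y)$ (exploiting the minimality of $|Q|$ and the fact that $m$ is reachable by a path of length $|Q|/2$ from each endpoint). The looser constant $1/60$ in the statement, rather than the sharper $1/4$ this sketch suggests, absorbs the slack from this case analysis together with the rounding of $\alpha\beta-1$ to $\alpha\beta$ in stage one.
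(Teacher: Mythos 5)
Your stage one (the discrete intermediate-value argument forcing any surviving $p$--$q$ path to stay at distance $\geq\alpha\beta$ from $\overline{p',q'}$) is fine. The fatal problem is stage two: the doubling lemma you propose, $|Q|\geq(a-r)\cdot 2^{r/\delta}$, is false, and the ``obstacle'' you flag at the end is not a technicality to be handled cleanly --- it is the symptom of the falsity. To see this, work in a bounded-degree net of $\mathbb{H}^2$, which is $\delta_0$-hyperbolic for an absolute constant $\delta_0$: for a geodesic $[x,y]$ with midpoint $w$ and $\dist_G(x,w)=\dist_G(w,y)=a$, the path that follows $[x,y]$ until it reaches distance $r$ from $w$, circles the sphere of radius $r$ about $w$, and rejoins $[x,y]$ has length $2(a-r)+\Theta(\lambda^{r})$ for some constant $\lambda>1$ --- a \emph{sum}, not a product. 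Choosing $a-r$ much larger than $\lambda^{r}$ makes this far smaller than $(a-r)2^{r/\delta_0}$. The correct single-center divergence statement (essentially your repeated-halving argument run all the way down \emph{without} trying to track distances to the endpoints of the current subpath) is $|Q|\geq 2^{(r-1)/\delta}$, with no linear prefactor; one can squeeze out an extra factor of order $\delta$, but never of order $a-r$.

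This matters because your final step invokes the lemma only at the single midpoint $w$ of $\overline{p',q'}$, i.e., it uses only the fact that $P$ avoids $\cB_G(w,\alpha\beta)$ and discards the hypothesis that $P$ clears the $\alpha\beta$-neighborhood of the \emph{entire} segment $\overline{p',q'}$ of length $\beta/3$. The single-ball information is genuinely too weak: a path avoiding only $\cB_G(w,\alpha\beta)$ can have length $\beta+O\bigl(2^{c\,\alpha\beta}\bigr)$, which is smaller than $(\beta/60)\,2^{\alpha\beta/\delta}$ whenever $\beta$ is large relative to $2^{c'\alpha\beta}$; the factor $\beta/60$ comes precisely from the length of $\overline{p',q'}$. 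Note also that the paper gives no proof of this Fact --- it is quoted from \cite{a2} --- so there is nothing internal to compare against; the standard route is via exponential contraction of the nearest-point projection $\pi$ onto $\overline{p',q'}$: a path lying at distance $\geq\alpha\beta$ from the segment satisfies $\mathrm{diam}\,\pi(\mathrm{im}\,P)\leq O\bigl(|P|\cdot 2^{-\alpha\beta/\delta}\bigr)+O(\delta)$, while $\pi(p)$ and $\pi(q)$ are within $O(\delta)$ of $p'$ and $q'$, so $\beta/3-O(\delta)\leq O\bigl(|P|\,2^{-\alpha\beta/\delta}\bigr)$; the regime $\delta=\Omega(\beta)$ is trivial since then $2^{\alpha\beta/\delta}=O(1)$ and $|P|\geq\beta$. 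Your approach cannot be repaired without reintroducing the whole segment in some such form.
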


\begin{fact}[Exponential divergence of geodesic rays]\label{fact-expo}
{\em\textbf{[Simplified r\-e\-f\-o\-r\-m\-u\-l\-a\-t\-i\-o\-n of \cite[Theorem~10]{ADM}]}}
Assume that $G$ is a $\delta$-hyperbolic graph.
Suppose that we are given the following:
\begin{itemize}[itemsep=0.1ex]
\item
three integers $\kappa\geq 4$, $\alpha>0$, $r>3\kappa\delta$, and 
\smallskip
\item
five nodes $v,u_1,u_2,u_3,u_4$ such that
$\dist_G(v,u_1)=\dist_G(v,u_2)=r$, 
\\
$\dist_G(u_1,u_2)\geq 3 \kappa\delta$,
$\dist_G(v,u_3) \! = \! \dist_G(v,u_4) \! = \! r+\alpha$, and 
$\dist_G(u_1,u_4) \! = \! \dist_G(u_2,u_3)=\alpha$.
\end{itemize}
Consider any path $\cQ$ between $u_3$ and $u_4$ that does not involve a node in 
$\bigcup_{\,0\,\leq\, j\,\leq\, r+\alpha}\hspace*{-0.0in}\cB_{G}(v,j)$. Then, 
the length $|\cQ|$ of the path $\cQ$ satisfies 
$\displaystyle | \cQ | \! > \! 2^{ \frac{\alpha} { 6\,\delta } + \kappa + 1 }$.
\end{fact}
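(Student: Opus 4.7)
Plan: My plan is to establish Fact~\ref{fact-expo} by formalising and iterating the exponential divergence of geodesic rays. The rough picture: $\cQ$ lives strictly outside the ball $\cB_G(v,r+\alpha)$, and I want to argue that its length doubles for every $6\delta$-thick ``layer'' of radius that one crosses outward, contributing the $2^{\alpha/(6\delta)}$ factor, on top of an initial ``bootstrap'' that converts the linear separation $3\kappa\delta$ into the exponential factor $2^{\kappa+1}$.

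I would first introduce an auxiliary potential: for each $R\in[0,r+\alpha]$, let $\Phi(R)$ denote the minimum length of a path, avoiding $\cB_G(v,R)$, between nodes $a,b$ with $\dist_G(v,a)=\dist_G(v,b)=R$ that lie respectively on shortest paths $\overline{v,u_4}$ and $\overline{v,u_3}$. Because $\dist_G(v,u_2)+\dist_G(u_2,u_3)=r+\alpha=\dist_G(v,u_3)$ (and similarly for $u_1,u_4$), the nodes $u_1,u_2$ are on such extensions, and $u_3,u_4$ themselves are as well; in particular $|\cQ|\geq\Phi(r+\alpha)$ and $\Phi(r)\geq\dist_G(u_1,u_2)\geq 3\kappa\delta$.

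The core of the argument is a \emph{doubling lemma}: $\Phi(R+6\delta)\geq 2\,\Phi(R)-O(1)$, with the $O(1)$ absorbed by the slack $r>3\kappa\delta$. To prove it, take an optimal path $\cQ^\star$ realising $\Phi(R+6\delta)$, let $m$ be its midpoint, and note $\dist_G(v,m)>R+6\delta$. Apply the slim-triangle property (Definition~\ref{def-hyperbolic-1}) to the triangle formed by $v$ and the two endpoints of $\cQ^\star$, after replacing $\cQ^\star$ by a shortest path between those endpoints for the purpose of invoking slimness. One identifies a radial projection $m'$ of $m$ at distance $R+\delta$ (say) from $v$, whose position on the sphere ``between'' the two rays allows both halves of $\cQ^\star$ to be recognised as admissible paths for the definition of $\Phi(R)$ (after suitable projection of their endpoints to distance $R$), so each half has length $\geq\Phi(R)-O(1)$. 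Iterating this lemma $\alpha/(6\delta)$ times from $R=r$ up to $R=r+\alpha$ produces the factor $2^{\alpha/(6\delta)}$. The bootstrap $\Phi(r)\geq 2^{\kappa+1}$ is then obtained by iterating the same doubling lemma in the opposite direction (inward from $r$ toward $v$): the hypothesis $r>3\kappa\delta$ provides exactly enough ``room'' for $\kappa$ inward doublings, starting from the trivial base estimate $\Phi(R_0)\geq 1$ at some $R_0\approx r-3\kappa\delta$ and ratcheting up using the separation $\dist_G(u_1,u_2)\geq 3\kappa\delta$ to certify the doubling at each inward layer.

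The main obstacle will be the doubling lemma itself, specifically the geometric claim that the radial projection of the midpoint $m$ lies essentially on a different ray-extension from each of the two endpoints of the original path $\cQ^\star$, so that neither half of $\cQ^\star$ can ``cheat'' by staying near one ray. This requires a careful application of the $\delta$-slim property to a triangle whose base is not a geodesic, which I would handle by replacing the base with an actual shortest path and then bookkeeping the extra $O(\delta)$ terms produced by the slim-triangle inclusions; an alternative is to apply the $4$-node condition (Definition~\ref{def-hyperbolic-2}) to the quadruple $(v,u_4',m,u_3')$ where $u_3',u_4'$ are the endpoints of $\cQ^\star$, which gives an analogous midpoint bound directly. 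Tracking these constants and verifying that the $O(1)$ additive loss per doubling iteration is dominated by the multiplicative gain (using $\delta>0$ and $\kappa\geq 4$) is the delicate but routine part of the argument.
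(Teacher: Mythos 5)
First, a point of reference: the paper does not prove Fact~\ref{fact-expo} at all --- it is quoted as a ``simplified reformulation'' of \cite[Theorem~10]{ADM} and used as a black box --- so there is no in-paper proof to match your sketch against. Judged on its own terms, your sketch follows the standard paradigm for exponential divergence in $\delta$-hyperbolic spaces (slim triangles plus a doubling/halving recursion), and you correctly identify where the hypothesis $\dist_G(u_1,u_2)\ge 3\kappa\delta$ must enter (to rule out the degenerate configuration in which the two rays stay $O(\delta)$-close). However, the central step, the doubling lemma $\Phi(R+6\delta)\ge 2\Phi(R)-O(1)$, does not close as a recursion in the form you set it up. Two problems: (i) the potential $\Phi(R)$ is defined relative to the two \emph{fixed} rays $\overline{v,u_3}$ and $\overline{v,u_4}$, but the two halves of the optimal path $\cQ^\star$ meet at a midpoint $m$ lying on neither ray, so neither half is a competitor for the same $\Phi(R)$; and (ii) even after generalizing $\Phi$ to range over all pairs of geodesics from $v$ with sufficiently separated radius-$R$ points, the inward projection of $m$ may land close to one of the two original rays, in which case that half of $\cQ^\star$ can be short and you only recover $\Phi(R+6\delta)\ge\Phi(R)$ rather than a doubling. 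You flag this as ``the main obstacle,'' but resolving it is essentially the whole proof, and the $O(\delta)$ bookkeeping you defer is not routine in this formulation.

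The standard repair is to run the recursion on the \emph{path} rather than on a radius-indexed potential: prove the halving lemma that every point $x$ of a geodesic $[p,q]$ satisfies $\dist_G(x,\mathrm{im}(c))\le\delta\log_2|c|+1$ for any path $c$ from $p$ to $q$ (bisect $c$ at its midpoint $m$, apply $\delta$-slimness to the triangle $(p,m,q)$, and recurse into whichever half $x$ is $\delta$-close to --- this recursion does close, because it tracks one point against nested subpaths instead of requiring both halves to be long). Then apply slimness to the triangle $(v,u_3,u_4)$, with geodesic sides chosen through $u_1$ and $u_2$: if $u_1$ were $\delta$-close to $\overline{v,u_3}$ one would get $\dist_G(u_1,u_2)\le 2\delta$, contradicting $\dist_G(u_1,u_2)\ge 3\kappa\delta$; hence $u_1$ is $\delta$-close to a point $x\in\overline{u_3,u_4}$ with $\dist_G(v,x)\le r+\delta$, so $\dist_G(x,\mathrm{im}(\cQ))\ge\alpha-\delta$, and the halving lemma yields $|\cQ|\ge 2^{(\alpha-\delta-1)/\delta}$. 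Your bootstrap for the additive $\kappa+1$ in the exponent is also underspecified: the available room $r>3\kappa\delta$ supports only $\kappa/2$ inward layers of thickness $6\delta$, and each inward doubling again needs the separation hypothesis re-verified at the smaller radius; with $\delta$-thick layers (as the halving lemma naturally gives) the constants work out, but as written the claimed $2^{\kappa+1}$ factor is not established.
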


\begin{figure}[htbp]
\includegraphics{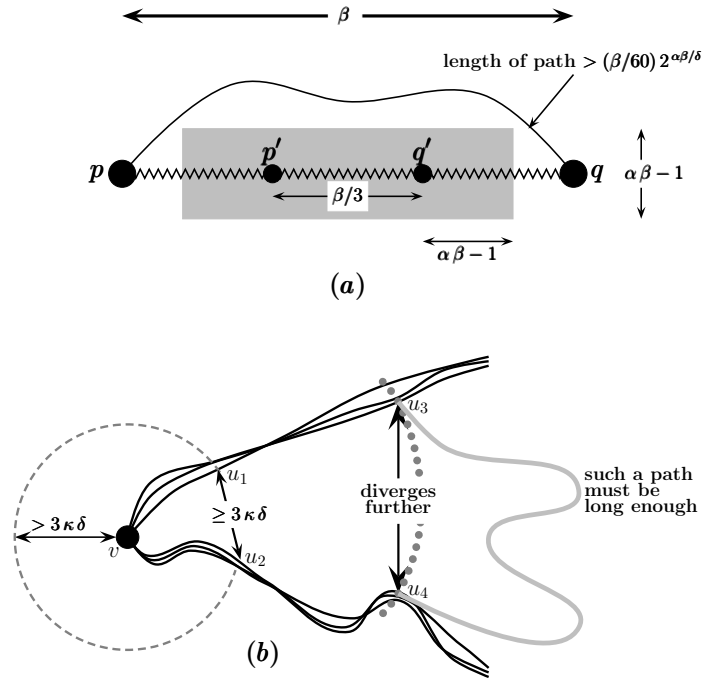}
\caption{\label{fig-kill-short}
($a$) 
Illustration of Fact~\ref{fact-cylinder}.
By growing the shaded region and removing nodes in its boundary, one can selectively extract 
longer paths in the graph (\IE, the length of a shortest path between $p$ and $q$ increases when the nodes in the boundary 
of the shaded region are removed and the increase of the length of such a shortest path is more the larger the shaded region is). 
Translating the region slightly does not change this property much.
($b$) 
Illustration of Fact~\ref{fact-expo}.
Geodesic rays diverging sufficiently cannot connect back without using a sufficiently long path.}
\end{figure}

\section{Overview of Our Results}
\label{sec-over}

Before proceeding with formal theorems and proofs, we first provide an informal non-technical intuitive overview 
of our results.
\begin{enumerate}[label=$\blacktriangleright$]
\item
Our first two results in Section~\ref{n-e-sec} provide upper bounds for {\em node expansions} for the triple
$\langle G,d,\delta\rangle$ as a function of $n$, $d$, and $\delta$. These two results, namely Theorem~\ref{main-nested}
and Theorem~\ref{main3}, 
provide absolute bounds and show that many {\em witnesses} (subset of nodes) satisfying such expansion bounds 
can be found efficiently in polynomial time satisfying two additional criteria: 
\smallskip
\begin{enumerate}[label=$\triangleright$]
\item
the witnesses (subsets) form a nested family, or
\smallskip
\item
the witnesses have {\em limited} overlap in the sense that 
every subset has a certain number of ``private'' nodes \emph{not} contained in any other subset.
\end{enumerate}
\smallskip
These bounds also imply in an obvious manner corresponding upper bounds for the {\em edge-expansion} of $G$ and 
for the smallest non-zero eigenvalue of the Laplacian of $G$.

\smallskip

In Remark~\ref{rem1}, we provide an explanation of the asymptotics of these bounds in comparison to
expander-type graphs. 
For example, if $\delta$ is fixed (\IE, $G$ is hyperbolic) then 
$d$ has to be increased to {\em at least} $2^{\Omega\left(\sqrt{  {\log\log n}\,/\,{\log\log\log n}  }\,\right)}$
to get a positive non-zero Cheeger constant, whereas if $d$ is fixed then 
$\delta$ need to be at least $\Omega\left(\log n\right)$
to get a positive non-zero Cheeger constant (this last implication also follows from the results in~\cite{a1,a2}).

\smallskip

\item
Our last result in Section~\ref{sec-disjoint}, namely Lemma~\ref{const-thm}, deals with the {\em absolute} size 
of $s$-$t$ cuts in hyperbolic graphs, and shows that 
a large family of $s$-$t$ 
cuts having at most $d^{\,O(\delta)}$ cut-edges 
can be found in polynomial time 
in $\delta$-hyperbolic graphs when 
$d$ is the maximum degree of any node except $s$, $t$ and any node within a distance of 
$35\,\delta$ of $s$ and 
the distance between $s$ and $t$ is at least $\Omega(\delta\log n)$.
This result is later used in designing the approximation algorithm for 
minimizing bottleneck edges
in Section~\ref{sec-size}.

\smallskip

\item
In Section~\ref{sec-appl} we discuss some applications of these bounds in designing improved approximation algorithms 
for two graph-theoretic problems
for $\delta$-hyperbolic graphs when $\delta$ does not grow too fast as a function of $n$: 

\smallskip

\begin{enumerate}[label=$\triangleright$]
\item
We show in Section~\ref{sec-bott} (Lemma~\ref{equi2}) that the 
problem of identifying vulnerable edges in network designs by {\em minimizing shared edges} admits an improved 
approximation provided $\delta=o(\log n/\log d)$. We do so by relating it to a hitting set problem for  
{\em size-constrained cuts} (Lemma~\ref{equi}) and providing an improved approximation for this 
latter problem (Lemma~\ref{main4}). We also observe
that obvious greedy strategies fail for such problems miserably.

\smallskip

\item
In Section~\ref{sec-small-expansion}
we provide a polynomial-time solution (Lemma~\ref{thm-sse-hyper}) 
for a type of small-set expansion problem
originally proposed by Arora, Barak and Steurer~\cite{ABS10}
for the case when $\delta$ is sub-logarithmic in $n$.
\end{enumerate}

\smallskip

\item
Finally, in Section~\ref{sec-concl} we conclude with some interesting future research questions.
\end{enumerate}

\section{Effect of $\delta$ on Expansions and Cuts in $\delta$-hyperbolic Graphs}
\label{n-e-sec}

The two results in this section are related to the node (or edge) expansion ratios of a graph that is $\delta$-hyperbolic
for some (not necessarily constant) $\delta$.
The following definitions are standard in the graph theory literature and repeated here only for the sake of completeness.

\begin{definition}[Node and edge expansion ratios of a graph]~\\
\noindent
{\bf (a)}
The \emph{node expansion ratio} 
$h_G(S)$ of a subset $S$ of at most $\nicefrac{\left|V\right|}{2}$ nodes of a graph $G=(V,E)$ is defined as 
$h_G(S)=\frac{|\,\partial_G(S)\,|}{|\,S\,|}$.
If $h_G(S)>c$ for some constant $c>0$ and for all subsets $S$ of at most $\nicefrac{\left|V\right|}{2}$ nodes
then we call $G$ a node-expander.

\smallskip
\noindent
{\bf (b)}
The edge expansion ratio $\g_H(S)$ of a subset $S$ of at most $\nicefrac{\left|V\right|}{2}$ nodes of a graph $G=(V,E)$ is defined as 
$\g_G(S)=\frac{|\,\cut_G(S)\,|}{|\,S\,|}$. 
If $h_G(S)>c$ for some constant $c>0$ and for all subsets $S$ of at most $\nicefrac{\left|V\right|}{2}$ nodes
then we call $G$ an edge-expander (or sometimes simply an expander). 
\end{definition}

\begin{definition}[Witness of node or edge expansions]
A witness of a node (respectively, edge) expansion bound of $c$ of a graph $G=(V,E)$ 
is a subset $S$ of at most $\nicefrac{\left|V\right|}{2}$ nodes of $G$
such that $h_G(S)\leq c$ (respectively, $\g_G(S)\leq c$).
\end{definition}

\smallskip

\noindent
{\bf Notation} 
$\h_G= \hspace*{-0.25in} \min\limits_{\,\,\,S\subset V\,\colon\,|S|\leq |V|/2} \Big\{ h_G(S)\Big\}$
will denote the \emph{minimum} node expansion of a graph $G=(V,E)$.

\medskip

For any graph $G=(V,E)$,
any subset $S$ containing exactly $\nicefrac{|V|}{2}$ nodes has $|\,\partial_G(S)\,|\leq \nicefrac{|V|}{2}$, and thus
$0<\h_G=\min_{S\subset V\,\colon\,|S|\leq |V|/2} \left\{ h_G(S)\right\}\leq 1$
\emph{All our expansion bounds in this section will be stated for node expansions only}.
Since $\g_G(S)\leq d\,h_G(S)$ for any graph $G$ whose nodes have a maximum degree of $d$, our bounds
for node expansions translate to some corresponding bounds for the edge expansions as well.

\subsection{Nested Family of Witnesses for Node/Edge Expansion}

An ordered family of sets $S_1,S_2,\dots,S_\ell$ is called \emph{nested} if 
$S_1\subset S_2\subset\dots\subset S_\ell$.
Our goal in this subsection is to find a large nested family of subsets of nodes 
with good node expansion bounds.

For two nodes $p$ and $q$ of a graph $G=(V,E)$, 
a cut $S$ of $G$ that ``separates $p$ from $q$''
is a subset $S$ of nodes containing $p$ but not containing $q$, and the 
set of cut edges $\cut_G(S,p,q)$ corresponding to the cut $S$ 
is the set of edges with exactly one end-point in $S$, \IE, 
\[
\cut_G(S,p,q)=\Big\{ 
\,\big\{u,v\big\} \,\,|\,\,
p,u\in S \,\,\text{and}\,\, q,v\in V\setminus S
\, \Big\}
\]
Recall that $d$ denotes the maximum degree of any node in the given graph $G$.

\begin{theorem}
\label{main-nested}
For any constant $0<\mu<1$, the following result holds for $\langle G,d,\delta\rangle$.
Let $p$ and $q$ be any two nodes of $G$ and let $\Delta=\dist_G(p,q)$.
Then, there exists at least $t=\max\left\{ \frac{\Delta^\mu}{56\log d},\,1 \right\}$ subsets of nodes 
$\emptyset\subset S_1 \subset S_2 \subset \dots \subset S_t\subset V$, 
each of at most ${n}/{2}$ nodes,
with the following properties:
\begin{enumerate}[label=$\blacktriangleright$]
\item
$\displaystyle
\forall\, j\in\{1,2,\dots,t\} \,\colon$

\hfill
$\displaystyle
h_G\left(S_j\right) 
\leq 
\min\left\{ 
{8\ln\left({n}/{2}\right)} / {\Delta}, \,
\max\left\{ \left({1} / {\Delta}\right)^{1-\mu},\,
\dfrac { 500 \ln n }{ \Delta\,2^{\,   \frac{ \Delta^{\mu}} {28\,\delta\,\log(2d) }    } }
\, \right\}
\, \right\}
$.
\smallskip
\item
All the subsets can be found in a total of $O\left(n^3\log n + mn^2\right)$ time.
\smallskip
\item
Either all the subsets $S_1,S_2,\dots,S_t$ contain the node $p$, or 
all of them contain the node $q$.
\end{enumerate}
\end{theorem}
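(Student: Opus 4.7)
The plan is to take the nested family to consist of concentric balls around one of the two distinguished nodes. Without loss of generality I assume $|\cB_G(p,\Delta/2)|\le n/2$; otherwise I swap $p$ and $q$, which is exactly the source of the either/or in property~(3). The candidate sets are then the nested balls $\cB_G(p,1)\subsetneq\dots\subsetneq\cB_G(p,\Delta/2)$, each of size at most $n/2$ and each containing $p$, so the task reduces to selecting $t$ radii $r_1<r_2<\dots<r_t$ in $\{1,\dots,\Delta/2\}$ at which the expansion is small.

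For the $8\ln(n/2)/\Delta$ half of the minimum, I argue by averaging. Writing $b_r=|\cB_G(p,r)|$ so that $h_G(\cB_G(p,r))=b_{r+1}/b_r-1$, the telescoping identity $\prod_{r=0}^{\Delta/2-1}(1+h_G(\cB_G(p,r)))=b_{\Delta/2}/b_0\le n/2$ together with $\ln(1+x)\ge \min(x/2,\ln 2)$ forces at least $\Delta/4$ of these radii to satisfy $h_G(\cB_G(p,r))\le 8\ln(n/2)/\Delta$. Let $R$ denote the resulting set of ``first-pass good radii''.

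For the $\max\{(1/\Delta)^{1-\mu},\,500\ln n/(\Delta\cdot 2^{\Delta^\mu/(28\delta\log(2d))})\}$ half, I partition $R$ in increasing order into $t=\Delta^\mu/(56\log d)$ consecutive blocks, each of cardinality roughly $L\approx 14\Delta^{1-\mu}\log d$. Inside the $i$-th block, occupying radii $[a_i,b_i]$ on the $p$-$q$ geodesic, I apply Fact~\ref{fact-cylinder} to the pair of geodesic points at distances $a_i$ and $b_i$ from $p$, with parameters chosen so that $\alpha_i\beta_i=\Delta^\mu/(28\log(2d))$. The cylinder $\cC_i$ produced has at most $\beta_i\cdot d^{\alpha_i\beta_i}\le L\cdot 2^{\Delta^\mu/28}$ nodes, and its removal pushes the corresponding sub-distance past $(\beta_i/60)\cdot 2^{\alpha_i\beta_i/\delta}=(\beta_i/60)\cdot 2^{\Delta^\mu/(28\delta\log(2d))}$. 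Because every shell $\partial_G\cB_G(p,r)$ with $r$ in the middle third of $[a_i,b_i]$ sits inside the slab that $\cC_i$ separates, at least one radius $r_i$ in this middle third has its shell dominated by the appropriate translate of $\cC_i$, and the lower bound on $|\cB_G(p,r_i)|$ forced by the exponential divergence delivers the claimed $500\ln n/(\Delta\cdot 2^{\Delta^\mu/(28\delta\log(2d))})$ bound. In the complementary regime where $\delta$ is so small that this exponential term lies below $1/\Delta^{1-\mu}$, I fall back on a second pigeonhole inside each block: telescoping $\prod(1+h_G)$ across the $L$ consecutive good radii is still bounded by $n$, so at least one of them has $h_G=O(1/L)=O(1/\Delta^{1-\mu})$.

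All candidate balls and their shells come out of one BFS from $p$ in $O(m)$ time, the block-by-block selection is a linear scan of the outputs, and the loose $O(n^3\log n+mn^2)$ budget comfortably absorbs any preliminary distance computations and the analogous run from $q$. The real obstacle is the hyperbolic step: one must argue that the cylinder separator from Fact~\ref{fact-cylinder} is, up to a constant factor, the shell of \emph{some} ball inside the block, so that nestedness and the exponential bound hold simultaneously, and one must calibrate $\alpha_i,\beta_i,L,t$ so that the exponent $\Delta^\mu/(28\delta\log(2d))$ survives intact --- the $\log(2d)$ factor there is precisely the exchange rate between cylinder node-counts $d^{\alpha\beta}$ and the $2^{\alpha\beta/\delta}$ growth predicted by Fact~\ref{fact-cylinder}.
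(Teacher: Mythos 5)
Your easy half (the $8\ln(n/2)/\Delta$ bound via telescoping ball sizes around whichever of $p,q$ has the smaller ball, which also yields property (3)) matches the paper. The difficult half has a genuine gap, and it is exactly the step you yourself flag as ``the real obstacle'': you never connect the cylinder separator of Fact~\ref{fact-cylinder} to the boundary of any ball. For a ball $\cB_G(p,r)$ taken in the \emph{original} graph $G$, the shell $\partial_G(\cB_G(p,r))$ consists of \emph{all} nodes at distance $r+1$ from $p$, and there is no reason for it to be ``dominated by a translate of $\cC_i$'' --- it can contain many nodes far from the $p$--$q$ geodesic about which the cylinder says nothing. The paper's resolution is to abandon balls of $G$ in the hard case and instead grow balls $\cB_{G_{-\cC}}(p,r)$ in the \emph{punctured} graph: these stall against the cylinder, so their $G$-boundary is contained in $\cC$ and hence has size at most $(\Delta/3)\,d^{\,\alpha\Delta}$, while the failure of the $(1/\Delta)^{1-\mu}$ case forces the ball itself to have size at least $\bee^{\Delta^\mu/8}$; the ratio of these two quantities is what produces the bound. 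This move forces two further steps absent from your sketch: one must verify that the expansion \emph{in $G$} of a $G_{-\cC}$-ball is within a factor $2$ of its expansion in $G_{-\cC}$ (which again needs $|\cC|$ small relative to the ball size), and one must handle the case where removing $\cC$ disconnects $p$ from $q$, which the paper does by sweeping $\alpha$ over $\Theta(\Delta^{\mu})$ distinct values and extracting the nested family from the stalled balls themselves when disconnection always occurs.

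Two quantitative claims also fail as written. First, the term $500\ln n/(\Delta\,2^{\Delta^{\mu}/(28\delta\log(2d))})$ arises in the paper by averaging the total expansion budget $4\ln n$ over the $\Theta\bigl(\Delta\,2^{\alpha\Delta/(2\delta)}\bigr)$ radii available in $G_{-\cC}$; Fact~\ref{fact-cylinder} blows up the \emph{distance in the punctured graph}, not the sizes of balls in $G$, so a block of only $L\approx 14\,\Delta^{1-\mu}\log d$ radii in $G$ cannot produce an exponentially small average. Second, your fallback pigeonhole loses a $\ln n$ factor: telescoping $\prod(1+h)$ over a block of length $L$ with the product bounded by $n$ gives a minimum expansion of at most $2\ln n/L$, not $O(1/L)$, so it does not deliver the clean $(1/\Delta)^{1-\mu}$ bound. (There is also a calibration problem: Fact~\ref{fact-cylinder} requires $\alpha_i<1/4$, i.e.\ $\alpha_i\beta_i<L/4$, which is incompatible with your target $\alpha_i\beta_i=\Delta^{\mu}/(28\log(2d))$ once $\mu>1/2$ and $\Delta$ is large.)
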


\begin{corollary}
\label{cor1}
Letting $p$ and $q$ be two nodes such that $\dist_G(p,q)=D(G)=D$ realizes the diameter of the graph $G$, we get
the bound: 
\[
h_G\left(S_j\right) 
\leq 
\min\left\{ 
{\textstyle \frac {8\ln\left( {n}/{2} \right)} {D} }, \,
\max\left\{ \left( {\textstyle \frac {1}{D} } \right)^{1-\mu},\,
( { 500 \ln n } ) / \left( { D\,2^{\,   \frac { D^{\mu}} {28\,\delta\,\log(2d) }   } } \right)
\, \right\}
\, \right\}
\]
Since $D>{\log n} / {\log d}$, the above bound implies:
\begin{gather}
\h_G<
\max\left\{ \left( {\log d} / {\log n}\right)^{1-\mu},\,
( { 500 \log d } ) / \left( { 2^{\,     {\log^\mu n } / \left( { 28\,\delta \, {\log^{1+\mu} (2d)} } \right)\, } } \right) \, \right\}
\label{bound-expl}
\end{gather}
\end{corollary}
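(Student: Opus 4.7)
The plan is to derive Corollary~\ref{cor1} as a direct specialization of Theorem~\ref{main-nested} combined with the elementary lower bound $D(G)\geq\log_d n=\log n/\log d$ already recorded in Section~\ref{sec-not-def}.

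First I would apply Theorem~\ref{main-nested} to a diameter-realizing pair $p,q$, so that $\Delta=\dist_G(p,q)=D$. Substituting $\Delta=D$ verbatim into the third bullet of the theorem's conclusion produces the first displayed inequality of the corollary. Since $\h_G$ is the minimum of $h_G(S)$ taken over all subsets $S$ of at most $n/2$ nodes, and each $S_j$ produced by the theorem is such a subset, we have $\h_G\leq h_G(S_1)$, so the same bound transfers to $\h_G$.

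Next, to obtain \eqref{bound-expl}, I would discard the first argument $8\ln(n/2)/D$ of the outer $\min$ (a minimum is at most either of its arguments) and bound each of the two arguments of the remaining $\max$ separately using $D>\log n/\log d$. For the first argument, monotonicity of $x\mapsto x^{1-\mu}$ on $(0,\infty)$ yields $(1/D)^{1-\mu}<(\log d/\log n)^{1-\mu}$. For the second, I would note that
\[
\frac{D^\mu}{28\,\delta\,\log(2d)}\;>\;\frac{\log^\mu n}{28\,\delta\,\log^\mu d\cdot\log(2d)}\;\geq\;\frac{\log^\mu n}{28\,\delta\,\log^{1+\mu}(2d)},
\]
using $D^\mu>(\log n/\log d)^\mu$ together with $\log d\leq\log(2d)$, while simultaneously $500\ln n/D<500\ln n\cdot\log d/\log n=500(\ln 2)\log d<500\log d$. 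Feeding these two inequalities into the second argument of the $\max$ gives exactly \eqref{bound-expl}.

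In short, the derivation is a single substitution followed by routine monotonic comparisons, so there really is no hard step. The only items demanding mild care are (i)~recognizing that replacing the $\min$ by its (larger) inner $\max$ is a legitimate weakening of the bound, and (ii)~absorbing the harmless factor $\ln 2$ into the leading constant $500$, consistent with the paper's stated convention of not optimizing multiplicative constants.
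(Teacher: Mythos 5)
Your proposal is correct and follows exactly the route the paper intends: the corollary is an immediate specialization of Theorem~\ref{main-nested} with $\Delta=D$, followed by the monotone substitutions driven by $D>\log n/\log d$ (and $\log d\leq\log(2d)$, $\ln n=\ln 2\cdot\log n$), for which the paper offers no separate proof. Your handling of the constant via $\ln 2<1$ and the denominator comparison are both sound, so there is nothing to add.
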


\begin{remark}
\label{rem1}
The following observations may help the reader to understand the asymptotic nature of the bound in~\eqref{bound-expl}.

\smallskip
\noindent
{\bf (a)}
The first component of the bound is $O\left({1}/{\log^{1-\mu} n}\right)$ for fixed $d$,
and is $\Omega(1)$ only when $d=\Omega(n)$.

\smallskip
\noindent
{\bf (b)}
To better understand the second component of the bound, consider the following cases (recall that $\h_G=\Omega(1)$ for an expander):
\begin{enumerate}[label=$\blacktriangleright$]
\item
Suppose that the given graph is a hyperbolic graph of constant maximum degree, \IE, both $\delta$ and $d$ are constants.
In that case, 
\[
( { 500 \log d } ) / \left( { 2^{\,  \frac {\log^\mu n } { 28\,\delta \, {\log^{1+\mu} (2d) } } \,    } } \right)
=
O \left(
{ 1 } / \left( { 2^{ \, O(1) \, \log^\mu n } } \right) \,
\right)
=
O \left(
{ 1 } / { \mathrm{polylog} (n) }\,
\right)
\]
\item
Suppose that the given graph is hyperbolic but the maximum degree $d$ is arbitrary. 
In that case, 
\begin{multline*}
( { 500 \log d } ) / \left( { 2^{\,  \frac{\log^\mu n } { 28\,\delta \, {\log^{1+\mu} (2d) } }   } } \right)
=
O \left(
{ \log d } / \left( { 2^{\,  O(1) \, {\log^\mu n } / { \log^{1+\mu} d  }   } } \right) \,
\right)
\\
=
O\left(
{ \log d }
/
{
\mathrm{polylog}(n)^{{1} / {\log^{1+\mu} d}}
}
\right)
\end{multline*}
and thus $d$ has to be increased to at least $2^{\Omega\left(\,\sqrt{  {\log\log n} / {\log\log\log n}  }\,\right)}$ to get
a constant upper bound.
\smallskip
\item
Suppose that the given graph has a constant maximum degree but not necessarily hyperbolic ({\em\IE}, $\delta$ is
arbitrary).
In that case, 
\[
( { 500 \log d } ) / \left( { 2^{\, ( { \log^\mu n } ) / \left( { 28\,\delta \, {\log^{1+\mu} (2d) }} \right) \,} } \right)
= 
O \left(
{1} / {
2^{ \, O(1) {\log^\mu n } / {\delta} }
}
\right)
\]
and thus $\delta$ need to be at least $\Omega\left(\log^\mu n\right)$ to get
a constant upper bound.
\end{enumerate}
\end{remark}


\subsubsection{Proof of Theorem~\ref{main-nested}}

Proof of the main bounds in Theorem~\ref{main-nested} uses the same cylinder or ball removing techniques
as used in~\cite{a1,a2} in showing that hyperbolic graphs are not expanders.
However, several technical complications arise when we try to find these witnesses while optimizing 
the corresponding expansion bounds. 
The time-complexity of finding our witnesses are discussed at the very end of our proof.

\bigskip
\noindent
(I)
{\bf Proof of the easy part of the bound, \IE, $h_G\left(S_j\right) \leq ({8\ln\left({n}/{2}\right)})/{\Delta}$}

\medskip
This proof is straightforward and \emph{provided for the sake of completeness}.
Assume that $\Delta>\left( 8 \ln \left({n}/{2} \right) \right)^{1/\mu}$ since otherwise there is no need to prove this bound.
Assume, without loss of generality, that 
\\
$\left| \cB_G\big(p,{\Delta}/{2}\big) \right| \leq 
\min\left\{ \, \left| \cB_G\big(p,{\Delta}/{2}\big) \right|,\,
\left| \cB_G\big(q,{\Delta}/{2}\big) \right|\,
\right\}\leq {n}/{2}$.
Consider the sequence of balls 
$\cB_G(p,r)$ for $r=0,1,2\dots,{\Delta}/{2}$. Then it follows that 
\begin{multline*}
{n}/{2} > \left| \,\cB_G\left(p,{\Delta}/{2}\right)\, \right| \geq 
\hspace*{-0.2in}
\prod_{\ell=0}^{ \hspace*{0.2in} ({\Delta}/{2})-1} 
\hspace*{-0.2in}
\big(  1+h_G \left( \, \cB_G\left(p,\ell\right) \, \right)  \big)
\\
\geq
\hspace*{-0.2in}
\prod_{\ell=0}^{ \hspace*{0.2in} ({\Delta}/{2})-1} 
\hspace*{-0.2in}
\bee^{h_G \left( \, \cB_G\left(p,\ell\right) \, \right) /2}
=
\bee^{
     \hspace*{-0.2in}
     \sum\limits_{\ell=0}^{ \hspace*{0.2in} ({\Delta}/{2})-1} 
     \hspace*{-0.2in}
		 h_G \left( \, \cB_G\left(p,\ell\right) \, \right) /2 }
\\
\Rightarrow \,
\ln \left( {n}/{2} \right) 
>
\hspace*{-0.25in}
\sum\limits_{\ell=0}^{ \hspace*{0.2in} ({\Delta}/{2})-1} 
\hspace*{-0.25in}
h_G \left( \, \cB_G\left(p,\ell\right) \, \right) /2
\,\Rightarrow \,
\frac { \sum_{\ell=0}^{ ({\Delta}/{2})-1} \! h_G \left( \, \cB_G\left(p,\ell\right) \, \right) } {\Delta/2}
<
 \frac {4\ln\left({n}/{2}\right)} {\Delta}
\end{multline*}
By a simple averaging argument, there must now exist ${\Delta}/{4}>
\max\left\{ \frac {\Delta^\mu} {56\log d},\,1 \right\}$ distinct balls (subsets of nodes) 
$
\cB_G\left(p,r_1\right)\subset
\cB_G\left(p,r_2\right)\subset
\dots\subset
\cB_G\left(p,r_{\nicefrac{\Delta}{4}}\right)
$ 
such that 
$
\left| \, \cB_G\left(p,r_j\right) \, \right|
<
({8\ln\left({n}/{2}\right)})/{\Delta}
$
for $j=1,2,\dots,{\Delta}/{4}$. It is straightforward to see that 
these balls can be found 
within the desired time complexity bound.

\bigskip
\noindent
(II)
{\bf Proof of the difficult part of the bound,}

{\bf \IE, $h_G\left(S_j\right) \leq 
\max\left\{ \left({1}/{\Delta}\right)^{1-\mu},\,
\dfrac { 500 \ln n }{ \Delta\,2^{\,   \frac{ \Delta^{\mu}} {28\,\delta\,\log(2d) }    } }
\, \right\}$}

\medskip
\noindent
(II-a)
{\bf The easy case of $\Delta=O(1)$}

\medskip
If $\Delta=c$ for any some constant $c\geq 1$ (independent of $n$) 
then, since $\delta\geq {1}/{2}$, $d>1$ and $n$ is sufficiently large, we have 
$
( { 500 \ln n }) / \left( { \Delta\,2^{\,  {\Delta^{\mu}} / ( {28\,\delta\,\log(2d) } ) } } \right)
>
( { 500 \ln n } ) / \left( { \Delta\,2^{\,({1}/{14}) \Delta^{\mu}} } \right)
>  1 
$.
Thus, any subset of ${n}/{2}$ nodes containing $p$ satisfies the claimed bound, and 
the number of such subsets is 
$\displaystyle\binom{\frac{n}{2}-1}{n-2}\gg t$.

\medskip
\noindent
(II-b)
{\bf The case of $\Delta=\omega(1)$}

\medskip
Otherwise, assume that $D(n)=\omega(1)$, \IE, $\lim_{n\to\infty}D(n)>c$ for any constant $c$. 
Let $p',q'$ be nodes 
on a shortest path between $p$ and $q$ such that $\dist_G(p,p')=\dist_G(p',q')=\dist_G(q',q)={\Delta}/{3}$. 
The following initial value of the parameter $\alpha$ is crucial to our analysis\footnote{We will later need to
vary the value of $\alpha$ in our analysis.}: 
\begin{equation}
\alpha
= 
\alpha_0
=
{1} / \left( {7\,{\Delta}^{1-\mu}\,\log (2 d)} \right)
\label{alpha-value}
\end{equation}
Note that $0<\alpha_0<{1}/{4}$. 
Let $\cC$ be set of nodes at a distance of $\lfloor \alpha \Delta\rfloor>\alpha \Delta - 1$
of a shortest path $\overline{p',q'}$ between $p'$ and $q'$. Thus,
\begin{gather}
\cC = \left\{ u \,|\, \exists \, v \in \overline{p',q'} \colon \dist_G(u,v)=\lceil \alpha \Delta\rceil \, \right\} 
\,\Rightarrow\,
\left| \,\cC \, \right| 
\leq
({\Delta}/{3})\,d^{\,\lfloor\alpha \Delta\rfloor}
< ({\Delta}/{3})\,d^{\,\alpha \Delta}
\label{eq:a1}
\end{gather}
Let $G_{-\cC}$ be the graph obtained from $G$ by removing the nodes in $\cC$.
Fact~\ref{fact-cylinder} implies:
\begin{gather}
\dist_{G_{-\cC}}(p,q) \geq 
({\Delta}/{60})\,2^{\,\alpha \Delta/\delta}
\label{eq:a2}
\end{gather}
Let $\cB_G(p,r)$ be the ball of radius $r$ centered at node $p$ in $G$ with
$\left| \,\cB_G(p,r)\, \right|\leq {n}/{2}$, and let
$
\have(p,j)
\,\,\stackrel{\mathrm{def}}{=}\,\,
{ \left( \sum_{\ell=0}^{j-1} \cB_{G}(p,\ell) \right) } / { j }
$.
Then, since 
$\left| \,\cB_G(p,0)\, \right|=1$ and
$\frac{\left| \cB_G(p,r)\right|}{\left| \cB_G(p,r-1)\right|} = 1+h_G\left(\cB_G(p,r-1)\right)$, we have 
\begin{multline}
\left| \,\cB_G(p,r)\, \right| =
\prod\limits_{j=0}^{r-1} \left( 1+h_G\left( \cB_G(p,j) \right) \right)
\\
\geq
\prod\limits_{j=0}^{r-1} \bee^{h_G\left( \cB_G(p,j) \right) /2}
=
\bee^{\sum\limits_{j=0}^{r-1}h_G \left( \cB_G(p,j) \right)/2}
=
\bee^{r\,\have(p,r)/2}
\label{eq-new5}
\end{multline}
Assume without loss of generality that\footnote{Note that if there is no path between nodes $p$ and $q$ in $G_{-\cC}$ then 
$\dist_{G_{-\cC}}(p,q)=\infty$ and
$\cB_{G_{-\cC}}\left(p,{\dist_{G_{-\cC}}(p,q)}/{2}\right)$ and 
$\cB_{G_{-\cC}}\left(q,{\dist_{G_{-\cC}}(p,q)}/{2}\right)$ contain all the nodes reachable from $p$ and $q$, respectively, 
in $G_{-\cC}$.}
\begin{gather}
\left| \, \cB_{G_{-\cC}}\left(p,{\dist_{G_{-\cC}}(p,q)}/{2}\right) \, \right|
\leq
\left| \, \cB_{G_{-\cC}}\left(q,{\dist_{G_{-\cC}}(p,q)}/{2}\right) \, \right|
\leq
\left( {n-|\cC|}\right) / {2}
<
\frac {n} {2}
\label{eq:less}
\end{gather}

\medskip
\noindent
{\bf Case 1}:
There is 
a set of $t$ distinct indices 
{
\def\OldComma{,}
    \catcode`\,=13
    \def,{%
      \ifmmode%
        \OldComma\discretionary{}{}{}%
      \else%
        \OldComma%
      \fi%
    }%
$\left\{ i_1,i_2,\dots,i_{t} \right\} \subseteq 
\{0,1,\dots,\dist_{G_{-\cC}}(p,q)/2 \}$
}
such that, 
$i_1<i_2<\dots<i_{t}$
and, for all $1\leq s\leq t$, 
$
h_G\left( \, \cB_G(p,i_s)\, \right) 
=
h_G\left( \, \cB_{G_{-\cC}}(p,i_s)\, \right) 
\leq \left({1}/{\Delta}\right)^{1-\mu}
$ 
(see \FI{p1-fig}$\!$(\emph{a})). 
Then, the subsets 
$\cB_G(p,i_1)\subset \cB_G(p,i_2)\subset\dots\subset \cB_G(p,i_t)$
satisfy our claim.

\begin{figure*}[htbp]
\centerline{\includegraphics{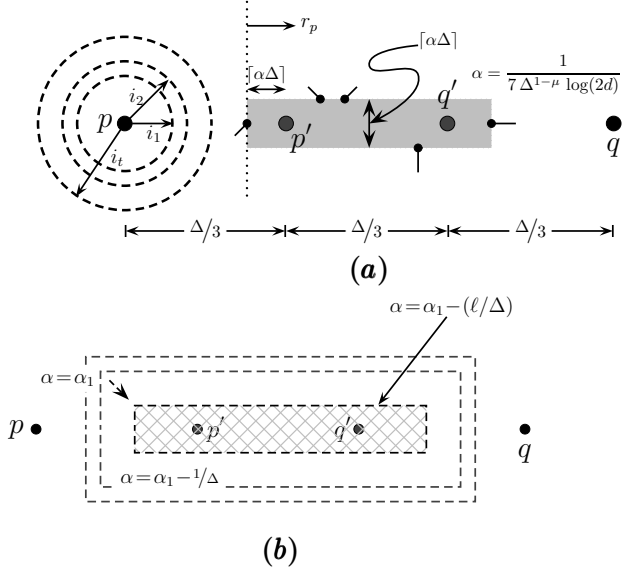}}
\caption{\label{p1-fig}Illustration of various cases in the proof of Theorem~\ref{main-nested}.
($a$) Case $1$. 
Nodes on the boundary of the lightly shaded region belong to 
$\cC_{\alpha_1 \Delta}$.
($b$) 
Case $2$.
Nodes on the boundary of the lightly cross-hatched region belong to 
$\cC_{\alpha_1 \Delta-\ell}$.
}
\end{figure*}

\medskip
\noindent
{\bf Case 2}: Case 1 does not hold. 
In this case, we have 
\begin{multline}
\hspace*{-0.5in}
\sum\limits_{\ell=0}^{ \hspace*{0.4in} ({\Delta}/{3})-\alpha \Delta-1}
\hspace*{-0.45in}
h_G\left( \cB_G(p,\ell) \right)
>
\left(
   \left( {\dist_{G_{-\cC}}(p,q)} / {2} \right) - (t-1)
\right)
\,
\left({1}/{\Delta}\right)^{1-\mu}
\\
>
\left(
    ({\Delta}/{3})-\alpha \Delta - t
\right)
\left({1}/{\Delta}\right)^{1-\mu}
>
{\Delta^\mu}/{4}
\label{smallball}
\end{multline}
Let $r_p$ be the {\em least} integer such that 
$
\cB_{G_{-\cC}}\left(p,r_p\right)
=
\cB_{G_{-\cC}}\left(p,r_p+1\right)
$.
Since $G$ is a connected graph and,  
for all $r\leq ({\Delta}/{3})-\alpha \Delta$ we have 
$\cB_G(p,r)\cap\cC=\emptyset\equiv \cB_{G_{-\cC}}(p,r)=\cB_G(p,r)$
we have $r_p\geq ({\Delta}/{3})-\alpha \Delta$ (see \FI{p1-fig}$\!$(\emph{a})).

\medskip
\noindent
{\bf Failure of the current strategy}

\smallskip
Note that it is possible that 
$r_p$ is precisely $({\Delta}/{3})-\alpha \Delta$ or 
not too much above it (this could happen when $p$ is disconnected from $q$ in $G_{-\cC}$).
\emph{Consequently, we may not be able to use our current technique of enlarging the ball
$\cB_{G_{-\cC}}\left(p,r\right)$ 
for $r$ beyond 
$({\Delta}/{3})-\alpha \Delta$ to get the required number of subsets of nodes as claimed in the theorem}.
A further complication arises because, for $r>({\Delta}/{3})-\alpha \Delta$,
expansion of the balls 
$\cB_{G_{-\cC}}\left(p,r\right)$ 
in $G_{-\cC}$ may differ from that in $G$, \IE, 
$h_G \left( \,\cB_{G_{-\cC}}\left(p,r\right)  \, \right)$ \emph{need not be the same as} 
$h_{G_{-\cC}} \left( \,\cB_{G_{-\cC}}\left(p,r\right)  \, \right)$.

\medskip
\noindent
{\bf Rectifying the current strategy}

\smallskip
We now change our strategy in the following manner.
Let us write $r_p$ as $r_{p,\alpha \Delta}$ to show its dependence on $\alpha \Delta$ and let
$\alpha_1=\frac{1}{14\,{\Delta}^{1-\mu}\,\log (2 d)}$.
Vary $\alpha$ from $\alpha=\alpha_1$ to $\alpha=\alpha_1/2$ in steps of $-1/\Delta$, and 
consider the sequence of values 
$r_{p,\,\alpha_1 \! \Delta},r_{p,\,\alpha_1\!  \Delta-1},\dots, 
r_{p,\,\nicefrac{\alpha_1 \Delta}{2}}$.
Let $\cC_{\alpha_1 \Delta-\ell}$ denote the set of nodes in $\cC$ when $\alpha$ is set equal to $\alpha_1-(\ell/\Delta)$
for $\ell=0,1,2,\dots,\alpha_1\Delta/2$ (see \FI{p1-fig}$\!$(\emph{b})). 
Consider the two sets of nodes $\cC_{\alpha_1 \Delta-\ell}$ and $\cC_{\alpha_1 \Delta-\ell'}$ with $\ell<\ell'$.
Obviously, $\cC_{\alpha_1 \Delta-\ell}\neq\cC_{\alpha_1 \Delta-\ell'}$ for any $\ell\neq \ell'$.

\medskip
\noindent
{\bf Case 2.1 (relatively easier case)}:
Removal of each of the set of nodes 
$\cC_{\alpha_1 \Delta}, \cC_{\alpha_1 \Delta-1}, \dots, \cC_{ \left({\alpha_1 \Delta}\right)/{2}}$
disconnects $p$ from $q$ in the corresponding graphs 
$G_{-\cC_{\alpha_1 \Delta} }, G_{-\cC_{\alpha_1 \Delta-1} }, \dots, G_{-\cC_{ \left({\alpha_1 \Delta}\right)/{2}} }$,
respectively.

\smallskip
Then, for any $0\leq\ell\leq \left({\alpha_1 \Delta}\right)/{2}$, we have 
\begin{gather*}
r_{p,\alpha_1 \Delta-\ell}\geq ({\Delta}/{3})-\alpha_1 \Delta+\ell \geq ({\Delta}/{3})-\alpha_1 \Delta
\\
\left| \cB_{G_{-\cC_{\alpha_1 \Delta -\ell} }} \! \left(p,r_{p,\alpha_1 \Delta -\ell} \right) \right|
\!
>
\!
\underset{ \hspace*{0.8in} \text{\small by \eqref{eq-new5} } }
{
\left| \cB_{G_{-\cC_{\alpha_1 \Delta } }} \!\! \left(\!p, \frac{\Delta}{3}-\alpha_1 \Delta\!\right) \right|
\!
\geq
\!
\bee^{\frac{1}{2}\hspace*{-0.55in}\sum\limits_{j=0}^{\hspace*{0.5in} ({\Delta}/{3})-\alpha_1 D -1} \hspace*{-0.5in}h_G \left( \cB_G(p,j) \right)}
}
\underset{ \hspace*{-0.3in} \text{\small by \eqref{smallball} } }
{
>
\bee^{ \nicefrac{\Delta^\mu}{8} }
}
\\
\left| \,\partial_G \left( \cB_{G_{-\cC_{\alpha_1 \Delta -\ell} }}\left(p,r_{p,\alpha_1 \Delta -\ell}\right)  \right) \, \right|
\leq
\left|\,\cC_{\alpha_1 \Delta -\ell}\,\right|
\leq
\underset{\hspace*{-0.2in} \text{\small by \eqref{eq:a1}} }
{
|\,\cC_{\alpha_1 \Delta }\,|
<
({\Delta}/{3})\,d^{\,\alpha_1 \Delta}
}
\end{gather*}
\begin{multline}
h_G \left( \,\cB_{G_{-\cC_{\alpha_1 \Delta -\ell} }}\left(p,r_{p, \alpha_1 \Delta -\ell } \right)  \, \right)
=
\frac
 {
   \left| \,\partial_G \left( \cB_{G_{-\cC_{\alpha_1 \Delta -\ell} }}\left(p,r_{p,\alpha_1 \Delta -\ell } \right)  \right) \, \right|
 }
 {
   \left| \,\cB_{G_{-\cC_{\alpha_1 \Delta -\ell} }}\left(p,r_{p, \alpha_1 \Delta -\ell }\right)  \, \right|
 }
\\
<
\frac 
  {
    ({\Delta}/{3})\,d^{\,\alpha_1 \Delta}
	}
	{
    \bee^{\nicefrac{\Delta^\mu}{8} }
	}
=
\frac 
  {
    ({\Delta}/{3})\,d^{\,\frac{\Delta^\mu}{14\,\log (2 d)} }
	}
	{
    \bee^{\nicefrac{\Delta^\mu}{8} }
	}
<
\frac 
  {
    ({\Delta}/{3})\,\left( d^{\nicefrac{1}{\log d}} \right) ^ {\,\Delta^\mu / 14}
	}
	{
    \bee^{\nicefrac{\Delta^\mu}{8} }
	}
\\
=
\frac 
  {
    ({\Delta}/{3})\, 2 ^ {\,\Delta^\mu / 14}
	}
	{
    \bee^{\nicefrac{\Delta^\mu}{8} }
	}
<
\frac 
  {
    {\Delta}/{3}  
	}
	{
    2 ^ { \Delta^\mu / 20 }
	}
<
\left( \frac{1}{\Delta} \right)^{1-\mu} \hspace*{-0.2in},
\text{\small since $\mu>0$ and $\Delta=\omega(1)$}
\label{easycase1}
\end{multline}
Inequality~\eqref{easycase1} implies that there is a set of 
$1+({\alpha_1 \Delta })/{2}=1+({\Delta^\mu})/({28\log(2d)})>{\Delta^\mu}/({56\log d})$
subsets of nodes 
$\cB_{G_{-\cC_{\alpha_1 \Delta} }}\left(p,r_{p, \alpha_1 \Delta }\right)\subset\cB_{G_{-\cC_{\alpha_1 \Delta -1} }}\left(p,r_{p, \alpha_1 \Delta -1 }\right)\subset
\dots\subset 
\cB_{G_{-\cC_{\nicefrac{\alpha_1 \Delta}{2} } }}\left(p,r_{p, \nicefrac{\alpha_1 \Delta}{2} }\right)$
such that each such subset 
$\cB_{G_{-\cC_{\alpha_1 \Delta -\ell} }}\left(p,r_{p, \alpha_1 \Delta -\ell }\right)$
has 
$h_G \left( \,\cB_{G_{-\cC_{\alpha_1 \Delta -\ell} }}\left(p,r_{p, \alpha_1 \Delta -\ell }\right)  \, \right)
< \left( \nicefrac{1}{\Delta} \right) ^{1-\mu}$. This proves our claim.

\medskip
\noindent
{\bf Case 2.2 (the difficult case)}:
Case $2.1$ does not hold.

\smallskip
This means that there exists an index $0\leq t\leq ({\alpha_1 \Delta})/{2}$ such that 
the removal of the set of nodes in 
$\cC_{\alpha_1 \Delta-t}$
does \emph{not} disconnect $p$ from $q$ in the corresponding graphs 
$G_{-\cC_{\alpha_1 \Delta-t} }$. This implies 
$r_{p,\alpha_1 \Delta-t}>\dist_{G_{-\cC_{\alpha_1 \Delta-t} } }(p,q)/2$.
For notational convenience, we will denote 
$\cC_{\alpha_1 \Delta-t}$ and 
$G_{-\cC_{\alpha_1 \Delta-t} }$ simply by 
$\cC$ and $G_{-\cC}$, respectively. We redefine
$\alpha_0=\alpha_1-({t}/{\Delta})$ such that
$\alpha_1 \Delta-t=\alpha_0 \,\Delta$. Note that ${\alpha_1}/{2}\leq\alpha_0\leq\alpha_1$.

\medskip
\noindent
{\bf First goal: 
show that our selection of $\alpha_0$ ensures
that removal of nodes in $\cC$ does not decrease the expansion of the balls
$\cB_{G_{-\cC}}(p,r)$
in the new graph $G_{-\cC}$
by more than a constant factor}.

\smallskip
First, note that the goal is trivially achieved if 
If $r\leq({\Delta}/{3})-\alpha_0\Delta$ since
for all $r \leq ({\Delta}/{3})-\alpha_0 \Delta$ we have 
$h_{G_{-\cC}}\left( \cB_{G_{-\cC}}(p,r) \right)=h_G\left( \cB_{G_{-\cC}}(p,r) \right)$.
Thus, assume that 
$r>({\Delta}/{3})-\alpha_0\Delta$.
To satisfy our goal, it {\em suffices} if we can show the following assertion: 
\begin{multline}
\forall\, ({\Delta}/{3})-\alpha_0 \Delta<r\leq \dist_{G_{-\cC}(p,q)}/2\,\, \colon
h_{G} \left( \cB_{G_{-\cC}}(p,r-1) \right)
> 
\left(\nicefrac{1}{\Delta}\right)^{1-\mu}
\,\,\Rightarrow
\\
h_{G_{-\cC}} \left( \cB_{G_{-\cC}}(p,r-1) \right) \,\geq\, 
h_{G} \left( \cB_{G_{-\cC}}(p,r-1) \right) /2 
\label{asser}
\end{multline}
We verify~\eqref{asser} as shown below. First, note that:
%
%
%
%
%
%
%
%
%
%
\begin{eqnarray}
& 
& 
h_{G_{-\cC}}\left( \cB_{G_{-\cC}}(p,r-1) \right)
\,\geq\,
h_{G} \left( \cB_{G_{-\cC}}(p,r-1) \right) / 2
\nonumber
\\
[2pt]
&
\equiv
&
\!
\frac{
\left| \,\partial_{G} \left( \cB_{G_{-\cC}}(p,r-1) \right) \, \right|
- \left| \, \partial_{G} \left( \cB_{G_{-\cC}}(p,r-1) \right) \cap\cC\, \right|
}
{
\left| \,\cB_{G_{-\cC}}(p,r-1)\, \right|
}
\geq
\frac { h_{G} \left( \cB_{G_{-\cC}}(p,r-1) \right) } {  2 }
\nonumber
\\
[2pt]
&
\Leftarrow
&
\frac{
\left| \,\partial_{G} \left( \cB_{G_{-\cC}}(p,r-1) \right) \, \right|
\,-\, |\cC|
}
{
\left| \,\cB_{G_{-\cC}}(p,r-1)\, \right|
}
\geq
h_{G} \left( \cB_{G_{-\cC}}(p,r-1) \right) / 2
\nonumber
\\
[2pt]
&
\equiv
&
\frac{
\left| \,\partial_{G} \left( \cB_{G_{-\cC}}(p,r-1) \right) \, \right|
}
{
\left| \,\cB_{G_{-\cC}}(p,r-1)\, \right|
}
\,-\,
\frac{
|\cC|
}
{
\left| \,\cB_{G_{-\cC}}(p,r-1)\, \right|
}
\geq
h_{G} \left( \cB_{G_{-\cC}}(p,r-1) \right) / 2
\nonumber
\\
[2pt]
& 
\equiv 
&
h_G\left( \cB_{G_{-\cC}}(p,r-1) \right)
\,-\,
\frac{
|\cC|
}
{
\left| \,\cB_{G_{-\cC}}(p,r-1)\, \right|
}
\geq
h_{G} \left( \cB_{G_{-\cC}}(p,r-1) \right) / 2
\nonumber
\\
[2pt]
& 
\equiv
& 
\frac{2\,|\cC|}{
h_{G} \left( \cB_{G_{-\cC}}(p,r-1) \right)
}
\leq
\left| \,\cB_{G_{-\cC}}(p,r-1)\, \right|
\nonumber
\\
[2pt]
& 
\Leftarrow
& 
\frac{2\, |\cC|}{
h_{G} \! \left( \cB_{G_{-\cC}}(p,r-1) \right)
}
\!
\leq
\!
\bee ^{\frac{\Delta^\mu}{8} },
\,
\text{since}
\left| \cB_{G_{-\cC}}(p,r-1) \right|
\!
\geq 
\!
\left| \cB_{G_{-\cC}}\! \left( \! p, \frac{\Delta}{3}-\alpha_0 \Delta \! \right) \right|
\nonumber
\\
&
&
\hspace*{2.0in}
=
\left| \,\cB_{G}\left( p, ({\Delta}/{3})-\alpha_0 \Delta \right)\, \right|
\nonumber
\\
&
&
\hspace*{2.0in}
\geq
\left| \,\cB_{G}\left( p, ({\Delta}/{3})-\alpha_1 \Delta \right)\, \right|
>
\bee ^{\nicefrac{\Delta^\mu}{8} }
\nonumber
\\
[2pt]
&
\Leftarrow
&
\left( ({\Delta}/{3})\,d^{\,\alpha_0 \Delta} \right)
\left( {2 } / \left({ h_{G} \left( \cB_{G_{-\cC}}(p,r-1) \right) } \right)
\right)
\leq
\bee ^{\nicefrac{\Delta^\mu}{8} },
\text{since}
\,
|\cC| \! <  \! ({\Delta}/{3})\, d^{\,\alpha_0 \Delta}
\nonumber
\\
[2pt]
&
\equiv
&
({\Delta^\mu}/{8})
\geq 
\ln \Delta + \alpha_0\,\Delta\ln d - \ln ({3}/{2}) - \ln \left( h_{G} \left( \cB_{G_{-\cC}}(p,r-1) \right) \right)  
\nonumber
\\
[2pt]
&
\Leftarrow
&
\frac{\Delta^\mu}{8}
\geq 
\ln \Delta + \alpha_1\,\Delta\ln d - \ln {\textstyle\frac{3}{2}} - \ln \left( h_{G} \left( \cB_{G_{-\cC}}(p,r-1) \right) \right),
\text{since}
\alpha_0 \leq \alpha_1
\nonumber
\\
[2pt]
&
\Leftarrow
&
\alpha_1 \leq 
\frac{
({\Delta^\mu}/{8})
- \ln \Delta + \ln \left( h_{G} \left( \cB_{G_{-\cC}}(p,r-1) \right) \right)
}
{ \Delta \ln d }
\label{eq1-new}
\end{eqnarray}
Now, if $h_{G} \left( \cB_{G_{-\cC}}(p,r-1) \right) >\left(\nicefrac{1}{\Delta}\right)^{1-\mu}$ then
since $\Delta=\omega(1)$ we have:  
\begin{gather*}
\frac{\Delta^\mu}{8} - \ln \Delta + \ln \left( h_{G} \left( \cB_{G_{-\cC}}(p,r-1) \right) \right)
>
{
\frac{\Delta^\mu}{8} - \ln \Delta -(1-\mu)\ln \Delta
>
({{\Delta}^\mu}/{7})
}
\end{gather*}
Thus, Inequality~\eqref{eq1-new} is satisfied by our selection
of $\alpha_1={1}/\left({14\,{\Delta}^{1-\mu}\,\log (2 d)}\right)$.
This verifies~\eqref{asser} and satisfies our first goal.

\medskip
\noindent
{\bf Second goal: Use the first goal and the fact that $\dist_{G_{-\cC}(p,q)}$ is large enough to find the desired subsets.}

\smallskip
First assume that there exists a set of $t=\max\left\{1,\,{D^\mu}/({56 \log d})\right\}$
indices 
$i_1<i_2<\dots<i_{t}$ in 
$\left\{ \frac{\Delta}{3} - \alpha_0 \Delta+1,\frac{\Delta}{3} - \alpha_0 \Delta+2,\dots,({\dist_{G_{-\cC}}(p,q)})/{2} \right\}$
such that 
\begin{equation}
\forall \, 1\leq s\leq t \, \colon \,\,
h_G\left( \, \cB_{G_{-\cC}}(p,i_s)\, \right) 
\leq \left(\nicefrac{1}{\Delta}\right)^{1-\mu}
\label{easy1}
\end{equation}
Obviously, the existence of these subsets
$\cB_{G_{-\cC}}(p,i_1)\subset\cB_{G_{-\cC}}(p,i_2)\subset\dots\subset\cB_{G_{-\cC}}(p,i_{t})$
proves our claim. 
Otherwise,
there are no sets of 
$t$ indices that satisfy~\eqref{easy1}.
This implies that there exists a set of 
$
\xi=
\left( 
{\dist_{G_{-\cC}}(p,q)}/{2}
\right)
-
\left( ({\Delta}/{3}) - \alpha_0 \Delta \right)
-
\left( t -1 \right)
$
distinct indices 
$j_1,j_2,\dots,j_{\xi}$ in 
\\
$\left\{ ({\Delta}/{3}) - \alpha_0 \Delta+1,({\Delta}/{3}) - \alpha_0 \Delta+2,\dots,({\dist_{G_{-\cC}}(p,q)})/{2} \right\}$
such that
\begin{multline}
\forall \, 1\leq s\leq \xi  \, \colon \,\,
h_G\left( \, \cB_{G_{-\cC}}(p,j_s)\, \right) 
> \left(\nicefrac{1}{\Delta}\right)^{1-\mu}
\,
\underset{\text{\small by \eqref{asser}} }
{
\Rightarrow
}
\\
\forall \, 1\leq s\leq \xi  \, \colon \,\,
h_{G_{-\cC}}\left( \, \cB_{G_{-\cC}}(p,j_s)\, \right) 
\geq 
h_G\left( \, \cB_{G_{-\cC}}(p,j_s)\, \right) /2
\label{kkkk}
\end{multline}
This in turn implies
\begin{eqnarray}
&
&
\left| \, \cB_{G_{-\cC}} \left( p, {\dist_{G_{-\cC}}(p,q)}/{2}  \right) \, \right|
\nonumber
\\
&
>
&
\left(
\hspace*{-0.5in}
\prod\limits_{j=0}^{\hspace*{0.5in}({\Delta}/{3})-\alpha_0 \Delta -1} 
     \hspace*{-0.55in} \left( 1+ h_G \left(  \cB_{G_{-\cC}} ( p, j ) \, \right) \,\right)
\right)
\underset{ \text{using \eqref{kkkk}} }
{
\left(
\hspace*{-0.7in}
\prod\limits_{\hspace*{0.6in}j=({\Delta}/{3})-\alpha_0 \Delta }^{\hspace*{0.7in}({\Delta}/{3})-\alpha_0 \Delta+\xi-1} 
     \hspace*{-0.7in} \left( 1+  \left( h_G \left(  \cB_{G_{-\cC}} ( p, j ) \, \right) /2 \right) \right)
\right)
}
\nonumber
\\
&
>
&
\left(
\hspace*{-0.5in}
\prod\limits_{j=0}^{\hspace*{0.5in}({\Delta}/{3})-\alpha_0 \Delta -1} 
     \hspace*{-0.55in} \bee^ { h_G (  \cB_{G_{-\cC}} ( p, j ) \, )\,/\,2 }
\right)
\left(
\hspace*{-0.7in}
\prod\limits_{\hspace*{0.6in}j=({\Delta}/{3})-\alpha_0 \Delta }^{\hspace*{0.7in}({\Delta}/{3})-\alpha_0 \Delta+\xi-1} 
     \hspace*{-0.7in} \bee^ { h_G (  \cB_{G_{-\cC}} ( p, j ) \, ) \,/\,4 }
\right)
\nonumber
\\
[2pt]
&
=
&
\Bigg(
 \bee^ { \hspace*{-0.45in} \sum\limits_{j=0}^{\hspace*{0.5in} ({\Delta}/{3})-\alpha_0 \Delta -1} \hspace*{-0.55in} h_G (  \cB_{G_{-\cC}} ( p, j ) \, \,)/2 }
\Bigg)
 \,\,
\Bigg(
 \bee^ { \hspace*{-0.6in} \sum\limits_{\hspace*{0.5in} j=({\Delta}/{3})-\alpha_0 \Delta}^{\hspace*{0.6in} ({\Delta}/{3})-\alpha_0 \Delta+\xi-1} \hspace*{-0.65in} h_G (  \cB_{G_{-\cC}} ( p, j ) \, ) / 4 }
\Bigg)
>
 \bee^ { \hspace*{-0.6in} \sum\limits_{j=0}^{\hspace*{0.6in} ({\Delta}/{3})-\alpha_0 \Delta+\xi-1} \hspace*{-0.65in} h_G (  \cB_{G_{-\cC}} ( p, j ) \, ) /4 }
\label{sum-bound}
\end{eqnarray}
Using \eqref{sum-bound} and our specific choice of the node $p$ (over node $q$), we have 
\begin{multline}
{n}/{2} > 
\left| \, \cB_{G_{-\cC}}\left(p,{\dist_{G_{-\cC}}(p,q)}/{2}\right) \, \right|
>
\bee^ { \hspace*{-0.6in} \sum\limits_{j=0}^{\hspace*{0.6in} ({\Delta}/{3})-\alpha_0 \Delta+\xi-1} \hspace*{-0.65in} h_G (  \cB_{G_{-\cC}} ( p, j ) \, ) /4 }
\Rightarrow
\\
 \hspace*{-0.7in}
 \sum\limits_{j=0}^{\hspace*{0.6in} ({\Delta}/{3})-\alpha_0 \Delta+\xi-1} \hspace*{-0.7in} h_G (  \cB_{G_{-\cC}} ( p, j ) \, )
< 4 \ln n
\label{sum-bound-2}
\end{multline}
We now claim 
that there \emph{must} exist a set of $t={\Delta^\mu}/({56 \log d})$
distinct indices 
$i_1<i_2<\dots<i_{t}$ in
$\left\{ 0,1,\dots,({\Delta}/{3})-\alpha_0 \Delta+\xi-1 \right\}$
such that 
\begin{equation}
\forall \, 1\leq s\leq t \, \colon \,\,
h_G\left( \, \cB_{G_{-\cC}}(p,i_s)\, \right) 
\leq 
({500\ln n})/\left({\Delta\,2^{ \, {\Delta^{\mu}}/ ( {28 \delta \log(2d)} ) } } \right)
\label{easy2}
\end{equation}
The existence of these indices will obviously prove our claim. Suppose, for the sake of contradiction, 
that this is \emph{not} the case.
Together with \eqref{sum-bound-2} this implies:
\begin{eqnarray}
& & 
\hspace*{-0.05in}
4 \ln n 
 \,>\,\,\,
 \hspace*{-0.7in}
 \sum\limits_{j=0}^{\hspace*{0.6in} ({\Delta}/{3})-\alpha_0 \Delta+\xi-1} \hspace*{-0.6in} h_G (  \cB_{G_{-\cC}} ( p, j ) \, )
\nonumber
\\
& & 
   \underset{\hspace*{0.15in}\text{\footnotesize by \eqref{easy2}} }
   {
     \hspace*{0.15in} > 
   }
\!\!\!
 \left( \frac{\Delta}{3}-\alpha_0 \Delta+\xi - \frac{\Delta^\mu}{56 \log d} +1 \right)
    \left( ({500\ln n}) / \left( {\Delta\,2^{ \, {\Delta^{\mu}} / ( {28 \delta \log(2d)} ) \, } } \right) \, \right)
\nonumber
\\
[2pt]
& & 
\hspace*{-0.05in}
\Rightarrow 
 \left( 
{\textstyle\frac{\dist_{G_{-\cC}}(p,q)}{2}}
 - \max \left\{ 1, \, {\textstyle \frac {\Delta^\mu} {28 \log d} } \right\} \right)
    \left( ({500\ln n}) / \left( {\Delta\,2^{ \, \frac {\Delta^{\mu}} {28 \delta \log(2d)} \, } } \right) \, \right)
 < 
 4 \ln n,
\nonumber
\\
& & 
\hspace*{2.75in}
\text{\small substituting the values of $t$ and $\xi$}
\nonumber
\\
[2pt]
& & 
\hspace*{-0.05in}
\Rightarrow
 \left( 
  \frac {\Delta}{120}\,2^{ \, \frac {\alpha_1 \Delta} {2\delta} }
 - 
  \max \left\{ 1, \, \frac {\Delta^\mu} {28 \log d} \right\} \,\right)
    \left( ({500\ln n}) / \left( {\Delta\,2^{ \, \frac {\Delta^{\mu}} {28 \delta \log(2d)} \, } } \right) \, \right)
 < 
 4 \ln n,
\nonumber
\\
& & 
\hspace*{3in}
\text{\small by \eqref{eq:a2} and since ${\alpha_1}/2\leq\alpha_0$}
\nonumber
\\
[2pt]
& & 
\hspace*{-0.05in}
\equiv
 \left( 
  \frac {\Delta}{120}\,2^{ \, \frac {\Delta^{\mu}} {28 \delta \log(2d)} }
 - 
  \max \left\{ 1, \, \frac {\Delta^\mu} {28 \log d} \right\} \,\right)
    \left( 125 / \left( {\Delta\,2^{ \, {\Delta^{\mu}} / ( {28 \delta \log(2d)} ) \, } } \right) \, \right)
 < 
 1 
\nonumber
\\
[2pt]
& & 
\hspace*{-0.05in}
\Rightarrow
 \left( 
  ({\Delta}/{121})\,2^{ \, {\Delta^{\mu}} / ( {28 \delta \log(2d)} ) }
  \right)
    \left( 125 / \left( {\Delta\,2^{ \, {\Delta^{\mu}} / ( {28 \delta \log(2d)} ) \, } } \right) \, \right)
 < 
 1 
\nonumber
\\
[2pt]
& & 
\hspace*{-0.05in}
\equiv
\nicefrac{125}{121} < 1, 
\,\,
\text{\small since $\Delta=\omega(1)$}
\label{final}
\end{eqnarray}
Since~\eqref{final} is false,
there 
must exist a set of $t$
distinct indices 
$i_1<i_2<\dots<i_t$
such that~\eqref{easy2} holds and the corresponding sets 
$\cB_{G_{-\cC}}\left(p,i_1\right)\subset\cB_{G_{-\cC}}\left(p,i_2\right)\subset\dots\subset\cB_{G_{-\cC}}\left(p,i_t\right)$
prove our claim.

\bigskip
\noindent
(III)
{\bf Time complexity for finding each witness}

\medskip
It should be clear that we can find each witness provided we can implement the 
following steps:
\begin{itemize}[leftmargin=*]
\item
Find two nodes $p$ and $q$ such that $\dist_G(p,q)=\Delta$ in $O\left(n^2\log n + mn\right)$ time.
\item
Using breadth-first-search (BFS), find the two nodes $p',q'$ as in the proof in $O(m+n)$ time.
\item
There are at most $\alpha_1\Delta/2=\Delta^\mu/(28 \, \log(2d)\,)<n$ possible values of $\alpha$ 
considered in the proof. For each $\alpha$, the following steps are needed: 
\begin{itemize}
\item
Use BFS find the set of nodes $\cC$ in $O\left(n^2+mn\right)$ time.
\item
Compute $G_{-\cC}$ in $O(m+n)$ time.
\item
Use BFS to compute $\cB_{G_{-\cC}}(p,r)$ for every $0\leq r\leq \dist_{G_{-\cC}}(p,q)/2$ in $O(m+n)$ time.
\item
Compute $h_G\left(\cB_{G_{-\cC}}(p,r)\right)$ for every $0\leq r\leq \dist_{G_{-\cC}}(p,q)/2$ 
in $O(n^2+mn)$ time, and select a subset of nodes with a minimum expansion.
\end{itemize}
\end{itemize}
%


\subsection{Family of Witnesses of Node/Edge Expansion With Limited Mutual Overlaps}

The result in the previous section provided a nested family of cuts of small expansion that separated node $p$ from node $q$.
However, pairs of subsets in this family may differ by as few as just {\em one} node.
In some applications, one may need to generate a family of cuts that are sufficiently different from each other, \IE,
they are either disjoint or have limited overlap.
The following theorem addresses this question.

\begin{theorem}
\label{main3}
Let $p$ and $q$ be any two nodes of $G$ and let $\Delta=\dist_G(p,q)>8$.
Then, for any constant $0<\mu<1$ and for any positive integer 
$\tau < {\Delta} / \left( { \big( 42\,\delta \, \log (2d) \, \log ( 2 \Delta ) \,\big)^{{1}/{\mu}} } \right)$
the following results hold for $\langle G,d,\delta\rangle$: 
there exists 
$\left\lfloor{\tau}/{4}\right\rfloor$ 
distinct collections of subsets of nodes 
$\emptyset\subset \cF_1,\cF_2,\dots,\cF_{\left\lfloor{\tau}/{4}\right\rfloor}\subset 2^V$
such that
\begin{enumerate}[label=$\blacktriangleright$]
\item
$\displaystyle
\forall\, j\in \left\{1,\dots, \left\lfloor {\tau}/{4}\right\rfloor \right\} \,
\forall\, S\in \cF_j
\,\colon
$

\hfill
$\displaystyle
h_G\left(S\right) 
\leq 
\max\left\{ \left(\frac{1}{(\Delta/\tau)}\right)^{1-\mu},\,
\dfrac { 360 \log n }{ (\Delta/\tau)\,2^{\,   \frac{ (\Delta/\tau)^{\mu}} {7\,\delta\,\log(2d) }    } }
\, \right\}
$.
\smallskip
\item
Each collection $\cF_j$ has at least 
$t_j=\max\left\{ \frac{(\Delta/\tau)^\mu}{56\log d},\,1 \right\}$ subsets
$V_{j,1},\dots,V_{j,t_j}$ that form a nested family, {\em\IE}, 
$\emptyset \subset V_{j,1}\subset V_{j,2}\subset\dots\subset V_{j,t_j}\subset V$.
\smallskip
\item
All the subsets in each $\cF_j$ can be found in a total of $O\left(n^3\log n + mn^2\right)$ time.
\smallskip
\item
{\rm (limited overlap claim)}
For every pair of subsets $V_{i,k}\in\cF_i$ and $V_{j,k'}\in\cF_j$ with $i\neq j$, 
either $V_{i,k}\cap V_{j,k'}=\emptyset$ or 
at least 
${\Delta}/({2\,\tau})$
nodes in each subset do not belong to the other subset.
\end{enumerate}
\end{theorem}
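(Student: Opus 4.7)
The plan is to reduce to Theorem~\ref{main-nested} by applying (or re-running) its proof on $\left\lfloor\tau/4\right\rfloor$ well-separated pairs of nodes along the geodesic $\overline{p,q}$. First I would select $\tau+1$ equally spaced nodes $z_0=p,z_1,\ldots,z_\tau=q$ on $\overline{p,q}$ with $\dist_G(z_{i-1},z_i)=\Delta/\tau$, and form the pairs $(a_j,b_j)=(z_{4(j-1)+1},z_{4(j-1)+2})$ for $j=1,\ldots,\left\lfloor\tau/4\right\rfloor$. Within each pair $\dist_G(a_j,b_j)=\Delta/\tau$; across pairs, any two selected endpoints $c_i\in\{a_i,b_i\}$ and $c_j\in\{a_j,b_j\}$ with $i\neq j$ satisfy $\dist_G(c_i,c_j)\geq 3\Delta/\tau$. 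The hypothesis $\tau<\Delta/\bigl(42\,\delta\log(2d)\log(2\Delta)\bigr)^{1/\mu}$ ensures that $\Delta/\tau=\omega(1)$, which is the regime in which the analysis of Theorem~\ref{main-nested} yields meaningful witnesses.

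For each $j$ I would then invoke (a mild recalibration of) the proof of Theorem~\ref{main-nested} on the pair $(a_j,b_j)$ with the same constant $\mu$, obtaining in $O(n^3\log n+mn^2)$ time a nested family $V_{j,1}\subset\cdots\subset V_{j,t_j}$ with $t_j\geq\max\{(\Delta/\tau)^\mu/(56\log d),\,1\}$, all members containing a common endpoint $c_j\in\{a_j,b_j\}$, and each satisfying the expansion bound of that theorem with $\Delta$ replaced by $\Delta/\tau$. The tighter constants stated here (exponent factor $\tfrac{1}{7}$ versus $\tfrac{1}{28}$ in Theorem~\ref{main-nested}) would come from revisiting the cylinder-removal step in that proof with a slightly larger value of the parameter $\alpha$, which is affordable because the stronger hypothesis on $\tau$ supplies the extra $\log(2\Delta)$ factor of slack. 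Items~1--3 of the theorem follow.

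The main obstacle is the limited-overlap claim (item~4). My approach would be to exploit the structural fact, visible from the proof of Theorem~\ref{main-nested}, that every $V_{j,k}$ is a ball $\cB_{H_j}(c_j,r)$ in some subgraph $H_j\subseteq G$ (either $G$ itself or a cylinder-removed $G_{-\cC_j}$), so that $V_{j,k}\subseteq\cB_G(c_j,r)$. Given two overlapping witnesses $V_{i,k}$ and $V_{j,k'}$ with $i<j$, I would certify the required $\Delta/(2\tau)$ private nodes in $V_{i,k}$ using the sub-geodesic of $\overline{p,q}$ emanating from $c_i$ away from $c_j$: a node $u$ on this sub-geodesic at graph distance $\ell$ from $c_i$ has $\dist_G(u,c_j)=3\Delta/\tau+\ell$, which excludes $u$ from $V_{j,k'}$ whenever $r_j$ is not too large, while $u\in V_{i,k}$ holds whenever $\ell\leq r_i$ because this $p$-side sub-geodesic lies outside $\cC_i$. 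For witnesses coming from Case~1 or Case~2.1 of the Theorem~\ref{main-nested} proof the radii satisfy $r\leq\Delta/(2\tau)$, so two such witnesses are automatically disjoint since $\dist_G(c_i,c_j)\geq 3\Delta/\tau>r_i+r_j$; for witnesses coming from Case~2.2 the radius inside $G_{-\cC_j}$ may be much larger, and here I would invoke Fact~\ref{fact-expo} to show that such a ball cannot reach the critical private-node window on the $p$-side of $c_i$ without traversing a path of length exponential in $(\Delta/\tau)/\delta$, a cost the admissible Case~2.2 radii cannot meet under the stated upper bound on $\tau$. A symmetric argument on the $q$-side of $c_j$ yields the complementary private nodes for $V_{j,k'}$. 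The hardest technical point I anticipate is handling the asymmetric pairs in which one witness is of small radius and the other of exponentially large radius, where the private-node certificate on the ``smaller'' side requires the quantitatively sharp use of Fact~\ref{fact-expo}---this is precisely where the extra $\log(2\Delta)$ factor in the hypothesis on $\tau$ is consumed.
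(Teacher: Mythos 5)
Your skeleton for items 1--3 matches the paper's: subdivide $\overline{p,q}$ into sub-pairs at mutual distance $\Theta(\Delta/\tau)$, apply Theorem~\ref{main-nested} to each, and keep $\lfloor\tau/4\rfloor$ of the resulting families. (The paper keeps \emph{every} consecutive pair and then pigeonholes by parity of the index \emph{and} by ``handedness'' of the family, i.e.\ which endpoint the balls are centered at; your skip-by-four selection replaces the parity part, but you still need the handedness split, because your certificate ``the $p$-side sub-geodesic lies outside $\cC_i$'' fails when the witness is centered at the right endpoint $b_i$: the removed set $\cC_{i,k}$ contains nodes of $\overline{a_i,b_i}$ itself, at exact distance $\lceil\alpha_{i,k}\dist_G(a_i,b_i)\rceil$ from $\overline{\ell_i,r_i}$.) Also, the improved constants in the statement do not come from re-running the cylinder-removal step with a larger $\alpha$; the paper simply quotes Theorem~\ref{main-nested} with $\Delta$ replaced by $\Delta/\tau$ and is loose with constants.

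The genuine gap is in item~4, exactly where you anticipated difficulty. First, your claim that Case~2.1 witnesses have radius at most $\Delta/(2\tau)$ is false: a Case~2.1 witness is the \emph{entire connected component} of its center in the cylinder-removed graph, and can contain every node reachable without crossing the cylinder, in particular all of $\overline{p,a_i}$. Second, and more seriously, Fact~\ref{fact-expo} cannot rescue the hard case. Take $c_j=a_j$ with $i<j$ and a Case~2.2 witness $V_{j,k'}=\cB_{G_{-\cC_{j,k'}}}(a_j,a_{j,k'})$ with $a_{j,k'}$ large. The sub-geodesic of $\overline{p,q}$ from $a_j$ back toward $p$ moves monotonically away from $\overline{\ell_j,r_j}$ and therefore never meets $\cC_{j,k'}$; hence every node $u$ in your proposed private window for $V_{i,k}$ satisfies $\dist_{G_{-\cC_{j,k'}}}(a_j,u)=\dist_G(a_j,u)=O(\Delta/\tau)$, so $V_{j,k'}$ absorbs the whole window along the main geodesic, and there is no exponentially long detour for Fact~\ref{fact-expo} to forbid. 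The paper's mechanism is different, and it is where the hypothesis on $\tau$ is actually consumed (not in recalibrating $1/28$ to $1/7$): the private nodes of each witness are taken to be the small ball $\cB_{G_{-\cC_{j,k'}}}\left(p_j,\Delta/(2\tau)\right)$ around that witness's \emph{own center}, and Propositions~\ref{p1} and~\ref{p2} show that no node of the other witness can enter this ball. Proposition~\ref{p2} is proved by contradiction between the exponential lower bound $\dist_{G_{-\cC_{i,k}}}(p_i,p_{i+1})\geq(\Delta/(60\tau))\,2^{\alpha_{i,k}\Delta/(\delta\tau)}$ supplied by Fact~\ref{fact-cylinder} and the linear upper bound $\dist_{G_{-\cC_{i,k}}}(p_i,p_{i+1})\leq\Delta/\tau+2\Delta$ that would follow from a common node lying within $\Delta/(2\tau)$ of $p_j$; the stated bound on $\tau$ is calibrated precisely so that these two estimates are incompatible. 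You never bring Fact~\ref{fact-cylinder} into the overlap argument and never use the quantitative form of the hypothesis on $\tau$, so as written the limited-overlap claim does not follow from your plan.
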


\begin{remark}
\label{rem4}
Consider a bounded-degree hyperbolic graph, \IE, assume that $\delta$ and $d$ are constants.
Setting $\tau={\Delta^{1/2}}$ gives $\Omega( {\Delta^{1/2}} )$ nested families 
of subsets of nodes, with each family having at least $\Omega( {\Delta^{1/2}} )$ subsets each of maximum node
expansion $\left( {1}/{\Delta} \right)^{ ({1-\mu})/{2}}$, such that every pairwise non-disjoint subsets
from different families have at least $\Omega( {\Delta^{1/2}} )$ private nodes.
\end{remark}

\begin{proof}
Select $\tau\leq \Delta/4$ such that $\tau$ satisfies the following:
\begin{gather}
{\Delta} / ( {60\,\tau} ) \,2^{\, ( \, ( \Delta / \tau )^\mu \,  ) / ({28\,\delta \, \log (2d)})  }
>
( {\Delta} / {\tau} ) + 2 \, \Delta
\label{tau-bound:eq}
\end{gather}
Note that 
$\tau \geq { \left( 42\,\delta \, \log (2d) \, \log ( 2 \Delta ) \,\right)^{{1}/{\mu}} } / {\Delta}$
satisfies~\eqref{tau-bound:eq} since 
\begin{multline*}
{\Delta} / ( {60\,\tau} ) \,2^{\, ( \, ( \Delta / \tau )^\mu \,  ) / ({28\,\delta \, \log (2d)})  }
>
({\Delta}/{\tau}) + 2 \,\Delta
\\
\Rightarrow\,
(\Delta / \tau )^\mu
>
28\,\delta \, \log (2d) \, \log ( 60+120\,\tau )
\underset{ \hspace*{-1.2in} \text{\small since $\tau < {\Delta}/{4}$} }
{
>
168\,\delta \, \log (2d) \, \log ( 2 \Delta )
}
\\
\,\Rightarrow \,
\tau < 
{\Delta}
/
\left(
{
\big( 42\,\delta \, \log (2d) \, \log ( 2 \Delta ) \,\big)^{{1}/{\mu}}
}
\right)
\end{multline*}
Let $\left(p=p_1,p_2,\dots,p_{\tau+1}=q\right)$ be an ordered sequence of $\tau+1$ nodes 
such that $\dist_G\left(p_i,p_{i+1}\right)={\Delta}/{\tau}$ for $i=1,2,\dots,\tau$.
Applying Theorem~\ref{main-nested} for each pair $\left(p_i,p_{i+1}\right)$,
we get 
a nested family  
$\emptyset\subset \cF_i \subset 2^V$
of subsets of nodes 
such that 
$t_i=|\,\cF_i\,| \geq \max\left\{ \frac{(\Delta/\tau)^\mu}{56\log d},\,1 \right\}$ and, for any 
$V_{i,k}\in\cF_i$, 
$
h_G\left(V_{i,k}\right) 
\leq 
\max\left\{ \left( {1}/ ({\Delta/\tau )}\right)^{1-\mu},\,
{ ( 360 \log n ) } / \left( { (\Delta/\tau)\,2^{\,    { (\Delta/\tau)^{\mu}} / \left( {7\,\delta\,\log(2d) } \right)   } } \right)
\, \right\}
$.
Recall that 
the subset of nodes $V_{i,k}$
was constructed in Theorem~\ref{main-nested} in the following manner
(see \FI{p2-fig} for an illustration):
\begin{itemize}
\item
Let $\ell_i$ and $r_i$ be two
nodes on a shortest path $\overline{p_i,p_{i+1}}$ such that 
$
\dist_G\left(p_i,\ell_i\right)
=
\dist_G\left(\ell_i,r_i\right)
=
\dist_G\left(r_i,p_{i+1}\right)
=
\dist_G\left(p_i,p_{i+1}\right) / 3
$.
\smallskip
\item
For some 
$
{1} / \left( {28\,(\Delta/\tau)^{1-\mu}\log (2d) } \right)
\leq 
\alpha_{i,k} 
\leq 
{1} / \left( {14\,(\Delta/\tau)^{1-\mu}\log (2d) } \right)
<1/4
$, 
construct the graph
$G_{-\cC_{i,k}}$ obtained
by removing the set of nodes 
$\cC_{i,k}$ which are \emph{exactly} at a distance of
$\lceil \alpha_{i,k}\,\dist_G\left(p_i,p_{i+1}\right) \rceil$
from some node of the shortest path $\overline{\ell_i,r_i}$. 
\smallskip
\item
The subset
$V_{i,k}$ is then the ball
$
\cB_{G_{-\cC_{i,k}}}\left(y_i,a_{i,k}\right)
$
for some
\\
$
a_{i,k}\in \big[0,\,{\dist_{G_{-\cC_{i,k}}}\left(p_i,p_{i+1}\right)}/{2} \big]
$
and for some $y_i\in\left\{p_i,p_{i+1}\right\}$.
If $y_i=p_i$ then we call the collection of subsets $\cF_i$ ``left handed'', otherwise we call $\cF_i$ ``right handed''.
\end{itemize}
We can partition the set of $\tau$ collections $\cF_1,\dots,\cF_{\tau}$
into four groups depending on whether the subscript $j$ of $\cF_j$ is odd or even, and 
whether $\cF_j$ is left handed or right handed. One of these $4$ groups must
at least  $\left\lfloor{\tau}/{4}\right\rfloor$ family of subsets.
Suppose, without loss of generality, that this happens for the collection of families
that contains $\cF_{i,k}$ when $i$ is even and $\cF_{i,k}$ is left handed (the other cases are similar).
We now show that subsets in this collection that belong to different families
do satisfy the \emph{limited overlap} claim.

\begin{figure*}[htbp]
\includegraphics{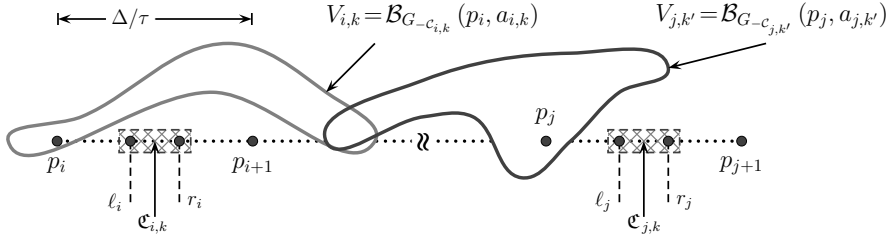}
\caption{\label{p2-fig}Illustration of various quantities related to the proof of Theorem~\ref{main3}.
Nodes within the lightly cross-hatched region belong to 
$\mathfrak{C}_{i,k}$
and
$\mathfrak{C}_{j,k'}$.
Note that 
$\cB_{G_{-\cC_{i,k}}}\left(p_i,a_{i,k}\right)$ and $\cB_{G_{-\cC_{j,k'}}}\left(p_j,a_{j,k'}\right)$
need not be balls in the original graph $G$.
}
\end{figure*}

Consider an arbitrary set in the above-mentioned collection of the form 
$V_{i,k} = \cB_{G_{-\cC_{i,k}}}\left(p_i,a_{i,k}\right)$
with even $i$. 
Let $\mathfrak{C}_{i,k}$ denote the nodes in the \emph{interior} of the closed cylinder of nodes in $G$ 
which are at a distance of \emph{at most}
$\left\lceil\alpha_{i,k}\,\dist_G\left(p_i,p_{i+1}\right)\right\rceil$
from some node of the shortest path $\overline{\ell_i,r_i}$, \IE,
let
$
\mathfrak{C}_{i,k} = \left\{ u \,|\, \exists \, v \in \overline{\ell_i,r_i} 
\colon 
\dist_G(u,v)\leq \left\lceil \alpha_{i,k}\,\dist_G\left(p_i,p_{i+1}\right) \right\rceil \, \right\} 
$
(see \FI{p2-fig}).
Let 
$V_{j,k'} = \cB_{G_{-\cC_{j,k'}}}\left(p_j,a_{j,k'}\right)$
be a set in another family $\cF_j$ with even $j\neq i$ (see \FI{p2-fig}).
Assume, without loss of generality, that $i$ is smaller than $j$, \IE, 
$i\leq j-2$ (the other case is similar).

\begin{proposition}
\label{p1}
$\mathfrak{C}_{i,k} \cap \cB_{G_{-\cC_{j,k'}}}\left(\,p_j, {\Delta}/ ({2\,\tau}) \,\right) =\emptyset$. 
\end{proposition}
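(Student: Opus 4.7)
The plan is to reduce the claim to a metric inequality in the ambient graph $G$: since passing from $G$ to $G_{-\cC_{j,k'}}$ only deletes nodes, shortest-path distances can only grow (or become infinite). It therefore suffices to show that $\dist_G(p_j, u) > \Delta/(2\tau)$ for every $u \in \mathfrak{C}_{i,k}$. This will immediately place $u$ outside $\cB_{G_{-\cC_{j,k'}}}(p_j, \Delta/(2\tau))$ whenever $u$ is not already removed, and the removed case is trivial since then $u$ is not even a node of $G_{-\cC_{j,k'}}$.

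Unpacking the definition of $\mathfrak{C}_{i,k}$, any such $u$ is at distance at most $\lceil \alpha_{i,k}(\Delta/\tau)\rceil$ in $G$ from some node $v$ on the middle-third subpath $\overline{\ell_i, r_i}$ of $\overline{p_i, p_{i+1}}$; in particular $\dist_G(p_i, v) \leq 2(\Delta/\tau)/3$. Because $p_1, \ldots, p_{\tau+1}$ are equally spaced along the shortest path $\overline{p,q}$, we have $\dist_G(p_i, p_j) = (j-i)(\Delta/\tau)$, and the parity restriction defining the chosen collection, together with the WLOG assumption $i < j$ and $i \neq j$, forces $j \geq i+2$; hence $\dist_G(p_i, p_j) \geq 2(\Delta/\tau)$.

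Two applications of the triangle inequality then yield
\[
\dist_G(p_j, u) \,\geq\, \dist_G(p_i, p_j) - \dist_G(p_i, v) - \dist_G(v, u) \,\geq\, \frac{4(\Delta/\tau)}{3} - \lceil \alpha_{i,k}(\Delta/\tau)\rceil.
\]
Since $\alpha_{i,k} \leq 1/\bigl(14(\Delta/\tau)^{1-\mu}\log(2d)\bigr)$, the correction $\alpha_{i,k}(\Delta/\tau) \leq (\Delta/\tau)^{\mu}/(14\log(2d))$ is $o(\Delta/\tau)$, so for sufficiently large $\Delta/\tau$ (guaranteed by the hypothesis $\tau < \Delta/\bigl(42\delta\log(2d)\log(2\Delta)\bigr)^{1/\mu}$) the right-hand side strictly exceeds $\Delta/(2\tau)$, completing the argument.

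I do not anticipate any genuine obstacle here, since the argument is essentially a single chain of triangle inequalities. The one subtle ingredient is the construction itself: the ``parity plus handedness'' grouping was precisely designed so that distinct base points $p_i$ and $p_j$ in the same group are at least $2(\Delta/\tau)$ apart along $\overline{p,q}$, which in turn is what guarantees the required separation between the thin cylinder $\mathfrak{C}_{i,k}$ (whose radius is only $o(\Delta/\tau)$) and the ball of radius $\Delta/(2\tau)$ around $p_j$.
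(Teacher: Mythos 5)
Your proposal is correct and follows essentially the same route as the paper's proof: both reduce to a distance bound in $G$ via monotonicity of distances under node deletion, and both combine the small cylinder radius $\lceil\alpha_{i,k}\,\dist_G(p_i,p_{i+1})\rceil<\Delta/(4\tau)$ with the spacing $\dist_G(p_i,p_j)\geq 2\Delta/\tau$ (from $j\geq i+2$) through a chain of triangle inequalities. The paper merely phrases it as a contradiction about $\dist_G(v,p_j)$ rather than a direct lower bound on $\dist_G(p_j,u)$, which is an equivalent rearrangement.
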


\begin{proof}
Assume for the sake of contradiction that 
$u\in\mathfrak{C}_{i,k} \,\cap \,\cB_{G_{-\cC_{j,k'}}}\left(\,p_j, {\textstyle \frac {\Delta} {2\,\tau} } \,\right)$.
Since 
$u\in \mathfrak{C}_{i,k}$, there is a node $v\in\overline{\ell_i,r_i}$ such that 
\[
\dist_G(v,u)\leq \lceil\alpha_{i,k}\,\dist_G\left(p_i,p_{i+1}\right)\rceil
<
\dist_G\left(p_i,p_{i+1}\right)/4
=
{\Delta} / ({4\,\tau})
\]
Thus, 
\begin{multline*}
u\in \cB_{G_{-\cC_{j,k'} }}\left(p_j, {\textstyle \frac{\Delta} {2\,\tau} } \right)
\,\Rightarrow\,
\dist_{G_{-\cC_{j,k'}}} \left(u,p_j \right) \leq {\textstyle \frac {\Delta} {2\,\tau} }
\,\Rightarrow\,
\dist_{G} \left(u,p_j \right) \leq {\textstyle \frac {\Delta} {2\,\tau} }
\\
\Rightarrow \,
\dist_G\left(v,p_j\right) \leq 
\dist_G\left(v,u\right) + \dist_G\left(u,p_j\right)
<
{\Delta} / ({4\,\tau}) + {\Delta} / ({2\,\tau} )
<
{\Delta} / {\tau}
\end{multline*}
which contradicts the fact that
$
\dist_G\left(v,p_j\right)>
\dist_G\left(p_{i+1},p_j\right)
= {\Delta} / {\tau}
$.
\end{proof}

\begin{proposition}
\label{p2}
$\dist_{ \,G_{-\cC_{j,k'}} }\left(u,p_j\right)>{\Delta} / ({2\,\tau})$ for 
any node 
$
u\in V_{i,k}\cap V_{j,k'}
=
\cB_{G_{-\cC_{i,k}}}\left(p_i,a_{i,k}\right)
\cap
\cB_{G_{-\cC_{j,k'}}}\left(p_j,a_{j,k'}\right)
$.
\end{proposition}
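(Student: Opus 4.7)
The plan is to argue by contradiction, broadly mirroring the strategy of Proposition~\ref{p1}. Suppose that $\dist_{G_{-\cC_{j,k'}}}(u, p_j) \leq \Delta/(2\tau)$; then $u \in \cB_{G_{-\cC_{j,k'}}}(p_j, \Delta/(2\tau))$, and Proposition~\ref{p1} immediately gives $u \notin \mathfrak{C}_{i,k}$. The remaining task is to combine this with $u \in V_{i,k} = \cB_{G_{-\cC_{i,k}}}(p_i, a_{i,k})$ and derive a contradiction by constructing a $p_i$--$p_{i+1}$ walk in $G_{-\cC_{i,k}}$ whose length violates Fact~\ref{fact-cylinder}.

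First I would record a structural observation about the cylinder: deleting $\cC_{i,k}$ severs the interior of $\mathfrak{C}_{i,k}$ from its exterior, since any $G$-edge joins nodes whose distances to the axis $\overline{\ell_i,r_i}$ differ by at most one and therefore cannot bridge a gap of two between the interior ($<\lceil \alpha_{i,k}\Delta/\tau \rceil$) and the exterior ($>\lceil \alpha_{i,k}\Delta/\tau \rceil$). Because $p_i$ lies at axis-distance at least $\Delta/(3\tau) > \alpha_{i,k}\Delta/\tau$, its connected component in $G_{-\cC_{i,k}}$ stays in the exterior of $\mathfrak{C}_{i,k}$; hence so does $V_{i,k}$, consistent with the $u \notin \mathfrak{C}_{i,k}$ just obtained. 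The same observation shows that $p_j$, at axis-distance $\geq \Delta/(3\tau) + (j-i-1)\Delta/\tau$, is also in the exterior.

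I then splice three legs to form a walk from $p_i$ to $p_{i+1}$ in $G_{-\cC_{i,k}}$: the shortest $G_{-\cC_{i,k}}$-path from $p_i$ to $u$, of length $\leq a_{i,k} \leq \dist_{G_{-\cC_{i,k}}}(p_i,p_{i+1})/2$; the hypothetical short $u$-to-$p_j$ path of length $\leq \Delta/(2\tau)$; and the sub-geodesic of $\overline{p,q}$ from $p_j$ back to $p_{i+1}$, which has length $(j-i-1)\Delta/\tau \leq \Delta$ and lies in the exterior of $\mathfrak{C}_{i,k}$ because each of $p_{i+1},\ldots,p_j$ is at $G$-distance $\geq \Delta/(3\tau)$ from the axis. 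The total walk length is at most $\dist_{G_{-\cC_{i,k}}}(p_i,p_{i+1})/2 + \Delta/(2\tau) + \Delta$. Fact~\ref{fact-cylinder}, combined with the choice of $\tau$ in \eqref{tau-bound:eq}--which guarantees $(\Delta/(60\tau))\,2^{(\Delta/\tau)^\mu/(28\,\delta\log(2d))} > \Delta/\tau + 2\Delta$--forces the walk length to be strictly less than $\dist_{G_{-\cC_{i,k}}}(p_i,p_{i+1})$, an impossibility.

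The main obstacle is the middle leg: the hypothetical short $u$-to-$p_j$ path lives \emph{a priori} only in $G_{-\cC_{j,k'}}$ and could pass through boundary nodes of $\mathfrak{C}_{i,k}$. However, $\cC_{i,k}$ and $\cC_{j,k'}$ are disjoint (their axes are at $G$-distance at least $\Delta/\tau$, far exceeding the radii of order $(\Delta/\tau)^\mu/\log(2d)$), and both endpoints $u,p_j$ sit in the exterior of $\mathfrak{C}_{i,k}$, so any excursion through $\cC_{i,k}$ can be either deformed along the boundary or replaced by an alternative exterior path whose extra length is controlled by Fact~\ref{fact-expo} (exponential divergence of geodesic rays emanating from $p_j$). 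Formalizing that such a replacement preserves length up to the $\Delta/(2\tau)$ order of magnitude is where I expect the bulk of the technical care in the proof to concentrate.
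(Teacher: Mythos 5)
Your overall skeleton matches the paper's: argue by contradiction, build a three-leg $p_i$-to-$p_{i+1}$ walk through $u$ and $p_j$ of length at most $\dist_{G_{-\cC_{i,k}}}(p_i,p_{i+1})/2 + \Delta/(2\tau) + \Delta$, and contradict Fact~\ref{fact-cylinder} via the choice of $\tau$ in \eqref{tau-bound:eq}. However, the step you yourself flag as the ``main obstacle'' --- showing that the middle leg from $u$ to $p_j$ avoids $\cC_{i,k}$ --- is precisely where your argument is incomplete, and the repair you sketch does not work as stated: a deformation of an excursion ``along the boundary'' of $\mathfrak{C}_{i,k}$ carries no a priori length bound of order $\Delta/(2\tau)$, and Fact~\ref{fact-expo} provides \emph{lower} bounds on the lengths of avoiding paths, not upper bounds on replacement paths, so it cannot certify that a short exterior detour exists.

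The paper closes this gap with a one-line observation you missed: Proposition~\ref{p1} is a statement about the \emph{entire ball} $\cB_{G_{-\cC_{j,k'}}}\left(p_j,\Delta/(2\tau)\right)$, not merely about the single node $u$. Every node on a shortest $G_{-\cC_{j,k'}}$-path from $u$ to $p_j$ of length $z\leq\Delta/(2\tau)$ is within $G_{-\cC_{j,k'}}$-distance $z$ of $p_j$, hence lies in that ball, hence by Proposition~\ref{p1} lies outside $\mathfrak{C}_{i,k}\supseteq\cC_{i,k}$. Thus the middle leg already lives in $G_{-\cC_{i,k}\cup\,\cC_{j,k'}}$ without any modification, giving $\dist_{G_{-\cC_{i,k}}}(u,p_j)\leq z\leq \Delta/(2\tau)$ directly, after which the rest of your computation goes through. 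Once this is in place, your auxiliary ``severing'' observation about the cylinder and the disjointness of $\cC_{i,k}$ and $\cC_{j,k'}$ are both unnecessary.
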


\begin{proof}
Assume for the sake of contradiction that $z=\dist_{ \,G_{-\cC_{j,k'}} }\left(u,p_j\right)\leq {\Delta} / ({2\,\tau})$.
Since $u\in V_{i,k}  = \cB_{G_{-\cC_{i,k}}} \left(p_i,a_{i,k}\right)$, this implies 
\[
\dist_{ \,G_{-\cC_{i,k}} }\left(p_{i,k},u\right)\leq a_{i,k} \leq 
{\dist_{G_{-\cC_{i,k}}}\left(p_i,p_{i+1}\right)} / {2}
\]
Since $u\in V_{j,k'} \! = \! \cB_{G_{-\cC_{j,k'}}} \!\! \left(p_j,a_{j,k'}\right)$ for some $a_{j,k'}$, this implies 
$u\in \cB_{G_{-\cC_{j,k'}}} \!\! \left(p_j,z\right)$.
Since 
$z\leq {\Delta} / ({2\,\tau})$, 
by Proposition~\ref{p1} $\mathfrak{C}_{i,k}\cap \cB_{G_{-\cC_{j,k'}}}\left(p_j,z\right)=\emptyset$, 
and therefore
\[
{\Delta} / ( {2\,\tau} ) \geq 
\underset{\text{\small since 
$\mathfrak{C}_{i,k}\cap \cB_{G_{-\cC_{j,k'}}}\left(p_j,z\right)=\emptyset$
} }
{
z=\dist_{ \,G_{-\cC_{j,k'}} }\left(u,p_j\right)
=
\dist_{ \,G_{-\cC_{i,k}\cup \,\cC_{j,k'}} }\left(u,p_j\right)
}
\geq
\dist_{ \,G_{-\cC_{i,k}} }\left(u,p_j\right)
\]
which in turn implies
\begin{multline}
\dist_{ \,G_{-\cC_{i,k}} }\left(p_i,p_j\right)
\leq 
\dist_{ \,G_{-\cC_{i,k}} }\left(p_i,u\right)
+
\dist_{ \,G_{-\cC_{i,k}} }\left(u,p_j\right)
\\
\leq
{\dist_{G_{-\cC_{i,k}}}\left(p_i,p_{i+1}\right)} / {2}
+
{\Delta} / ({2\,\tau})
\label{ineq1}
\end{multline}
Since the Hausdorff distance between the two shortest paths $\overline{\ell_i,r_i}$ and 
$\overline{p_j,p_{j+1}}$ is at least 
$
(j-i-1)\frac{\Delta}{\tau}+ 
\Delta/(3\,\tau)
>
\alpha_{i,k}\,\dist_G\left(p_i,p_{i+1}\right)$ 
and 
$\dist_{ \,G_{-\cC_{i,k}} }\left(p_j,p_{i+1}\right)
=
(j-i){\Delta}/{\tau}
<\Delta
$,
we have
\begin{multline}
\dist_{ \,G_{-\cC_{i,k}} }\left(p_i,p_{i+1}\right)
\leq
\dist_{ \,G_{-\cC_{i,k}} }\left(p_i,p_j\right)
+
\dist_{ \,G_{-\cC_{i,k}} }\left(p_j,p_{i+1}\right)
\\
\underset{ \text{\small by \eqref{ineq1}} }
{
\leq
}
{\dist_{G_{-\cC_{i,k}}}\left(p_i,p_{i+1}\right)} / {2}
+
{\Delta} / ({2\,\tau})
+
\Delta
\\
\Rightarrow\,
\dist_{ \,G_{-\cC_{i,k}} }\left(p_i,p_{i+1}\right)
\leq
{\Delta}/{\tau} + 2\Delta
\label{gt:eq}
\end{multline}
On the other hand, by Fact~\ref{fact-cylinder}:
\begin{gather}
\dist_{ \,G_{-\cC_i} }\left(p_i,p_{i+1}\right)
\geq
{\Delta} / ({60\,\tau}) \,2^{\,({\alpha_{i,k} \, \Delta}) / ( {\delta \, \tau } )  }
\geq
{\Delta} / ( {60\,\tau} ) \,2^{\, ( \, ( \Delta / \tau )^\mu \,  ) / ({28\,\delta \, \log (2d)})  }
\label{less:eq}
\end{gather}
Inequalities~\eqref{gt:eq}~and~\eqref{less:eq} together imply
\begin{gather}
{\Delta} / ( {60\,\tau} ) \,2^{\, ( \, ( \Delta / \tau )^\mu \,  ) / ({28\,\delta \, \log (2d)})  }
\leq
( {\Delta} / {\tau} ) + 2 \, \Delta
\label{hhh}
\end{gather}
Inequality~\eqref{hhh} contradicts Inequality~\eqref{tau-bound:eq}.
\end{proof}

To complete the proof of limited overlap claim, suppose that $V_{i,k}\cap V_{j,k'}\neq\emptyset$ and let 
$u\in V_{i,k}\cap V_{j,k'}$.
Proposition~\ref{p2} implies that 
$V_{j,k'}\supset \cB_{ G_{-\cC_{j,k'}} } \left( p_j,{\Delta} / ( {2\,\tau} ) \right)$, 
$u\notin \cB_{ G_{-\cC_{j,k'}} } \left( p_j, {\Delta} / ( {2\,\tau} ) \right)$, 
and thus
there are at least ${\Delta} / ( {2\,\tau} )$ node on a shortest path 
in $G_{-\cC_{j,k'}}$ from $p_j$ to a node at a distance of ${\Delta} / ( {2\,\tau} )$ from $p_j$ that are
not in $V_{i,k}$.
\end{proof}

\subsection{Family of Mutually Disjoint Cuts}
\label{sec-disjoint}

Recall that, given two distinct nodes $s,t\in V$ of a graph $G=(V,E)$, a cut in $G$ that separates $s$ from $t$ 
(or, simply a ``$s$-$t$ cut'') $\cut_G(S,s,t)$ is a subset of nodes $S$ that disconnects $s$ from $t$.
The \emph{cut-edges} $\cE_G(S,s,t)$ 
(resp., \emph{cut-nodes} $\cV_G(S,s,t)$)
corresponding to this cut is the set of edges with one end-point in $S$
(resp., the end-points of these cut-edges that belong to $S$), \IE,
\begin{gather*}
\cE_G(S,s,t)
=
\left\{ \,
\{u,v\} \,|\, u\in S, \, v\in V\setminus S, \, \{u,v\}\in E 
\, \right\},\,
\\
\cV_G(S,s,t)
=
\left\{ \,
u \,|\, u\in S, \, v\in V\setminus S, \, \{u,v\}\in E 
\, \right\}
\end{gather*}
{\bf Note that in the following lemma 
$d$ is the maximum degree of any node ``except $s$, $t$ and any node within a distance of 
$35\,\delta$ of $s$''} (degrees of these nodes may be arbitrary).

\begin{lemma}\label{const-thm}
Suppose that the following holds for our given $\langle G,d,\delta\rangle$:
\begin{enumerate}[label=$\blacktriangleright$]
\item
$s$ and $t$ are two nodes of $G$ such that $\dist_G(s,t)>48\,\delta+8\,\delta\log n$, and
\smallskip
\item
$d$ is the maximum degree of any node except $s$, $t$ and any node within a distance of 
$35\,\delta$ of $s$ (degrees of these nodes may be arbitrary).
\end{enumerate}
Then, 
there exists a set of at least
$
\frac {\dist_G(s,t)-8\,\delta\log n} {50 \,\delta }
=
\Omega \left(
\dist_G(s,t)
\right)
$
(node and edge) disjoint cuts such that each such cut has at most 
$d^{\,12\delta+1}$ 
cut edges.
\end{lemma}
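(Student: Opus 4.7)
The plan is to realize the cuts as Voronoi-type regions for evenly spaced points on a shortest $s$-$t$ path, using the $4$-point formulation of $\delta$-hyperbolicity to concentrate the cut-edges of each region in a small ball around the midpoint of the associated sub-geodesic, and to read off mutual disjointness from the spacing of these midpoints.

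Concretely, I would fix a geodesic $\gamma = (s = w_0, w_1, \ldots, w_\beta = t)$ with $\beta = \dist_G(s,t)$ and pick points $u_i = w_{r_i}$ for $i = 1, \ldots, k$ at $r_i = c_0 + 50\delta(i-1)$, choosing $c_0 = \Omega(\delta \log n)$ so that the midpoint $m_i = w_{\lfloor r_i/2\rfloor}$ satisfies $\dist_G(s, m_i) > 47\delta$ (placing the degree bound $d$ into effect throughout $\cB_G(m_i, 12\delta)$) and so that $r_i$ stays sufficiently below $\beta$; the hypothesis $\dist_G(s,t) > 48\delta + 8\delta \log n$ is what is needed to accommodate $k \geq (\dist_G(s,t) - 8\delta \log n)/(50\delta)$ such indices. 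For each $i$, I would take
\[
S_i \,=\, \bigl\{v \in V : \dist_G(v, s) < \dist_G(v, u_i)\bigr\},
\]
the Voronoi cell of $s$ against $u_i$; this contains $s$ (since $0 < r_i$) and excludes $t$ (since $\dist_G(t, s) = \beta > \beta - r_i = \dist_G(t, u_i)$), so $S_i$ is a legitimate $s$-$t$ cut.

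The crux of the proof is the locality claim that every cut-edge $\{v, w\} \in \cE_G(S_i, s, t)$ lies in $\cB_G(m_i, 12\delta)$. To prove this, I would apply Definition~\ref{def-hyperbolic-2} to the quadruple $\{s, u_i, v, w\}$ and examine the three pairing-sums $\Sigma_1 = \dist_G(s, u_i) + \dist_G(v, w)$, $\Sigma_2 = \dist_G(s, v) + \dist_G(u_i, w)$, $\Sigma_3 = \dist_G(s, w) + \dist_G(u_i, v)$. The cut-edge conditions together with $\dist_G(v, w) = 1$ give $\Sigma_3 > \Sigma_2$; moreover, combining $|\dist_G(s, v) - \dist_G(s, w)| \leq 1$ and $|\dist_G(u_i, v) - \dist_G(u_i, w)| \leq 1$ with the cut-edge conditions rules out $\Sigma_2 > \Sigma_1$, so the two largest sums are always $\Sigma_3$ and $\Sigma_1$. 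The $4$-point inequality then yields $|\Sigma_3 - \Sigma_1| \leq 2\delta$, from which a short calculation shows $\dist_G(s, v) + \dist_G(u_i, v) \leq r_i + O(\delta)$ (placing $v$ within $O(\delta)$ of the geodesic $\overline{s, u_i}$); combining with $\dist_G(s, w) \geq r_i/2$ (which follows from $\dist_G(s, w) \geq \dist_G(w, u_i)$ and $\dist_G(s, w) + \dist_G(w, u_i) \geq r_i$) and $\dist_G(s, v) \geq \dist_G(s, w) - 1$ pins $\dist_G(s, v) \in [r_i/2 - O(\delta),\, r_i/2 + O(\delta)]$. Projecting $v$ onto $\gamma$ then gives $\dist_G(v, m_i) \leq 12\delta$ once constants are set, and symmetrically for $w$. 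The cut-size bound $|\cE_G(S_i)| \leq d \cdot |\cB_G(m_i, 12\delta)| \leq d^{12\delta+1}$ now follows from the degree hypothesis, and disjointness of both cut-nodes and cut-edges across $i \neq j$ is immediate because the midpoints $m_i$ are $25\delta$ apart on $\gamma$, so the $12\delta$-balls around them are pairwise disjoint.

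The main obstacle is the precise locality constant $12\delta$, which requires carefully tracking the slack in the $4$-point inequality through the computation, including the ``borderline'' case $\dist_G(s, v) = \dist_G(v, u_i) - 1$ where $v$ may lie the farthest from the geodesic $\overline{s, u_i}$; this is also where ruling out $\Sigma_2 > \Sigma_1$ becomes most delicate, since that configuration would force $\dist_G(u_i, v) - \dist_G(u_i, w) \geq 2$ and contradict the assumption that $\{v, w\}$ is an edge. A secondary subtlety is that $m_i$ depends on the specific choice of $\gamma$ (multiple shortest paths may exist between $s$ and $t$), so $\gamma$ must be fixed once at the start of the argument.
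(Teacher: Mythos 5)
There is a genuine gap: the central locality claim---that every cut-edge of the Voronoi cell $S_i$ lies in $\cB_G(m_i,12\delta)$---is false, and the step that fails is your attempt to rule out $\Sigma_2>\Sigma_1$. For a cut-edge $\{v,w\}$ the definitions only give $\dist_G(u_i,v)\ge\dist_G(s,v)+1$ and $\dist_G(s,w)\ge\dist_G(u_i,w)$; if $v$ and $w$ sit far from the segment $\overline{s,u_i}$ but roughly equidistant from its endpoints (say all four of $\dist_G(s,v)$, $\dist_G(u_i,v)$, $\dist_G(s,w)$, $\dist_G(u_i,w)$ are about $D\gg r_i$), then $\Sigma_2\approx\Sigma_3\approx 2D$ while $\Sigma_1=r_i+1$. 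The two largest pairing sums are then $\Sigma_2$ and $\Sigma_3$, which differ by at most $2$ automatically because $\{v,w\}$ is an edge, so Definition~\ref{def-hyperbolic-2} is satisfied and yields no information about the location of $v$; in particular this configuration does not force $\dist_G(u_i,v)-\dist_G(u_i,w)\ge 2$ as you assert. The failure is not a fixable technicality: in a bounded-degree graph quasi-isometric to a disc in the hyperbolic plane (constant $\delta$ and $d$), the bisector of $s$ and $u_i$ is an entire geodesic, so the edge boundary of the Voronoi cell $S_i$ grows with the diameter of the graph and exceeds $d^{12\delta+1}=O(1)$. Only the short piece of that bisector near $m_i$ is relevant to separating $s$ from $t$, but $\cE_G(S_i,s,t)$ as you define it contains all of it.

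The paper avoids this trap by never using bisectors. It runs a BFS from $s$ and proves, via Fact~\ref{fact-expo} (exponential divergence of geodesic rays), that any two nodes of a level $\cL_j$ that lie on \emph{some $s$-$t$ path} are within $12\delta$ of each other, provided $48\delta\le j\le \dist_G(s,t)-7\delta\log n$; here the presence of $t$ far beyond level $j$ is essential, since two far-apart level-$j$ nodes on $s$-$t$ paths would have to be reconnected through $t$ by a path of length exceeding $n$. The cut is then the edge boundary of a single ball $\cB_G(v_j,12\delta)$ around one such node, which meets every $s$-$t$ path and has at most $d^{12\delta+1}$ incident edges; disjointness comes from spacing the chosen levels $50\delta$ apart, much as in your last step. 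To salvage a Voronoi-style construction you would need to intersect $S_i$ with something forcing proximity to nodes actually on $s$-$t$ paths, which is precisely what the level-set-plus-divergence argument accomplishes.
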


\begin{remark}
Suppose that $G$ is hyperbolic (\emph{\IE}, $\delta$ is a constant), $d$ is a constant,
and $s$ and $t$ be two nodes such that $\dist_G(s,t)>48\delta+8\delta\log n=\Omega(\log n)$.
Lemma~\ref{const-thm} then implies that
there are $\Omega \left(\dist_G(s,t)\right)$ $s$-$t$ cuts each having $O(1)$ edges.
If, on the other hand, $\delta=O(\log\log n)$, then such cuts have $\mathrm{polylog}(n)$ edges.
\end{remark}

\begin{remark}
The bound in Lemma~\ref{const-thm} is obviously meaningful only if $\delta=o(\log_d n)$.
If $\delta=\Omega(\log_d n)$, then $\delta$-hyperbolic graphs include expanders and thus many 
small-size cuts may not exist in general.
\end{remark}


\begin{proof}
Recall that we may assume that $\delta\geq {1}/{2}$.
We start by doing a BFS starting from node $s$. Let $\cL_i$ be the sets of nodes 
at the $i\tx$ level (\IE, $\forall\,u\in\cL_i\colon \dist_G\left(s,u\right)=i$); 
obviously $t\in\cL_{\dist_G(s,t)}$.
Assume $\dist_G(s,t)>48\,\delta+8\,\delta\,\log n$, and 
consider two arbitrary paths $\cP_1$ and $\cP_2$ between $s$ and $t$ passing through two nodes 
$v_1,v_2\in\cL_j$ for some $48\,\delta \leq j\leq \dist_G(s,t)-7\,\delta\,\log n$.

We first claim that $\dist_G\left(v_1,v_2\right)<12\,\delta$.
Suppose, for the sake of contradiction, suppose that 
$\dist_G\left(v_1,v_2\right)\geq 12\delta$.
Let $v_1'$ and $v_2'$ be the first node in level $\cL_{j+6\delta\log n}$ visited by $\cP_1$ and $\cP_2$, respectively. 
Since both $\cP_1$ and $\cP_2$ are paths between $s$ and $t$ and 
$j+6\,\delta\,\log n<\dist_G(s,t)$ implies 
$\cL_{j+6\delta\log n+1}\neq\emptyset$,
there must be a path $\cP_3$ between $v_1'$ and $v_2'$ through $t$ using nodes
not in 
$\bigcup_{\,0 \, \leq  \, \ell  \, \leq  \, j+6\,\delta\log n}\hspace*{-0.0in}\cL_{\ell}$.
We show that this is impossible by Fact~\ref{fact-expo}.
Set the parameters in Fact~\ref{fact-expo} in the following manner:
$\kappa=4$, $\alpha=6\,\delta\,\log n$, $r=j>12\,\kappa\,\delta=48\,\delta$; 
$u_1=v_1$, $u_2=v_2$, $u_4=v_1'$, and $u_3=v_2'$.
Then the length of $\cP_3$ satisfies
$
\left| \, \cP_3 \, \right|
>
2^{\log n+5} > n
$
which is impossible since $\left| \, \cP_3 \, \right|<n$.

We next claim that, for  
any arbitrary node in level $v\in\cL_j$ lying on a path between $s$ and $t$,  
$\cB_G\left(v,12\delta\right)$ provides an $s$-$t$ cut 
$\cut_G\left( \cB_G\left(v,12\delta\right),s,t \right)$ having at most 
$\cE_G\left( \cB_G\left(v,12\delta\right),s,t \right)\leq d^{\,12\delta+1} $ edges.
To see this, 
consider any path $\cP$ between $s$ and $t$ and let $u$ be the first node in $\cL_j$ visited by the path.
Then, 
$\dist_G(u,v)\leq 12\delta$ and thus $v\in\cB_G\left(v,12\delta\right)$. Since nodes in 
$\cB_G\left(v,12\delta\right)$
are at a distance of at least $35\delta$ from $s$ and 
$t\notin\cB_G\left(v,12\delta\right)$, 
$d$ is the maximum degree of any node 
in 
$\cB_G\left(v,12\delta\right)$ and 
it follows that 
$\cE_G\left( \cB_G\left(v,12\delta\right),s,t \right)\leq d \, \partial_G\left( \cB_G\left(v,12\delta-1\right) \right)
\leq d^{\,12\delta+1}$.

We can now finish the proof of our lemma in the following way.
Assume that $\dist_G(s,t)>48\, \delta + 8 \,\delta\,\log n$.
Consider the levels $\cL_j$ for 
$j\in \big\{\,50\delta,100\delta,150\delta,\dots,\allowbreak \frac {\dist_G(s,t)-8\delta\log n} {50\delta} \,\big\}$.
For each such level $\cL_j$, select a node $v_j$ that is on a path between $s$ and $t$ and consider the subset of edges 
in 
$\cut_G\big( \cB_G\big(v_j,12\delta\big),s,t \big)$.
Then, 
$\cut_G\left( \cB_G\left(v_j,12\delta\right),s,t \right)$ over all $j$ provides our family of $s$-$t$ cuts.
The number of such cuts is at least
$({\dist_G(s,t)-8\delta\log n}) / ({50\delta})$. To see why these cuts are node and edge disjoint, 
note that 
$\cE_G ( \cB_G(v_j,12\delta),s,t ) \,\cap\, \cE_G ( \cB_G(v_\ell,12\delta),s,t )=\emptyset$
and 
$\cV_G\left( \cB_G\left(v_j,12\delta\right),s,t \right)\cap \cV_G\left( \cB_G\left(v_\ell,12\delta\right),s,t \right)=\emptyset$
for any $j\neq\ell$ since $\dist_G\left(v_j,v_\ell\right)>30\delta$. 
\end{proof}


\section{Algorithmic Applications}
\label{sec-appl}

In this section, we consider a few algorithmic applications of the bounds and proof techniques we showed 
in the previous section.

\subsection{Network Design Application: Minimizing Bottleneck Edges}
\label{sec-bott}
\label{sec-size}

In this section we consider the following problem.

\begin{problem}[Unweighted Uncapacitated Minimum Vulnerability problem (\uumv)~\cite{n1,n2,n3}]
The input to this problem 
a graph $G=(V,E)$, two nodes $s,t\in V$, and two positive integers $0<r<\kappa$. 
The goal is to find a set of $\kappa$ paths between $s$ and $t$ 
that minimizes the number of ``shared edges'', where 
an edge is called shared if it is in more than $r$ of these $\kappa$ paths between $s$ and $t$. 
When $r=1$, the 
\uumv\ 
problem is called the ``minimum shared edges''
(\mse) 
problem. 
\end{problem}

We will use the notation 
$\OPT_{ \!\!\!\!\text{{\footnotesize\uumv}} \!\! }(G,s,t,r,{\kappa})$ 
to denote the number of shared edges in an optimal solution of an instance of \uumv.
\uumv\ has applications in several communication network design problems (see~\cite{wang1,wang2,n3} for further details).
The following computational complexity results are known regarding \uumv\ and \mse\
for a graph with $n$ nodes and $m$ edges (see~\cite{n1,n2}): 
\begin{itemize}
\item
\mse\ does not admit a 
$2^{\log ^{1-\eps} n}$-approximation for any constant $\eps>0$ unless 
$\NP\subseteq\,$DTIME$\left(n^{\log\log n}\right)$.
\smallskip
\item
\uumv\ admits a $\left\lfloor {{\kappa}}/ ({r+1}) \right\rfloor$-approximation. However, no non-trivial approximation 
of \uumv\ that depends on $m$ and/or $n$ only is currently known.

\smallskip

\item
\mse\ admits a 
$\min\left\{ \, n^{{3}/{4}}, \, m^{{1}/{2}} \,\right\}$-approximation.
\end{itemize}

\subsubsection{Greedy Fails for \uumv\ or \mse\ Even for Hyperbolic Graphs ({\em\IE}, Graphs With Constant $\delta$)}

Several routing problems have been looked at for hyperbolic graphs (\IE, constant $\delta$) in the 
literature before (\EG, see~\cite{Eppstein-Goodrich,K2007}) and, for these problems, it is often seen
that simple greedy strategies \emph{do} work.
However, that is unfortunately not the case with \uumv\ or \mse. For example, one obvious greedy strategy 
that can be designed is as follows.

\smallskip

\begin{center}
\begin{tabular}{p{0.9\textwidth}}
\toprule
(* Greedy strategy *)
\\
\midrule
{\bf Repeat} $\kappa$ times 
\\
\hspace*{0.2in}
Select a new path between $s$ and $t$ that shares a minimum number 
\\
\hspace*{0.4in}
of edges with the 
already selected paths
\\
\bottomrule
\end{tabular}
\end{center}

\begin{figure*}[htbp]
\hspace*{-0.2in}
\includegraphics{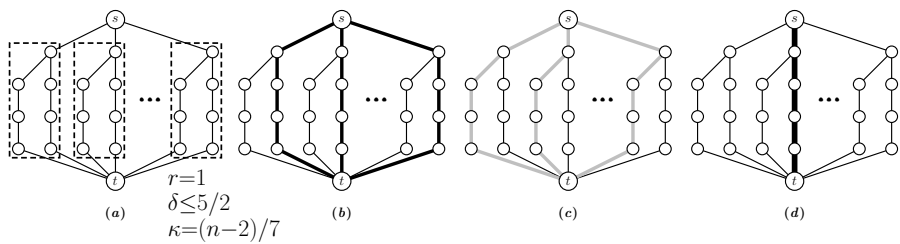}
\caption{\label{bad-ex1}A bad example for the obvious greedy strategy.
(\emph{a}) 
The given graph in which every node except $s$ and $t$ has degree at most $3$ and $\delta\leq 5/2$ . 
(\emph{b}) 
Greedy first selects the $(n-2)/14$ edge-disjoint shortest paths shown in thick black.
(\emph{c}) 
Greedy then selects the shortest paths shown in light gray one by one, each of which increases the number of shared edges by one more.
Thus, greedy uses $(n-2)/7$ shared edges.
(\emph{d}) 
An optimal solution uses only $5$ edges, \IE, 
$\OPT_{ \!\!\!\!\text{{\footnotesize\uumv}} \!\! }(G,s,t,1,{\kappa})=5$. 
}
\end{figure*}

The above greedy strategy can be arbitrarily bad even when $r=1$, $\delta\leq 5/2$ and every node except $s$ and $t$ 
has degree at most three as illustrated in \FI{bad-ex1}; even qualifying the greedy step by selecting a shortest path 
among those that increase the number of shared edges the least does not lead to a better solution.

\subsubsection{Improved Approximations for \uumv\ or \mse\ for $\delta$ Up To $o(\log n/\log d)$}

{\bf Note that in the following lemma 
$d$ is the maximum degree of any node ``except $s$, $t$ and any node within a distance of 
$35\,\delta$ of $s$''} (degrees of these nodes may be arbitrary).
For up to $\delta=o\left( {\log n}/ {\log d} \right)$, 
the lemma provides the first non-trivial approximation of \uumv\ 
as a function of $n$ only (independent of $\kappa$) and 
improves upon the currently best 
$\min\left\{ \, n^{{3}/{4}}, \, m^{{1}/{2}} \,\right\}$-approximation
of \mse\ for arbitrary graphs.

\begin{lemma}\label{equi2}
Let $d$ be the maximum degree of any node except $s$, $t$ and any node within a distance of 
$35\,\delta$ of $s$ (degrees of these nodes may be arbitrary).
Then, \uumv\ (and, consequently also \mse) for a $\delta$-hyperbolic graph $G$ can be approximated within a factor of
$O \left( \max \left\{ \log n, \, d^{\,O(\delta)} \right\} \, \right)$.
\end{lemma}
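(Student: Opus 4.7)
The plan is to first reduce \uumv\ to an edge-hitting problem on small $s$-$t$ cuts via max-flow / min-cut duality, and then approximate the hitting-set instance by combining a greedy covering algorithm with the combinatorial lower bound supplied by Lemma~\ref{const-thm}.

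For the reduction, I would view the sought shared-edge set $F$ as edges of very large capacity (say $\kappa$) in a network in which every other edge of $G$ has capacity $r$. By flow decomposition, this capacitated network admits an integer $s$-$t$ flow of value $\kappa$ iff there exist $\kappa$ $s$-$t$ paths whose shared edges all lie in $F$; by max-flow / min-cut, this holds iff every $s$-$t$ cut $C$ either meets $F$ or satisfies $r|C| \geq \kappa$. Therefore the optimum of \uumv\ equals the minimum size of an edge set that meets every $s$-$t$ cut $C$ of size $|C| < k := \lceil \kappa/r \rceil$.

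For the approximation, I would split on the threshold $k$ versus $d^{12\delta+1}$. If $k \leq d^{12\delta+1}$, then $\kappa \leq (r+1)\,d^{O(\delta)}$ and the trivial $\lfloor \kappa/(r+1)\rfloor$-approximation of~\cite{n1,n2} is already an $O(d^{O(\delta)})$-approximation. If $k > d^{12\delta+1}$, I would handle the short-distance case $\dist_G(s,t) < 48\delta + 8\delta\log n$ by bounding $\kappa$ via the length of any $s$-$t$ walk and again reusing the trivial approximation, and for $\dist_G(s,t) \geq 48\delta + 8\delta\log n$ invoke Lemma~\ref{const-thm} to extract $L = \Omega(\dist_G(s,t))$ pairwise node-and-edge-disjoint $s$-$t$ cuts of size $\leq d^{12\delta+1} < k$. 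Each such cut must be hit by any feasible $F$, forcing $\mathrm{OPT} \geq L$. I would then run a greedy hitting-set procedure over the polynomial-size family of ball-cuts $\{\cut_G(\cB_G(v,12\delta),s,t) : v \in V\}$ appearing in the proof of Lemma~\ref{const-thm}: because every such cut has size at most $d^{12\delta+1}$, a classical $H(|C|)$-greedy analysis gives an $O(\log n)$-approximation of the restricted hitting-set optimum, which by the lower bound coincides with $O(\log n)\cdot \mathrm{OPT}$.

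Taking the worse of the two regimes delivers the claimed $O(\max\{\log n,\, d^{O(\delta)}\})$ ratio. The main obstacle is closing the gap between the restricted ball-cut family and the full collection of $s$-$t$ cuts of size $< k$ that $F$ must intersect: \emph{a priori}, an edge set hitting every ball-cut need not hit every small cut. I would handle this by iteratively augmenting the greedy output, at each round invoking a weighted min-cost $s$-$t$ cut oracle on the residual graph to find a violated constraint and then adding one of its minimum-weight edges to $F$; the ``geodesic localization'' of small cuts implicit in the proof of Lemma~\ref{const-thm} --- small cuts in a $\delta$-hyperbolic graph must live within a thin tube around the $s$-$t$ geodesic --- should limit the number of augmentation rounds to $O(\log n)$, so the final approximation ratio remains $O(\max\{\log n,\, d^{O(\delta)}\})$.
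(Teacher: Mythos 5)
Your reduction of \uumv\ to hitting all $s$-$t$ cuts of size less than $k=\lceil\kappa/r\rceil$ via capacities $r$ and max-flow/min-cut is exactly the paper's Lemma~\ref{equi}, and your handling of the regime $k\leq d^{12\delta+1}$ (absorbing the trivial $\lfloor\kappa/(r+1)\rfloor$-approximation into $d^{O(\delta)}$) is a valid, slightly different route from the paper's iterative disjoint-cut argument. The problem is your main case, $k>d^{12\delta+1}$ with $\dist_G(s,t)$ large. You correctly obtain the lower bound $\mathrm{OPT}\geq L=\Omega(\dist_G(s,t)/\delta)$ from the disjoint cuts of Lemma~\ref{const-thm}, but your upper-bound side is broken: a greedy hitting set for the restricted family of ball-cuts need not be feasible for the full instance, as you yourself note, and your proposed repair --- iteratively adding single edges of violated cuts found by a min-cut oracle, with the round count ``limited to $O(\log n)$ by geodesic localization'' --- is not an argument. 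Nothing in Lemma~\ref{const-thm} says that \emph{every} $s$-$t$ cut of size $<k$ lives in a thin tube around the geodesic, and adding one edge per violated cut gives no relation between the number of rounds and $\mathrm{OPT}$ (each round certifies only that the \emph{previous} $F$ was infeasible, not that $\mathrm{OPT}$ grows). So the feasibility and the ratio of your output set are both unestablished in the case that actually matters.

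The paper closes this case with something far simpler, and you already have all the ingredients for it: output \emph{all edges of a single shortest $s$-$t$ path}. This is trivially a feasible hitting set, since every $s$-$t$ cut of whatever size must contain an edge of every $s$-$t$ path. Its size is $\dist_G(s,t)$, and against your own lower bound $\mathrm{OPT}\geq(\dist_G(s,t)-8\delta\log n)/(50\delta)$ this gives ratio $O(\delta)$, with no covering argument needed. The same trick handles your short-distance subcase, where your plan to ``bound $\kappa$ via the length of an $s$-$t$ walk'' does not parse ($\kappa$ is an arbitrary input parameter): when $\dist_G(s,t)\leq 48\delta+8\delta\log n$ and some cut of size $\leq k$ exists (so $\mathrm{OPT}\geq 1$), the shortest path has only $O(\delta\log n)$ edges, which is the source of the $\log n$ term in the stated ratio. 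With those two repairs your outline matches the paper's proof.
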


\begin{remark}
\label{equi2-com}
Thus for fixed $d$ Lemma~\ref{equi2} provides improved approximation as long as $\delta=o(\log n)$.
{\bf Note that our approximation ratio is independent of the value of $\kappa$}.
Also note that $\delta=\Omega(\log n)$ allows expander graphs as a sub-class of $\delta$-hyperbolic graphs for which 
\uumv\ or \mse\ is expected to be harder to approximate.
\end{remark}

\subsubsection*{Proof of Lemma~\ref{equi2}}

Our proof strategy has the following two steps:
\begin{enumerate}[label=$\blacktriangleright$]
\item
We define a new more general problem which we call the 
\emph{edge hitting set problem for size constrained cuts} (\ehssc), 
and show that 
\uumv\ (and thus \mse) has the {\em same approximability properties as} \ehssc\ by 
characterizing optimal solutions of \uumv\ in terms of optimal solutions of 
\ehssc.

\smallskip

\item
We then provide a suitable approximation algorithm for \ehssc.
\end{enumerate}
\begin{problem}[Edge hitting set for size-constrained cuts (\ehssc)]
The input to \ehssc\ is 
a graph $G=(V,E)$, two nodes $s,t\in V$, and a positive integer $0<k\leq |E|$. 
Define a size-constrained $s$-$t$ cut to be a $s$-$t$ cut $S$ such that the number of cut-edges $\cut_G(S,s,t)$  
is at most $k$.
The goal of \ehssc\ is to find a hitting set of minimum cardinality for all size-constrained $s$-$t$ cuts of $G$, \emph{\IE}, 
find $\widetilde{E}\subset E$ such that 
$|\,\widetilde{E}\,|$ is minimum and 
\[
\forall\, s\in S\subset V\setminus \{t\} \colon 
											    \left|\,\cE_G(S,s,t)\,\right|\leq k \, \Rightarrow \, 
											    \cE_G(S,s,t) \cap \widetilde{E} \neq \emptyset
\]
\end{problem}

We will use the 
notation $E_{  \!\!\text{{\footnotesize\ehssc}} \!\! }(G,s,t,k)$ 
to denote an optimal solution 
containing 
$\OPT_{ \!\!\!\!\text{{\footnotesize\ehssc}} \!\! }(G,s,t,k)$ edges
of an instance of \ehssc.

\begin{lemma}[Relating \ehssc\ to \uumv]\label{equi}

$
\OPT_{ \!\!\!\!\text{{\footnotesize\uumv}} \!\! }(G,s,t,r,\kappa)
=
\OPT_{ \!\!\!\!\text{{\footnotesize\ehssc}} \!\! }\left( G,s,t, \left\lceil {\kappa}/{r}\right\rceil -1 \right)
$.
\end{lemma}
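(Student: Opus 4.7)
The plan is to prove the equality by showing both inequalities via a max-flow/min-cut style reduction, interpreting the shared-edge structure of \uumv\ solutions as hitting sets for size-constrained cuts and vice versa. Set $k = \lceil \kappa/r \rceil - 1$ throughout.

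First I would prove the direction $\OPT_{\text{\footnotesize\ehssc}}(G,s,t,k) \leq \OPT_{\text{\footnotesize\uumv}}(G,s,t,r,\kappa)$. Let $\cP_1,\dots,\cP_\kappa$ be an optimal \uumv\ solution with shared-edge set $E_{\mathrm{sh}}$. For any $s$-$t$ cut $S$, each path $\cP_i$ must cross $S$ at least once. Counting crossings, each non-shared cut-edge contributes at most $r$ and each shared cut-edge at most $\kappa$. Thus if $c = |\cE_G(S,s,t)|$ and $c_s = |\cE_G(S,s,t)\cap E_{\mathrm{sh}}|$, then $r(c-c_s) + \kappa c_s \geq \kappa$. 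When $c_s = 0$ this forces $c \geq \lceil \kappa/r\rceil = k+1$, which contradicts the size constraint $c\leq k$. Hence $E_{\mathrm{sh}}$ hits every size-constrained cut, giving a valid \ehssc\ solution of size $\OPT_{\text{\footnotesize\uumv}}$.

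For the converse direction, I would take an optimal hitting set $\widetilde{E}$ for \ehssc\ and build a capacitated network $G'$ on $G$: each edge in $\widetilde{E}$ receives capacity $\kappa$, each edge in $E\setminus\widetilde{E}$ receives capacity $r$. I claim the max $s$-$t$ flow in $G'$ is at least $\kappa$. By the max-flow/min-cut theorem, this amounts to checking that every $s$-$t$ cut $S$ has total capacity at least $\kappa$. With $c,c_s$ as before, the capacity is $r(c-c_s) + \kappa c_s$; if $c_s \geq 1$ this is $\geq \kappa$ automatically, and if $c_s = 0$ then $\widetilde{E}$ fails to hit $S$, so by definition of \ehssc\ we have $c \geq k+1 = \lceil \kappa/r\rceil$, again giving capacity $\geq \kappa$. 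Since capacities are integral, integer max-flow decomposes into $\kappa$ unit $s$-$t$ paths (simple after cycle removal). In this collection every edge outside $\widetilde{E}$ is used at most $r$ times, so the only possible shared edges lie in $\widetilde{E}$, yielding a \uumv\ solution of cost at most $|\widetilde{E}| = \OPT_{\text{\footnotesize\ehssc}}$.

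The two inequalities together give the claimed equality. The main subtlety — not hard but worth stating carefully — is the max-flow argument in the second direction: I must verify that the capacity assignment exactly matches the algebraic identity $rc + (\kappa-r)c_s \geq \kappa$ derived from the hitting-set property, and that the integral path decomposition preserves the per-edge usage bounds. Both are standard once the capacities are chosen as above, so the reduction is fully constructive and preserves approximation ratios, which is what Section~\ref{sec-bott} needs in order to transport any approximation for \ehssc\ back to \uumv.
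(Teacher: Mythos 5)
Your proposal is correct and follows essentially the same route as the paper: a cut-crossing counting argument showing the shared edges of any feasible \uumv\ solution hit every size-constrained cut, and a max-flow/min-cut construction in the reverse direction (the paper assigns capacity $\infty$ rather than $\kappa$ to hitting-set edges, which is immaterial). No gaps.
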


\begin{proof}
Note that {\em any} feasible solution for \uumv\ {\em must} contain at least one edge from every
collection of cut-edges $\cE_G(S,s,t)$ satisfying
$\left|\,\cE_G(S,s,t)\,\right|\leq \left\lceil {\kappa}/{r}\right\rceil -1$, 
since otherwise the number of paths going from 
$\cE_G(S,s,t)$ to $V\setminus\cE_G(S,s,t)$ is at most 
$
r \left( \left\lceil \frac {\kappa} {r}\right\rceil -1 \right) \! < \! \kappa
$.
Thus we get
$
\OPT_{ \!\!\!\!\text{{\footnotesize\uumv}} \!\! }(G,s,t,r,\kappa)
\geq
\OPT_{ \!\!\!\!\text{{\footnotesize\ehssc}} \!\! }\left( G,s,t, \left\lceil {\kappa}/{r}\right\rceil -1 \right)
$. 

On the other hand, 
$
\OPT_{ \!\!\!\!\text{{\footnotesize\uumv}} \!\! }(G,s,t,r,\kappa)
\leq
\OPT_{ \!\!\!\!\text{{\footnotesize\ehssc}} \!\! }\left( G,s,t, \left\lceil {\kappa}/{r}\right\rceil -1 \right)
$ 
can be argued as follows.
Consider the set of edges 
$E_{  \!\!\text{{\footnotesize\ehssc}} \!\! }\left(G,s,t, \left\lceil {\kappa}/{r}\right\rceil -1 \right)$ in an optimal hitting set, 
and set the capacity $c(e)$ of every edge $e$ of $G$ as 
\[
c(e)=
\left\{
\begin{array}{ r l }
\infty, & \mbox{if $e\in E_{  \!\!\text{{\footnotesize\hssc}} \!\! } \left(G,s,t, \left\lceil {\kappa}/{r}\right\rceil -1 \right)$}
\\
r, & \mbox{otherwise}
\end{array}
\right.
\]
The value of the minimum $s$-$t$ cut for $G$ is then at least 
$\min \left\{ \infty, \, r \times \left\lceil {\kappa}/{r}\right\rceil \right\}\geq\kappa$ which implies 
(by the standard max-flow-min-cut theorem) 
the existence of $\kappa$ flows each of unit value. The paths taken by these $\kappa$ flows provide our desired $\kappa$ paths for \uumv.
Note that at most $r$ paths go through any edge $e$ with $c(e)\neq\infty$ and thus 
$
\OPT_{ \!\!\!\!\text{{\footnotesize\uumv}} \!\! }(G,s,t,r,\kappa)
\leq
\big| \, \{ e \,|\, c(e)\neq\infty  \} \,\big| 
=
\OPT_{ \!\!\!\!\text{{\footnotesize\ehssc}} \!\! }\left( G,s,t, \left\lceil {\kappa}/{r}\right\rceil -1 \right)
$.
\end{proof}

Now, we turn to providing a suitable approximation algorithm for \ehssc.
Of course, \ehssc\ has the following obvious \emph{exponential-size} $\LP$-relaxation since it is after all
a hitting set problem:

\begin{center}
\begin{tabular}{l}
\emph{minimize} $\sum_{e\in E}x_e\,\,\,$
\emph{subject to} 
\\
[2pt]
\hspace*{0.2in}
\begin{tabular}{r c l} 
$\forall\, s\in S\subset V\setminus \{t\}$ such that  
            $\cut_G(S,s,t)\leq k$ & $\!\!\colon\!\!$ & $\sum_{e\,\in\,\cE_G(S,s,t)} x_e\geq 1$
\\
$\forall\, e\in E$ & $\!\!\colon\!\!$ & $x_e\geq 0$
\end{tabular}
\end{tabular}
\end{center}

Intuitively, there are at least two reasons why such a $\LP$-relaxation may not be of sufficient interest.
Firstly, known results may imply a large integrality gap. Secondly, it is even not very clear if the $\LP$-relaxation
can be solved exactly in a time efficient manner.
Instead, we will exploit the hyperbolicity property and use 
Lemma~\ref{const-thm} to derive our approximation algorithm.

\begin{lemma}[Approximation algorithm for \ehssc]
\label{main4}
\ehssc\ admits a 
\\
$O \left( \max \left\{ \delta\,\log n, \, d^{\,O(\delta)} \right\} \, \right)$-approximation.
\end{lemma}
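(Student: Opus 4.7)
The plan is to describe a very simple algorithm based on outputting the edges of a shortest $s$-$t$ path, and show that this suffices for the claimed ratio by combining the easy observation that any $s$-$t$ path hits every $s$-$t$ cut with the ``many disjoint small cuts'' structure provided by Lemma~\ref{const-thm}.

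First I would observe that if the minimum $s$-$t$ cut of $G$ has more than $k$ edges then no size-constrained cut exists, so $\OPT=0$ and the empty set is an optimal solution; this can be tested in polynomial time by a max-flow computation. Otherwise $\OPT \geq 1$. The algorithm then outputs the edge set of a shortest $s$-$t$ path $P$. Validity is immediate: every $s$-$t$ path crosses every $s$-$t$ cut, so in particular $P$ contains an edge from every size-$\leq k$ cut, making the edges of $P$ a valid hitting set. The output size is exactly $D := \dist_G(s,t)$.

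The analysis then splits into cases. If $D \leq 48\,\delta + 8\,\delta\log n$, then the output has size $O(\delta\log n)$, and since $\OPT\geq 1$ the ratio is $O(\delta\log n)$. Otherwise, I apply Lemma~\ref{const-thm} to produce $N=\Omega(D/\delta)$ edge-disjoint $s$-$t$ cuts $B_1,\dots,B_N$ each of size at most $d^{\,12\delta+1}$. When $k \geq d^{\,12\delta+1}$ every $B_i$ is size-constrained and must be hit; edge-disjointness forces $\OPT \geq N = \Omega(D/\delta)$, giving ratio $D/N = O(\delta)$. When $k < d^{\,12\delta+1}$ the ratio is bounded by counting only those $B_i$ whose size is at most $k$: denoting this number by $\ell$, one has $\OPT\geq \ell$, and so long as $\ell \geq D/d^{\,O(\delta)}$ the ratio is $d^{\,O(\delta)}$, consistent with the claim. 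Combining the two regimes gives the stated $O\big(\max\{\delta\log n,\,d^{\,O(\delta)}\}\big)$ bound.

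The main obstacle is the last sub-case where $k$ is strictly less than the cut-size bound $d^{\,12\delta+1}$ supplied by Lemma~\ref{const-thm} and $D$ is large. Here the $B_i$ need not all be size-constrained, so one must argue that the number $\ell$ of $B_i$ with $|B_i|\leq k$ is still $\Omega(D/d^{\,O(\delta)})$. The idea is that once $s$-$t$ min-cut $m \leq k$, Menger's theorem yields $m$ edge-disjoint $s$-$t$ paths, each of which is forced through every ball-cut $B_i$; this forces $m \leq |B_i|\leq d^{\,12\delta+1}$ and gives enough geometric control to rule out a scenario where almost all $|B_i|$ simultaneously exceed $k$. Equivalently, one can refine the construction of Lemma~\ref{const-thm} to use balls of radius chosen as a function of $k$: for radii $r\leq 12\delta$ one still obtains edge-disjointness along the BFS layering argument, and scaling the radius down to roughly $(\log k)/(\log d)$ ensures each ball-boundary has at most $k$ edges, at the cost of shrinking the number of disjoint cuts only by a factor $d^{\,O(\delta)}$. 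Either interpretation delivers $\OPT \geq \Omega(D/d^{\,O(\delta)})$ and completes the case. The final time complexity is dominated by the shortest-path and max-flow computations used by the algorithm, both polynomial in $n$ and $m$.
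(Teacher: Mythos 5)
Your cases ``$\dist_G(s,t)$ small'' and ``$k\geq d^{\,12\delta+1}$ with $\dist_G(s,t)$ large'' match the paper's argument (there the shortest path is indeed the output and Lemma~\ref{const-thm} gives $\OPT\geq\Omega(\dist_G(s,t)/\delta)$). The genuine gap is the remaining case $k<d^{\,12\delta+1}$, which you correctly flag as the obstacle but do not actually close, and which in fact cannot be closed for the shortest-path algorithm. Neither of your two patches works. First, Menger's theorem only gives $m\leq |B_i|$ for the min-cut value $m$; it says nothing about how many of the ball-cuts $B_i$ have size $\leq k$, and there is no reason any of them do. Second, you cannot shrink the ball radius below $12\delta$: that radius is exactly what is needed (via Fact~\ref{fact-expo}) to guarantee that the ball around $v_j$ contains \emph{every} node of level $\cL_j$ lying on an $s$-$t$ path, i.e.\ that the ball is a cut at all. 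A smaller ball need not separate $s$ from $t$, so the ``scaled-down'' sets are not cuts and give no lower bound on $\OPT$. A concrete counterexample to the algorithm itself: take a long chain of $L=\Theta(n/d)$ cliques $K_d$ with consecutive cliques joined by $d/2$ parallel edges, except for one junction joined by a single bridge $e$; put $s$ and $t$ at the two ends and set $k=1$. Here $\delta=O(1)$, the min-cut is $1\leq k$, the only size-constrained cuts are those with cut-edge set $\{e\}$, so $\OPT=1$, yet the shortest path has $\Theta(n/d)$ edges --- a ratio far exceeding $O(\max\{\delta\log n,\,d^{\,O(\delta)}\})$.

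The paper resolves this by running a \emph{different} algorithm when $k\leq d^{\,12\delta+1}$: repeatedly find an $s$-$t$ cut of capacity at most $k$ (unit capacities initially), add all of its at most $k$ edges to the solution, and raise those edges' capacities to $\infty$ so that subsequently found cuts are edge-disjoint from earlier ones. If the loop runs $\ell$ times, the $\ell$ cuts found are pairwise disjoint size-constrained cuts, so $\OPT\geq\ell$, while the output has at most $k\ell$ edges; this is a $k\leq d^{\,12\delta+1}=d^{\,O(\delta)}$ approximation, with no hyperbolicity needed in this branch. Only in the complementary branch $k>d^{\,12\delta+1}$ does the paper output the shortest path and invoke Lemma~\ref{const-thm}, where \emph{every} one of the $\Omega(\dist_G(s,t)/\delta)$ disjoint cuts is automatically size-constrained. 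Incidentally, your preliminary max-flow test for $\OPT=0$ is correct but unnecessary once the greedy cut-packing branch is in place.
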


\begin{proof}
Our algorithm for \ehssc\ can be summarized as follows:

\smallskip

\begin{center}
\begin{tabular}{p{0.97\textwidth}}
\toprule
{Algorithm for \ehssc}
\\
\midrule
\hspace*{-0.1in}
{\bf If} $k\leq d^{\,12\delta+1}$ {\bf then} 
\\
\hspace*{0.0in} $\cA\leftarrow\emptyset$, $j\leftarrow 0$, set the capacity $c(e)$ of every edge $e$ to $1$
\\
\hspace*{0.0in} {\bf while} there exists a $s$-$t$ cut of capacity at most $k$ {\bf do}
\\
\hspace*{0.4in} $j\leftarrow j+1$, let $\cF_j$ be the edges of a $s$-$t$ cut of capacity at most $k$
\\
\hspace*{0.4in} $\cA\leftarrow\cA\cup\cF_j$, set $c(e)=\infty$ for every edge $e\in\cF_j$
\\
\hspace*{0.0in} {\bf return} $\cA$ as the solution
\\
\hspace*{-0.1in}
{\bf else} \hspace*{0.0in} $(*$ $k>d^{\,12\delta+1}$ $*)$
\\
\hspace*{0.0in} {\bf return} all the edges in a shortest path between $s$ and $t$ as the solution $\cA$
\\
\bottomrule
\end{tabular}
\end{center}

The following case analysis of the algorithm shows the desired approximation bound.

\medskip
\noindent
{\bf Case 1: $k\leq d^{\,12\delta+1}$}.
Let $\cF_1,\cF_2,\dots,\cF_\ell$ be the sets whose edges were added to $\cA$; thus, 
$|\cA|\leq k\,\ell$.
Since $|\cF_j|\leq k$ and $\cF_j\,\cap\,\cF_{j'}=\emptyset$ for $j\neq j'$, 
{
\def\OldComma{,}
    \catcode`\,=13
    \def,{%
      \ifmmode%
        \OldComma\discretionary{}{}{}%
      \else%
        \OldComma%
      \fi%
    }%
$\OPT_{ \!\!\!\!\text{{\footnotesize\ehssc}} \!\! }(G,s,t,k)\geq \ell$, 
}
thus providing 
an approximation bound of 
$k\leq d^{\,12\delta+1}$.

\medskip
\noindent
{\bf Case 2: $k>d^{\,12\delta+1}$ and $\dist_G(s,t)\leq 48\,\delta \,+\, 8\,\delta\,\log n$}. 
Since 
{
\def\OldComma{,}
    \catcode`\,=13
    \def,{%
      \ifmmode%
        \OldComma\discretionary{}{}{}%
      \else%
        \OldComma%
      \fi%
    }%
$\OPT_{ \!\!\!\!\text{{\footnotesize\ehssc}} \!\! }(G,s,t,k)\geq 1$, 
}
this provides a 
$O( \delta\,\log n)$-approximation.

\medskip
\noindent
{\bf Case 3: $k>d^{\,12\delta+1}$ and $\dist_G(s,t)>48\,\delta + 8\,\delta\,\log n$}. 
Use Lemma~\ref{const-thm} to find a collection $S_1,S_2,\dots,S_\ell$ of 
$\ell= ( {\dist_G(s,t)-8\,\delta\,\log n} ) / ( {50\,\delta})$ edge and node disjoint $s$-$t$ cuts. 
Since 
$\cut_G\left( S_j,s,t \right)\leq d^{\,12\delta+1}<k$, any valid solution of \ehssc\ 
must select \emph{at least} one edge from 
$\cE_G\left( S_j,s,t \right)$.
Since the cuts are edge and node disjoint, it follows that 
\[
\OPT_{ \!\!\!\!\text{{\footnotesize\ehssc}} \!\! }(G,s,t,k)\geq 
\left( {\dist_G(s,t)-8\,\delta\,\log n} \right) / ({50\,\delta}) 
\]
Since we return all the edges in a shortest path between $s$ and $t$ as the solution, 
the approximation ratio achieved is
$
{\dist_G(s,t) } / 
{
\left( \frac{\dist_G(s,t)-8\, \delta\, \log n}{50\, \delta} \right)
}
<
100\delta
$.
\end{proof}

\subsection{Application to the Small Set Expansion Problem}
\label{sec-small-expansion}

The \emph{small set expansion} (\sse) problem was studied by Arora, Barak and Steurer in~\cite{ABS10} (and also by several 
other researchers such as~\cite{bansal,exp-ref1,exp-ref2,exp-ref3,exp-ref4}) 
in an attempt to understand the computational difficulties surrounding the 
Unique Games Conjecture (UGC). 
To define \sse, we will also use the \emph{normalized} edge-expansion of a graph 
which is defined as follows~\cite{chung97}. 
For a subset of nodes $S$ of a graph $G$, let $\vol_G(S)$ denote the sum of degrees of the 
nodes in $G$. Then, the normalized edge expansion ratio $\Phi_G(S)$ of a subset $S$ of nodes of at most 
${|V|}/{2}$ nodes of $G$ is defined as 
$\Phi_G(S)={\cut_G(S)}/{\vol_G(S)}$.
Since we will deal with only $d$-regular graphs in this subsection, $\Phi_G(S)$ will simplify to 
${\cut_G(S)}/ ( {d\,|S|} \, )$.

\begin{definition}[(\sse\ Problem]\label{def-sse}
{\bf [$\,$a case of~\cite[Theorem 2.1]{ABS10}, rewritten as a problem$\,$]}
Suppose that we are given a $d$-regular graph $G=(V,E)$ for some fixed $d$, and suppose $G$ has a subset of 
at most $\zeta n$ nodes $S$, for some constant $0<\zeta<{1}/{2}$, 
such that $\Phi_G(S)\leq\eps$ 
for some constant $0<\eps\leq 1$. 
Then, find as efficiently as possible a subset $S'$ of at most $\zeta n$ nodes such that 
$\Phi_G(S)\leq\eta\,\eps$ 
for some ``universal constant'' $\eta>0$. 
\end{definition}

In general, computing a very good approximation of the \sse\ problem seems to be quite hard; the approximation ratio of 
the algorithm presented in~\cite{exp-ref4} roughly deteriorates proportional to $\sqrt{\log ({1}/{\zeta})}$, and 
a $O(1)$-approximation described in~\cite{bansal} works only if the graph excludes two specific minors. 
The authors in~\cite{ABS10} showed how to design a sub-exponential time (\IE, $O\left(2^{\,c\,n}\right)$ time
for some constant $c<1$) algorithm for the above problem.
As they remark, expander like graphs are somewhat easier instances of \sse\
for their algorithm, and it takes some 
non-trivial technical effort to handle the ``non-expander'' graphs. Note that {\em the class of $\delta$-hyperbolic graphs for}
$\delta=o(\log n)$ {\em is a non-trivial proper subclass of non-expander graphs}.
We show that \sse\ (as defined in Definition~\ref{def-sse}) can be solved in polynomial time for such a proper subclass of
non-expanders.

\begin{lemma}\label{thm-sse-hyper}
\emph{(}{\bf polynomial time solution of \sse\ for $\delta$-hyperbolic graphs when $\delta$ is sub-logarithmic and 
$d$ is sub-linear}\emph{)}
Suppose that $G$ is a $d$-regular $\delta$-hyperbolic graph.
Then the \sse\ problem for $G$ can be solved in polynomial time provided $d$ and $\delta$ satisfy: 
\[
d \leq 2^{\,\log^{(1/3)-\rho} n} 
\,\,\,
\text{and}
\,\,\,
\delta \leq \log^{\rho} n
\,\,\,
\text{for some constant $0<\rho<1/3$}
\]
\end{lemma}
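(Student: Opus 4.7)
The plan is to reduce the small-set expansion problem to the constructive node-expansion bounds of Theorem~\ref{main-nested} and then cover the required size range by enumerating polynomially many candidate cuts. First I would reduce normalized edge-expansion to node-expansion: because $G$ is $d$-regular, every boundary node contributes at most $d$ cut-edges, so $\Phi_G(S)=\cut_G(S)/(d\,|S|)\leq h_G(S)$ for every $S\subseteq V$. Hence it is enough to find, in polynomial time, some $S'\subseteq V$ with $|S'|\leq\zeta n$ and $h_G(S')\leq\eta\,\eps$.

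Next I would verify that in the regime $d\leq 2^{\log^{(1/3)-\rho}n}$ and $\delta\leq\log^{\rho}n$, the upper bound~\eqref{bound-expl} of Corollary~\ref{cor1} is $o(1)$. Using $D(G)\geq \log n/\log d\geq \log^{2/3+\rho}n$ and plugging $\mu={1}/{2}$ into~\eqref{bound-expl}, the first term $(\log d/\log n)^{1-\mu}$ becomes $O(\log^{-1/3-\rho/2}n)$. In the second term, $\delta\log^{3/2}(2d)\leq \log^{1/2-\rho/2}n$, so the exponent $\log^{1/2}n/(28\,\delta\,\log^{3/2}(2d))$ is at least $(1/28)\log^{\rho/2}n$, which forces that term to decay faster than any polylog. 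Consequently the whole bound drops below any fixed $\eta\,\eps$ for $n$ sufficiently large, and the witnesses produced by the algorithm of Theorem~\ref{main-nested} automatically satisfy $h_G\leq \eta\,\eps$.

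I would then assemble a polynomial-size list of candidate cuts: for each ordered pair $(p,q)$ of nodes, run Theorem~\ref{main-nested} as well as Theorem~\ref{main3} for each scale parameter $\tau\in\{1,2,\dots,\Delta/4\}$; additionally include every BFS ball $\cB_G(v,r)$ for every $v\in V$ and every $r\in\{0,1,\dots,n\}$. All of this is doable in time $O(\mathrm{poly}(n))$. I would then filter the candidates by the size requirement $|S|\leq\zeta n$ and return the survivor having the smallest $h_G$.

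The hard part will be showing that at least one enumerated candidate simultaneously meets the size bound $|S'|\leq\zeta n$ \emph{and} the expansion bound $h_G(S')\leq \eta\,\eps$. Theorem~\ref{main-nested} only guarantees $|S_j|\leq n/2$, and the sizes in a single nested family could in principle all lie above $\zeta n$. To close this gap I would exploit Theorem~\ref{main3} with varying scale $\tau$ so that witnesses at many different spatial scales are generated along the diameter geodesic, together with a BFS-sweep of concentric balls $\cB_G(v,r)$; the goal is that across all these candidates the attained sizes densely populate the interval $[d^{\,O(\delta)},\,\zeta n]$, so that at least one candidate falls into the desired size window while still inheriting the sub-constant node expansion guaranteed by the parameter analysis of the second step.
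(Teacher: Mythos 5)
Your reduction from $\Phi_G$ to $h_G$ and your check that the bound of Corollary~\ref{cor1} is $o(1)$ in the stated parameter regime are both fine, but the step you yourself flag as ``the hard part'' is exactly where the proof lives, and your proposal does not close it. Nothing in Theorem~\ref{main-nested}, Theorem~\ref{main3}, or a BFS sweep of balls of $G$ guarantees that some enumerated candidate has size at most $\zeta n$ \emph{and} expansion at most $\eta\,\eps$: the witnesses of Theorem~\ref{main-nested} are bounded only by $n/2$ and could all have size close to $n/2$; varying $\tau$ in Theorem~\ref{main3} changes the expansion guarantee and the number of families but gives no upper bound on $|V_{j,k}|$ below $n/2$; and the balls $\cB_G(v,r)$ whose sizes happen to fall in $[d^{\,O(\delta)},\zeta n]$ need not have small expansion. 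The assertion that the attained sizes ``densely populate'' the target interval is a hope, not an argument.

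The paper closes this gap with a different device. It does \emph{not} take $p,q$ realizing the diameter: it picks $p,q$ with $\dist_G(p,q)=\Delta=\log_d n$, so that every ball of radius at most $\Delta/3$ has at most $d^{\,\Delta/3+1}=d\,n^{1/3}\ll\zeta n$ nodes --- the size constraint is built in from the start. It then relaxes the search: if some ball $\cB_G(p,r)$ with $r\le\Delta/3-\alpha\Delta$ already has $h_G\le\eps$, return it; otherwise all these balls expand by a factor at least $1+\eps$ per step, forcing $\left|\cB_G(p,\Delta/3-\alpha\Delta)\right|\ge n^{\eps\log_d\bee/8}$. This lower bound is then exploited after removing the cylinder $\cC$: if $p$ is disconnected from $q$ in $G_{-\cC}$, the ball $\cB_G(p,\Delta/3-\alpha\Delta)$ has expansion at most $|\cC|/n^{\eps\log_d\bee/8}<\eps$; if not, $\dist_{G_{-\cC}}(p,q)$ is exponentially large by Fact~\ref{fact-cylinder}, and an averaging argument over the exponentially many BFS levels of $G_{-\cC}$, grouped into blocks of length $(8/\eps)\ln n$, yields a level that simultaneously has expansion at most $\eps$ and size at most $n\,(8/\eps)\ln n/\dist_{G_{-\cC}}(p,q)<\zeta n$. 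These last witnesses live in $G_{-\cC}$ rather than $G$, so your candidate list would also miss them; you would need to add the choice of $p,q$ at distance $\log_d n$, the relaxed early-exit test, and the level-averaging step to make your enumeration strategy work.
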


\begin{remark}
Computing the minimum node expansion ratio of a graph is in general $\NP$-hard and 
is in fact \sse-hard to approximate within a ratio of $C\,\sqrt{\h_G\,\log d}$ for some constant $C>0$~\cite{LRV13}.
Since we show that \sse\ is polynomial-time solvable for $\delta$-hyperbolic graphs
for some parameter ranges, the hardness result of~\cite{LRV13} does not
directly apply for graph classes that belong to these cases, and thus additional arguments may be 
needed to establish similar hardness results
for these classes of graphs.
\end{remark}

\begin{proof}
Our proof is quite similar to that used for Theorems~\ref{main-nested}.
But, instead of looking for 
smallest possible non-expansion bounds, we now relax the search and allow us to consider subsets of nodes 
whose expansion is just enough to satisfy the requirement. This relaxation helps us to ensure the size requirement of the subset
we need to find.

We will use the construction in the proof of Theorem~\ref{main-nested} in this proof, 
so \emph{we urge the readers to familiarize themselves 
with the details of that proof before reading the current proof}.
Note that 
$h_G(S)\leq\eps$ implies $\Phi_G(S)\leq {d\,h_G(S)}/{d}\leq\eps$.
We select the nodes $p$ and $q$ such that $\Delta=\dist_G(p,q)=\log_d n = {\log n} / {\log d}$, and 
set $\mu=1/2$.
Note that 
$ 
({ 360 \log n } ) / \left( { \Delta\,2^{\,   {\Delta^{\mu}} / ( {28\,\delta\,\log(2d) } )     } } \right)
<
\left({1}/{\Delta}\right)^{1-\mu}
$ 
since
\begin{multline*}
({ 360 \log n } ) / \left( { \Delta\,2^{\,   {\Delta^{\mu}} / ( {28\,\delta\,\log(2d) } )     } } \right)
<
\left({1}/{\Delta}\right)^{1-\mu}
\\
\Leftarrow\,
( { 360 \log d } ) / \left( { 2^{\, { \left( \log n \right)^{1/2} } / \left( {56\,\delta\,\left( \log d \right)^{3/2} } \right) } } \right)
<
\left({\log d}/{\log n}\right)^{1/2}
\\
\Leftarrow \,
9+ \log\log n/2 
<
\left ( { \left( \log n \right)^{1/2} } \right) / \left( {56\,\log^{(1-\rho)/2} n }  \right)
-
\log^{(1-\rho)/2} n
\end{multline*}
and the last inequality clearly holds for sufficiently large $n$.

First, suppose that there exists $0\leq r \leq \frac{\Delta}{3}-\alpha\Delta$ such that
$h_G\! \left(  \cB_{G_{-\cC}}(p,r) \right)
\!=\!
h_G\left(  \cB_{G}(p,r) \right)
\leq\eps$. 
We return 
$S'=\cB_{G}(p,r)$ as our solution, To verify the size requirement, note that 
\begin{multline}
\left| \,\cB_{G}(p,r)\, \right|
\leq 
\left| \,\cB_{G}\left(p, ({\Delta}/{3})-\alpha\,\Delta \right)\, \right| 
<
\left| \,\cB_{G}\left(p, {\Delta}/{3}\right)\, \right| 
\\
<
\sum\limits_{i=0}^{{\Delta}/{3}} d^{\,i}
< 
d^{({\Delta}/{3})+1}
=d\,n^{1/3}
<\zeta\,n
\label{eq:size}
\end{multline}
where the last inequality follows since 
$
d \leq 2^{\,\log^{(1/3)-\rho} n} 
$ 
and $\zeta$ is a constant.

Otherwise, no such $r$ exists, and this implies 
\begin{multline*}
\left| \,\cB_{G}\left(p, ({\Delta}/{3})-\alpha\,\Delta \right)\, \right| 
\geq
\left( 1+\eps \right)^{ ({\Delta}/{3})-\alpha\,\Delta }
\\
>
\left( 1+\eps \right)^{ {\Delta}/{4} }
\geq
\bee^{ {\eps \Delta}/{8} }
=
\bee^{ {\eps \log_d n}/{8} }
=
n^{ {\eps \log_d \bee}/{8} }
\end{multline*}
Now there are two major cases as follows.

\bigskip
\noindent
{\bf Case 1: there exists at least one path between $p$ and $q$ in $G_{-\cC}$}.

\medskip
We know that $\dist_{G_{-\cC}}(p,q)\geq ({\Delta}/{60})2^{\,\alpha\,\Delta/\delta}$
and (by choice of $p$) 
\\
$
\left|\,
\cB_{G_{-\cC}} \left( p, {\dist_{G_{-\cC}}(p,q)}/{2} \right)
\,\right|
<{n}/{2}
$.
Let $p=u_0,u_1,\dots,u_{t-1},u_t=q$ be the nodes in successive order on a shortest path 
from $p$ to $q$ of length $t=\dist_{G_{-\cC}}(p,q)$.
Perform a BFS starting from $p$ in $G_{-\cC}$, and let $\cL_i$ be the sets of nodes 
at the $i\tx$ level (\IE, $\forall\,u\in\cL_i\colon \dist_{G_{-\cC}}\left(p,u\right)=i$).
Note that $\big| \, \bigcup_{j=0}^{\,t/2} \cL_j \,\big|\leq {n}/{2}$.
Consider the levels $\cL_0,\cL_1,\dots,\cL_{t/2}$, and partition the ordered sequence of 
integers $0,1,2,\dots,{t}/{2}$ into consecutive blocks $\Delta_0,\Delta_1,\dots,\Delta_{\left(1+({t}/{2})\,\right)/\kappa-1}$
each of length $\eta=({8}/{\eps})\,\ln n$, \IE,
\[
\underbrace{0,1,\dots,\eta-1}_{\Delta_0},\,
\underbrace{\eta,\eta+1,\dots,2\eta-1}_{\Delta_1},
\dots\dots,
\underbrace{({t}/{2})-\eta+1,({t}/{2})-\eta+2,\dots,({t}/{2})}_{\Delta_{\left(1+({t}/{2})\,\right)/\eta -1}}
\]
We claim that for every $\Delta_i$, there exists an index $i^*$ within $\Delta_i$ (\IE, there exists an index 
$i\,\eta \leq i^* \leq i\,\eta+\eta-1$) such that
$h_G \left( \cL_{i^*} \right) \leq\eps$.
Suppose for the sake of contradiction that this is not true. 
Then, it follows that
\begin{multline*}
\forall\,i\,\eta \leq j \leq i\,\eta+\eta-1 \,\colon\,
h_{G_{-\cC}} \left( \cL_j \right)
\geq 
{h_G \left( \cL_j \right)}/{2}
>
{\eps}/{2}
\\
\Rightarrow\,
\left| \cL_{i\,\eta+\eta-1} \right|
>
\left|\, \cL_{i\,\eta} \,\right|
\left( 1 + ({\eps}/{2})\, \right)^{\eta}
\geq 
\left( 1 + ({\eps}/{2}) \, \right)^{ ({8}/{\eps})\,\ln n}
\\
\geq
\bee^{ \left({\eps}/{4} \right)\,\left( ({8}/{\eps})\,\ln n \right) }
=
n^2
>n
\end{multline*}
which contradicts the fact that 
$\big| \, \bigcup_{j=0}^{\,t/2} \cL_j \,\big|\leq {n}/{2}$.
Since 
$
\hspace*{-0.6in}
\sum\limits_{i=0}^{\hspace*{0.6in}\left(1+ ({t}/{2}) \, \right)/\kappa -1 } 
\hspace*{-0.6in} \left|\, \cL_{i^*}\,\right|
<{n}/{2}
$,
there exists a set $\cL_{k^*}$ such that 
$h_G \left( \cL_{k^*} \right) \leq\eps$ and 
\begin{multline*}
\left|\, \cL_{k^*} \,\right| 
<
\frac{ {n}/{2} }
{
\left( {1+ ({t}/{2}}) \, \right) / {\kappa}
}
<
{n \kappa}/{t}
<
( {8\,n\, \ln n} ) / \left( {\eps \,({\Delta}/{60})\, 2^{ { {\Delta}^{1/2} } / ( { 7\,\delta\,\log (2d) } ) \, } } \right)
\\
\leq
 \left( {480\,n\, {\log^{(1/3)-\rho} n}  }   \right) 
 /
 \left( { \eps\, 2^{ { \, ( {\log^{\rho/2} n} ) } / { 14 } } } \right)
<
\zeta\,n
\end{multline*}

\bigskip
\noindent
{\bf Case 2: there is no path between $p$ and $q$ in $G_{-\cC}$}.

\medskip
In this case, we return 
$
\cB_{G_{-\cC}}\left(p, ({\Delta}/{3})-\alpha\,\Delta \right) 
=
\cB_{G}\left(p, ({\Delta}/{3})-\alpha\,\Delta \right) 
$ as our solution.
The size requirement follows since 
$\left| \,\cB_{G}\left(p, ({\Delta}/{3})-\alpha\,\Delta \right)\, \right| <\zeta\,n$
was shown in \eqref{eq:size}.
Note that nodes in 
$\cB_{G}\left(p, ({\Delta}/{3}) -\alpha\,\Delta \right)$ can only be connected to nodes in $\cC$, and thus
\begin{multline*}
h_G \left( \cB_{G}\left(p, ({\Delta}/{3}) -\alpha\,\Delta \right) \,\right)
\leq
 { |\, \cC \,|} 
 \,/\,
 {
\left|\, \cB_{G}\left(p, ({\Delta}/{3}) -\alpha\,\Delta \right) \,\right|
}
\\
\leq
\left ( \, ({\Delta}/{3}) d^{\,\alpha \Delta} \right)
 /
\left( n^{ {\eps \log_d \bee} / {8} } \right)
<
n^{\alpha - ( {\eps \log_d \bee} / {8} ) }
\log n
\\
<
n^{1/(7 \, \Delta^{1/2} \log (2d) )  \,-\, ( {\eps } / ( {8 \ln d}) \,  ) }
\log n
<
\eps
\end{multline*}
where the penultimate inequality follows since $\Delta=\omega(1)$.

In all cases, the desired subset of nodes can be found in $O\left(n^2\log n\right)$ time.
\end{proof}

\section{Conclusion and Open Problems}
\label{sec-concl}

In this paper we have provided the first known non-trivial bounds on expansions and cut-sizes for graphs
as a function of the hyperbolicity measure $\delta$, and have shown how these bounds and their related proof
techniques lead to improved algorithms for two related combinatorial problems. We hope that these results
will stimulate further research in characterizing the computational complexities of related combinatorial problems 
over asymptotic ranges of $\delta$. In addition to the usual future research of improving our bounds, the
following interesting research questions remain: 
\begin{enumerate}[label=$\blacktriangleright$]
\item
Can one use Lemma~\ref{thm-sse-hyper} or similar results to get a polynomial-time solution of UGC for some asymptotic
ranges of $\delta$? An obvious recursive application using the approach in~\cite{ABS10} encounters a hurdle since 
hyperbolicity is not a hereditary property (cf.\ Section~\ref{sec-topo}), \IE, removal of nodes or edges may change $\delta$ sharply;
however, it is conceivable that a more clever approach may succeed.

\item
Can our bounds on expansions and cut-sizes be used to get an improved approximation for the multicut
problem~\cite[Problem 18.1]{V01} provided $\delta=o(\log n)$?
\end{enumerate}

\begin{acknowledgements}
B. DasGupta, N. Mobasheri and F. Yahyanejad
thankfully acknowledge supported from NSF grant IIS-1160995 for this research.
M. Karpinski was supported in part by DFG grants.
The problem of investigating expansion properties of $\delta$-hyperbolic graphs was raised originally to some of the 
authors by A. Wigderson.
\end{acknowledgements}


\begin{thebibliography}{}
\bibitem{ADM}
R. Albert, B. DasGupta and N. Mobasheri. {\em Topological implications of negative curvature for biological and social networks},
Physical Review E, 89 (3), 032811, 2014. 
\bibitem{ALJKZ08}
F. Ariaei, M. Lou, E. Jonckheere, B. Krishnamachari, and M. Zuniga. 
{\em Curvature of sensor network: clustering coefficient}, 
EURASIP Journal on Wireless Communications and Networking, 213185, 2008.
\bibitem{ABS10}
S. Arora, B. Barak and D. Steurer.
{\em Subexponential Algorithms for Unique Games and Related Problems},
$51^{\mathrm{st}}$ annual IEEE Symposium on Foundations of Computer Science, 563-572, 2010.
\bibitem{n1}
S. Assadi, E. Emamjomeh-Zadeh, A. Norouzi-Fard, S. Yazdanbod and H. Zarrabi-Zadeh.
{\em The Minimum Vulnerability Problem}, Algorithmica, 70, 718-731, 2014.
\bibitem{bansal}
N. Bansal, U. Feige, R. Krauthgamer, K. Makarychev, V. Nagarajan, J. Naor and R. Schwartz.
{\em Min-max Graph Partitioning and Small Set Expansion}, 
$52^{\mathrm{nd}}$ Annual IEEE Symposium on Foundations of Computer Science, 17-26, 2011.
\bibitem{a1}
I. Benjamini. {\em Expanders are not hyperbolic}, Israel Journal of Mathematics, 108, 33-36, 1998.
\bibitem{xxx}
I. Benjamini, C. Hoppen, E. Ofek, P. Pralat and N. Wormald. 
\emph{Geodesics and almost geodesic cycles in random regular graphs}, 
Journal of Graph Theory, 66, 115–136, 2011.
\bibitem{yyy}
I. Benjamini and O. Schramm. 
\emph{Finite transitive graph embedding into a hyperbolic metric space must stretch or squeeze}, 
Geometric Aspects of Functional Analysis, 123-126, Springer, 2012. 
\bibitem{B88}
H. L. Bodlaender. {\em Dynamic programming on graphs with bounded treewidth}, 
in Lecture Notes in Computer Science 317, T. Lepist\"{o} and A. Salomaa (Eds.), 105-118, Springer, 1988.
\bibitem{book}
M. R. Bridson and A. Haefliger. {\em Metric Spaces of Non-Positive Curvature}, Springer, 1999. 
\bibitem{CDEHV08}
V. Chepoi, F. F. Dragan, B. Estellon, M. Habib and Y. Vax\`{e}s. 
\emph{Diameters, centers, and approximating trees of $\delta$-hyperbolic geodesic spaces and graphs},
proceedings of the $24\tx$ Annual Symposium on Computational geometry, 59-68, 2008.
\bibitem{CDEHVX12}
V. Chepoi, F. F. Dragan, B. Estellon, M. Habib, Y. Vax\`{e}s and Y. Xiang.
\emph{Additive spanners and distance and routing labeling schemes for $\delta$-hyperbolic graphs},
Algorithmica, 62(3-4), 713-732, 2012.
\bibitem{CE07}
V. Chepoi and B. Estellon. \emph{Packing and covering $\delta$-hyperbolic spaces by balls},
in Lecture Notes in Computer Science 4627, M. Charikar, K. Jansen, O. Reingold and J. D. P. Rolim (Eds.), 59-73,
Springer, 2007.
\bibitem{chung97}
F. R. K. Chung. {\em Spectral Graph Theory}, CBMS Regional Conference Series in Mathematics, 92, 1997.
\bibitem{Eppstein-Goodrich}
D. Eppstein and M. T. Goodrich.
{\em Succinct Greedy Geometric Routing Using Hyperbolic Geometry}, 
IEEE Transactions on Computers, 60(11), 1571-1580, 2011.
\bibitem{EKS05}
G. Even, G. Kortsarz and W. Slany. {\em On network design problems: fixed cost flows and the covering
Steiner problem}, ACM Transaction on Algorithms, 1(1), 74-101, 2005.
\bibitem{ipl15}
H. Fournier, A. Ismail, and A. Vigneron. {\em Computing the Gromov hyperbolicity of a discrete metric space},
Information Processing Letters, 115, 6-8, 576–579, 2015.
\bibitem{exp-ref2}
R. Gandhi and G. Kortsarz. {\em On edge expansion problems and the small set expansion conjecture},
Discrete Applied Mathematics, 194, 93-101, 2015. 
\bibitem{GL05}
C. Gavoille and O. Ly. {\em Distance labeling in hyperbolic graphs},
in Lecture Notes in Computer Science 3827, X. Deng and D.-Z. Du (Eds.), 1071-1079, Springer, 2005.
\bibitem{G87}
M. Gromov. Hyperbolic groups, {\em Essays in group theory}, 8, 75-263, 1987.
\bibitem{JLB07}
E. Jonckheere, P. Lohsoonthorn, and F. Bonahon. 
{\em Scaled Gromov hyperbolic graphs}, Journal of Graph Theory, 57(2), 157-180, 2007.
\bibitem{JLA11}
E. Jonckheere, P. Lohsoonthorn and F. Ariaei,
{\em Scaled Gromov Four-Point Condition for Network Graph Curvature Computation},
Internet Mathematics, 7(3), 137-177, 2011.
\bibitem{JLBB11}
E. Jonckheere, M. Lou, F. Bonahon and Y. Baryshnikov. 
{\em Euclidean versus hyperbolic congestion in idealized versus experimental networks}, 
Internet Mathematics, 7(1), 1-27, 2011.
\bibitem{K2007}
R. Kleinberg. {\em Geographic Routing Using Hyperbolic Space}, 
$26\tx$ IEEE International Conference on Computer Communications,
1902-1909, 2007.
\bibitem{LRV13}
A. Louis, P. Raghavendra and S. Vempala.
{\em The Complexity of Approximating Vertex Expansion},
$54\tx$ IEEE Annual Symposium on Foundations of Computer Science, 
360-369, 2013.
\bibitem{a2}
A. Malyshev.  {\em Expanders are order diameter non-hyperbolic}, arXiv:1501.07904, 2015. 
\bibitem{MSV11}
F. de Montgolfier, M. Soto and L. Viennot. {\em Treewidth and Hyperbolicity of the Internet},
proceedings of the $10\tx$ IEEE International Symposium on Networking Computing and Applications, 25-32, 2011.
\bibitem{NST15}
O. Narayan, I. Saniee and G. H. Tucci.
{\em Lack of Hyperbolicity in Asymptotic Erdös–Renyi Sparse Random Graphs},
Internet Mathematics, 11(3), 277-288, 2015.
\bibitem{n2}
M. T. Omran, J.-R. Sack and H. Zarrabi-Zadeh. {\em Finding paths with minimum shared edges}, Journal of Combinatorial
Optimization, 26(4), 709-722, 2013.
\bibitem{PKBV10}
F. Papadopoulos, D. Krioukov, M. Boguna and A. Vahdat.
{\em Greedy Forwarding in Dynamic Scale-Free Networks Embedded in Hyperbolic Metric Spaces}, 
IEEE INFOCOM, 1-9, 2010.
\bibitem{exp-ref1}
P. Raghavendra and D. Steurer. {\em Graph Expansion and the Unique Games Conjecture},
$45\tx$ ACM Symposium on Theory of Computing, 755-764, 2010.
\bibitem{exp-ref4}
P. Raghavendra, D. Steurer and P. Tetali. {\em Approximations for the isoperimetric and spectral profile of graphs and 
related parameters}, 
$45\tx$ ACM Symposium on Theory of Computing, 631-640, 2010.
\bibitem{exp-ref3}
P. Raghavendra, D. Steurer and M. Tulsiani. {\em Reductions Between Expansion Problems},
IEEE Conference on Computational Complexity, 64-73, 2012.
\bibitem{RS83}
N. Robertson and P. D. Seymour. {\em Graph minors. I. excluding a forest}.
Journal of Combinatorial Theory, Series B, 35(1), 39-61, 1983.
\bibitem{V01}
V. Vazirani. {\em Approximation Algorithms}, Springer-Verlag, 2001. 
\bibitem{wang1}
J. Wang, M. Yang, B. Yang and S. Q. Zheng. {\em Dual-homing based scalable partial multicast protection},
IEEE Transaction on Computers, 55 (9), 1130-1141, 2006.
\bibitem{wang2}
B. Yang, M. Yang, J. Wang and S. Q. Zheng. {\em Minimum cost paths subject to minimum vulnerability for
reliable communications}, $8\tx$ International Symposium on Parallel Architectures,
Algorithms and Networks, 334-339, 2005.
\bibitem{n3}
S. Q. Zheng, J. Wang, B. Yang and M. Yang. {\em Minimum-cost multiple paths subject to minimum link
and node sharing in a network}, IEEE/ACM Transaction on Networking, 18 (5), 1436-1449, 2010.
\end{thebibliography}
\end{document}